\documentclass{article}
\usepackage[left=1in, right=1in, top=1in, bottom=1in]{geometry}
\usepackage[utf8]{inputenc}
\usepackage{authblk}
\usepackage{subfig}

\usepackage{amsthm}
\usepackage{amssymb}
\usepackage{amsmath}
\usepackage{mathtools}
\usepackage{stmaryrd}
\usepackage{thm-restate}

\usepackage{csquotes}
\usepackage{enumitem}   
\usepackage{tikz}
\usepackage{graphicx}
\usepackage{xspace}
\usetikzlibrary{arrows.meta}
\usepackage[colorlinks=true,allcolors=blue,allbordercolors=white]{hyperref} % hyperlink setup

\usepackage{caption}  % caption formatting package
\usepackage{dirtytalk}
\usepackage{thm-restate}

\setlist[enumerate]{itemsep=\smallskipamount,parsep=0pt,label={\rm \roman*)}}
\setlist[itemize]{itemsep=\smallskipamount,parsep=0pt}
\newcommand{\defn}[1]{{\textit{\textbf{\boldmath #1}}}\xspace}
\renewcommand{\paragraph}[1]{\vspace{0.09in}\noindent{\bf \boldmath #1.}}

\usepackage{bbm}

\newcommand{\interior}[1]{ {\kern0pt#1}^{\mathrm{o}} }

\usepackage[sanserif,full]{complexity}

\newcommand{\Z}{\mathbb{Z}}

\usepackage{algorithm}
\usepackage[noend]{algpseudocode} % adding [noend] deletes the end while and stuff

\newcommand{\paren}[1]{\left( #1 \right)}

\usepackage[capitalise,nameinlink,noabbrev]{cleveref}
\crefname{equation}{}{} % cref{eq:blah} only does (1) instead of Equation (1)
\crefname{enumi}{Step}{} % cref{eq:blah} only does (1) instead of Item(1)

\theoremstyle{definition}

\newtheorem{definition}{Definition}

\newtheorem{lemma}{Lemma}

\newtheorem{corollary}{Corollary}
\newtheorem{theorem}{Theorem}

\usepackage{mdframed}
\usepackage{framed}

\newcommand{\eps}{\varepsilon}
\newcommand{\softo}[1]{\widetilde{O}\left( #1 \right)}

\usepackage{cryptocode}

\def\draft{1}
\newcommand{\nnote}[1]{\ifnum\draft=1\textcolor{purple}{[\textbf{Nathan:} #1]}\fi}
\newcommand{\znote}[1]{\ifnum\draft=1\textcolor{blue}{[\textbf{Zoe:} #1]}\fi}
\usepackage[
    backend=biber,
    style=alphabetic,
    maxnames=10,
]{biblatex}
\usepackage[utf8]{inputenc}
\usepackage[T1]{fontenc}
\usepackage{amsmath} % For \sqrt and \frac
\usepackage{tikz}
\usepackage{pgfplots}
\pgfplotsset{compat=1.18} % Use a modern compatibility version

\title{The Limits of Black-Box Reductions for\\ All-Pairs Triangle Detection}

\author{Nathan Sheffield, Virginia Vassilevska Williams\thanks{This work is supported in part by NSF CAREER Award 1651838, NSF Grants CCF-1909429 and CCF- 2129139, BSF grants 2016365 and 2020356, a Google Research Fellowship and a Sloan Research Fellowship.}, and Zoe Xi\\
  \normalsize CSAIL, MIT, Cambridge, MA, USA\\
  \normalsize\texttt{\{shefna, virgi, zoexi\}@mit.edu}}
\date{}

\begin{document}

\maketitle

\begin{abstract}

    Consider any tripartite relation $R\subseteq \Z^3$. The $R$-Triangle problem asks, given a graph with edge weights in $\Z$, whether it contains a triangle whose three edge weights form a triple in $R$. The All-Edge $R$-Triangle problem asks to determine for every edge whether it is contained in a triangle whose edge weights form a triple in $R$. Important examples of these problems studied in graph algorithms include the negative triangle problem (where $(x,y,z)\in R$ if and only if $x+y+z<0$), the zero triangle problem (where $(x,y,z)\in R$ if and only if $x+y+z=0$), the monochromatic triangle problem (where $(x,y,z)\in R$ if and only if $x=y=z$) and the (unweighted) triangle detection problem (where $(x,y,z)\in R$ if and only if $x,y,z=1$). Their All-Edge variants are equivalent to, respectively, APSP, AE-Zero Triangles, AE-Mono Triangles and Boolean Matrix Multiplication (BMM).\\
    
    An important result from fine-grained complexity [Vassilevska W.-Williams'10] is that for any relation $R\subseteq \Z^3$, $R$-Triangle and All-Edge $R$-Triangle are subcubically fine-grained equivalent. It implies equivalences between many problems in graph algorithms. However, one direction of the reduction is not tight. It shows that if $R$-Triangle has an $O(n^{3-\eps})$-time algorithm for some $\eps>0$, then All-Edge $R$-Triangle has an $O(n^{3-\eps/3})$-time algorithm. A major open problem is whether this reduction can be made tighter. 
    This paper provides a strong unconditional negative answer to the above question:
    the reduction of [Vassilevska W.-Williams'10] is optimal for black-box reductions that work for arbitrary $R$. \\

    We provide both positive and negative results about black-box reductions between a variety of $R$-triangle problems. Our positive results yield new reductions between several classes of triangle and matrix problems --- for instance, we demonstrate that an $O(n^{2.53})$-time algorithm for computing the equality or dominance product of two $n\times n$ matrices would imply an improvement on known algorithms for computing boolean $(\min, +)$-product, giving the first conditional lower bound for dominance and equality product. Our negative results can be thought of as barriers against natural fine-grained proof techniques. Besides the result that a tighter equivalence between $R$-Triangle and All-Edge $R$-Triangle is not possible, we also
   show that no appropriately ``black-box'' reductions are capable of demonstrating a subcubic equivalence between triangle counting and binary integer matrix multiplication, or a tight equivalence between boolean matrix multiplication and listing $n^2$ triangles, and more, despite the fact that all of these equivalences are conjectured to hold.

\end{abstract}  

\section{Introduction}
One of the most remarkable equivalences in fine-grained complexity is the subcubic equivalence of All-Pairs Shortest Paths (APSP) and the Negative Triangle (NT) problem. APSP asks for the computation of all $n^2$ shortest paths distances between every pair of nodes in a given $n$-node weighted graph. The NT problem asks whether a given $n$-node weighted graph contains three vertices that form a triangle whose edge weights sum up to a negative number. \\

APSP is a function problem, while NT is a decision problem easily reducible to APSP. Nevertheless, Vassilevska W. and Williams \cite{williams2010subcubic} showed that an $O(n^{3-\eps})$-time algorithm for NT for any $\eps>0$ (a so called ``truly subcubic'' time algorithm) can be converted into an $\tilde{O}(n^{3-\eps/3})$-time algorithm for APSP. Hence, finding a truly subcubic time algorithm for APSP is equivalent to obtaining a truly subcubic time algorithm for the seemingly easier NT problem. \\

A benefit of the reduction is that it is not particular to the APSP problem. The same technique yields a subcubic fine-grained equivalence between any triangle problem and its ``all-edge'' variant, i.e. \cite{williams2010subcubic} shows that the following two problems are subcubically equivalent: Let $R \subseteq \Z^3$ be any relation. The input to both problems is a tripartite graph $G=(V,E)$ where $I,J,K$ are the three parts of the tripartition of $V$, $E\subset (I\times J)\cup (J\times K)\cup (I\times K)$ and the edge weights are $w:E\rightarrow \Z$.
\begin{enumerate}
\item {\bf $R$-Triangle}: Determine whether $G$ contains $i\in I,j\in J,k\in K$ with $(i,j),(j,k),(k,i)\in E$ and $(w(i,j),w(j,k),w(i,k))\in R$.

\item {\bf All-Edge $R$-Triangle}: Determine for every edge $(x,y)\in E$ whether there is a $z\in V$ with $(x,z),(y,z)\in E$ and $(w(x,y),w(y,z),w(x,z))\in R$.
\end{enumerate}

Any algorithm for All-Edge $R$-Triangle solves $R$-triangle in the same time. The main result of \cite{williams2010subcubic} is that if $R$-triangle can be solved in $O(n^{3-\eps})$ time for some $\eps>0$, then All-Edge $R$-triangle can be solved in $O(n^{3-\eps/3})$ time.\\

While this reduction gives a subcubic equivalence, it incurs some loss, and diminishing or eliminating this loss would resolve important open problems. 
In particular, for the unweighted case, the all-edge variant is exactly Boolean Matrix Multiplication (BMM), which is strongly believed to have the same complexity as (unweighted) triangle detection. 
However, the reduction of \cite{williams2010subcubic} is unable to show this: even if one were to obtain an optimal, $\tilde{O}(n^2)$-time algorithm for triangle detection, the reduction would only imply an $\tilde{O}(n^{2+2/3})$-time algorithm for BMM, and this is much slower than the known $\tilde{O}(n^\omega)\leq O(n^{2.372})$-time algorithms \cite{alman2404more}. 
It does, though, seem quite plausible, not only that (unweighted) triangle and All-Edge (unweighted) triangle are exactly equivalent, but that in fact  $R$-triangle and All-Edge $R$-triangle are exactly equivalent for \emph{every} $R$. Indeed, all known algorithms for $R$-triangle give algorithms for All-Edge $R$-triangle ``for free''. Hence, we have the following question:

\begin{center}
{\em Question 1: Is a tighter equivalence possible between $R$-Triangle Detection and All-Edge $R$-Triangle Detection, for every fixed $R$?}
\end{center}

One viewpoint one might have with regards to Question 1 is as follows: over the past 15 years, nobody has found {\em any relation $R$} such that there is evidence that all-edge detection is harder than detection, but also nobody has managed to come up with an exact reduction between these problems (even for particular $R$ of interest, such as the unweighted case). So perhaps it is the case that these problems \emph{do} always require the same runtime, but this fact just \emph{isn't} witnessed by any fine-grained reduction.
As reductions are essentially the only way we know to establish equivalence in complexity between problems, this would be a strong barrier to our ability to prove such a fact, even if it is true.\\

Of course, in order for this viewpoint to be coherent, one must place some stipulations on what it means for this fact to be ``provable via fine-grained reduction''. If these two problems have the same complexity, then there is always the trivial reduction between them where one solves all-edge detection directly. To make any sort of meaningful statement, we need to consider a restricted form of reduction --- ideally one that captures the standard techniques used in known fine-grained reductions, while not admitting the trivial reductions of the form ``solve the problem from scratch in the optimal runtime''. 
The definition we propose is as follows:

\begin{definition}\label{def:blackbox}
    For two runtimes $T_A$ and $T_B$ in terms of a vertex count $n$ and edge count $m$, an $R$-triangle problem $A$ (e.g. detection or all-edge detection) $(T_B, T_A)$-\defn{black-box reduces} to another
    $R$-triangle problem $B$ if there exists an algorithm $M$ running
    in $\softo{T_A(n, m)}$ time and making queries to graphs $G_1, \dots, G_r$,
    for $\sum_i T_B(|V(G_i)|, |E(G_i)|) \leq T_A(n, m)$, to an oracle for $B$, such that
    for any relation $R$, $M$ successfully computes $A$ when instantiated
    with a valid $B$ oracle. 
\end{definition}

This is the same as the standard definition of a fine-grained reduction, with the essential restriction that $M$ is \emph{not allowed to depend on $R$}. That is, we would like a single reduction that works when instantiated for \emph{any} relation: even if problems $A$ and $B$ are not efficiently solvable for that relation, if instantiated with a black box solving $B$, the reduction will successfully solve $A$. This is the type of reduction given by \cite{williams2010subcubic}.\\
One of our main results is that, when restricted to black-box reductions, Question 1 has an unconditional negative answer!
That is, the reduction in \cite{williams2010subcubic} is {\bf optimal} for black-box reductions.
Indeed, we prove similar barrier results for a variety of fine-grained complexity problems related to all-edge detection.
In each case, we consider a pair of problems conjectured to require the same runtime for every relation $R$, and show that no single reduction can simultaneously prove this equivalence for all $R$. 
In some cases, while ruling out a tight reduction, we do nonetheless give new quantitative improvements over known reductions.
A full statement of these results follows in \cref{sec:finegrained}.\\

Most of our technical content can be viewed as bounding the cost of determining properties of $3$-uniform hypergraphs in various restricted query models.
In \cref{sec:hypergraphs} we discuss this perspective, and state generalizations of our information-theoretic results which hold for every uniformity of hypergraph.
Then, in \cref{sec:overview}, we give a brief technical overview of the proofs, and outline the structure of the remainder of the paper.

\section{Main results for fine-grained complexity}\label{sec:finegrained}

\subsection{Reducing all-edge detection to detection}

Recall our Question 1 from the introduction: can we get a better quantitative reduction than that of \cite{williams2010subcubic} from all-edge $R$-triangle detection to $R$-triangle detection? Our results rule out any improvement for black-box reductions:

\begin{theorem}\label{cor:cantbeatvirgi}
    All-edge triangle detection does not $(n^{3-\eps}, n^{3-\delta})$-black box reduce to triangle detection (or even triangle counting) for any $\eps>0$, $\delta>\eps/3$.
\end{theorem}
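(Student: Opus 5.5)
Since the reduction $M$ must work for every relation $R$, we can take $R$ to be arbitrary and adversarially chosen. This turns the statement into an information-theoretic query lower bound on a $3$-uniform hypergraph.

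\textbf{Hypergraph abstraction.} Fix the vertex sets $I,J,K$ of size $n$ and let the input graph be complete tripartite. Give every edge a distinct weight, so that the triples $(w(i,j),w(j,k),w(i,k))$ are distinct for distinct triangles. Choosing $R$ is then the same as choosing an arbitrary $3$-uniform tripartite hypergraph $H$ on $I\cup J\cup K$: a triangle $ijk$ is an $R$-triangle iff $\{i,j,k\}\in H$.

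\textbf{Queries as subgraph counts.} A query graph $G_\ell$ built by $M$ has edges carrying weights. Because $M$ does not know $R$, each query edge can only be identified with some original edge, via its weight, or with a weight that appears in no triangle of $H$, in which case it is useless. We will argue that, without loss of generality, each query reveals whether (or, for counting, how many) hyperedges of $H$ lie among the triangles of some subgraph $G_\ell'$ of $K_{n,n,n}$, up to relabeling. The all-edge output requires, for every pair $(i,j)$, knowing whether some $k$ has $ijk\in H$.

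\textbf{Counting argument.} Let $m_\ell=|E(G_\ell)|$ and $t_\ell$ the number of triangles of $G_\ell$. The Kruskal–Katona / Rivin bound gives $t_\ell\le O(m_\ell^{3/2})$, and also $m_\ell\le O(n_\ell^2)$. The budget is $\sum_\ell n_\ell^{3-\eps}\le n^{3-\delta}$; with edges we can even use the stronger $(m_\ell^{1/2})^{3-\eps}$ cost.

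Now take a hard distribution: for each pair $(i,j)$, independently, either no $k$ or exactly one uniformly random $k$ has $ijk\in H$. This is essentially the hardness of OR-with-a-witness.

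\begin{itemize}
\item Sizes of queries. A query with $t_\ell$ triangles costs at least $t_\ell^{(3-\eps)/3}$. The total triangle coverage $\sum_\ell t_\ell$ is maximized by making individual queries large, but a large query mixes many pairs. A query whose answer is an aggregate count over $t$ triangles yields about $O(\log t)$ bits.
\item What is needed. To determine all $n^2$ pair-bits, each pair $(i,j)$ must effectively be isolated, via something like binary search, across its $n$ candidate $k$'s.
\end{itemize}

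I would make this precise with a potential or adversary argument. The adversary answers consistently, keeping many pairs undetermined. A pair $(i,j)$ is only resolved once queries whose triangles involve $(i,j)$ collectively cover its $n$ candidates, and do so with little aliasing with other pairs. Each query of size $n_\ell$ involves at most $n_\ell^2$ edges $(i,j)$, so it can resolve at most about $n_\ell^2$ pairs' worth of information only when it covers those pairs' candidate sets.

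\textbf{Balancing.} For a pair to be settled, the queries containing it must hit its $n$ candidate $k$'s. Combined with the adversary's aggregation, this forces roughly $\sum_\ell n_\ell^2\cdot n_\ell \gtrsim n^3$. If queries could also be made small without information loss, we additionally need $\sum_\ell (\text{number of queries}) \gtrsim n^2/\mathrm{polylog}$.

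Balancing these mirrors the \cite{williams2010subcubic} tradeoff: queries of size $s$ cost $(n/s)^3 s^{3-\eps}$ to cover all triangles, and there must be about $n^2$ of them at a "resolution" level, i.e.\ $(n/s)^3\ge n^2$, forcing $s\le n^{1/3}$. The total cost is then at least $n^{3}s^{-\eps}\ge n^{3-\eps/3}$, matching $\delta\le\eps/3$.

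\textbf{Main obstacle.} The hardest step is the rigorous adversary argument showing that a count query over a large graph cannot simultaneously resolve many pairs. It is roughly the statement that each query answer certifies at most $O(s^2)$ "zero" pairs cheaply, while pairs with a witness force the structure above. I would try a deterministic adversary that answers $0$ whenever consistent, then charge each determined pair to the queries touching it.
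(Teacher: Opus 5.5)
Your translation to a hypergraph query problem and your target exponent (queries of size $n^{1/3}$, total cost $n^{3-\eps/3}$) are right. However, the lower bound itself is never established, and the two things you put in its place would fail.

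First, the constraints in your balancing step do not imply the bound once query sizes can vary. Those constraints are coverage $\sum_\ell n_\ell^3 \gtrsim n^3$ together with $\gtrsim n^2/\mathrm{polylog}(n)$ queries. One query on the whole input plus $n^2$ constant-size queries satisfies both, at cost $O(n^{3-\eps}+n^2)$; your uniform-size calculation hides this.

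Second, the adversary that answers $0$ whenever consistent is beaten by a single whole-graph query. The answer $0$ is consistent with the empty hypergraph, so the all-zero output is then correct at cost $n^{3-\eps}$. Any valid argument must force many pairs to actually have witnesses, which is why the paper argues distributionally, via Yao's principle.

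The missing idea is a per-query information bound that penalizes small and large queries at the same time: a query on $q$ vertices reveals $O(\min\{\log n,\ q^3 n^{-1}\log n\})$ bits. The paper gets this in the following steps.

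\begin{enumerate}
\item Make each triple a hyperedge independently with probability $1/n$.
\item Relax the oracle so that any triple whose conditional probability, given the history, exceeds $2/n$ is revealed for free. Every unrevealed candidate triple in a query then has conditional probability at most $2/n$.
\item By subadditivity, a query with at most $q^3$ candidate triangles carries $O(q^3n^{-1}\log n)$ bits, while a count always carries $O(\log n)$ bits.
\item The free revelations carry at most the total entropy minus $\Omega(n^2)$. So the queries must supply $\Omega(n^2)$ bits, i.e.\ $\sum_\ell\min\{1,n_\ell^3/n\}\gtrsim n^2/\log n$.
\item Bits per unit cost peak at $n_\ell=n^{1/3}$, giving total cost $\widetilde{\Omega}(n^{3-\eps/3})$.
\end{enumerate}

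For this accounting to have room, the paper lower-bounds exact recovery rather than the $n^2$-bit all-edge output. It uses the fact that when codegrees are at most $\mathrm{polylog}(n)$, an all-edge oracle recovers every hyperedge by parallel binary search at polylog overhead. Your codegree-$\le 1$ distribution permits the same step.

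Finally, drop the claim that rearranged queries are without loss of generality subgraph queries; this is false once weights are duplicated. All that is needed is two facts. A $q$-vertex query involves at most $q^3$ original candidate triangles (after fixing $R$ on weight triples that come from no original triangle). And it returns an $O(\log n)$-bit count. Both hold for arbitrary query graphs.
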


Note that, even if one only cared about these problems with respect to the trivial relation --- that is, one was only interested in reducing Boolean matrix multiplication to (unweighted) triangle detection --- \cref{cor:cantbeatvirgi} could be quite informative.
Our existing techniques for fine-grained triangle reductions are almost all\footnote{There are a number of known reductions between an $R$-triangle problem and an $R'$-triangle problem, for two \emph{distinct} relations $R$ and $R'$. Several of these make important use of the structure of $R$ and $R'$. We are not aware, however, of any nontrivial non-black-box reductions between a pair of \textit{unweighted} triangle problems.} ``black-box'', i.e. agnostic to any choice of underlying relation. 
But \cref{cor:cantbeatvirgi} shows that no techniques of this form will be capable of demonstrating a tight reduction from BMM to triangle detection.
This can be thought of as a barrier explaining why we've been unable to move beyond the bounds of \cite{williams2010subcubic}: while it may be the case that BMM and triangle detection have the same complexity, proving this will require a different approach. To have a chance to improve upon the \cite{williams2010subcubic} reduction for any particular $R$, one must non-trivially use the structure of $R$.

\subsection{Reducing all-edge counting to counting}

Related to the question of reducing BMM to triangle detection, a similarly important open problem in fine-grained complexity is whether fast algorithms for triangle \emph{counting} generically imply fast algorithms for \emph{integer} matrix multiplication.\\

Multiplying $n\times n$ matrices with integer entries bounded by some polynomial in $n$, can be reduced to $O(\log^2 n)$ calls to multiplying matrices with entries in $\{0,1\}$, i.e. binary matrices\footnote{This is easy to see: consider the matrix entries bit by bit and create a matrix product for every pair of bit positions.}. This latter problem is equivalent to the All-Edge Triangle counting problem: for every edge in a graph, determine the number of triangles it is in. Thus the question of whether integer matrix multiplication can be reduced to triangle counting becomes: {\em Can All-Edge Triangle Counting be reduced in a fine-grained way to Triangle Counting?}\\

It is well-known that in the arithmetic circuit model, these two problems are essentially equivalent \cite{Pan78,Pan80}: Any arithmetic circuit for computing the trace of the cube of a matrix over any ring (e.g. the integers) can be converted into a circuit of the same size (up to a constant factor) for multiplying matrices over the ring. This relationship, however, is specific to algebraic algorithms and it is highly non-black-box. (Also, the best known way of converting a word-RAM algorithm into an arithmetic circuit has quadratic overhead, so a tight equivalence for arithmetic circuits doesn't imply such an equivalence for word-RAM algorithms~\cite{StockmeyerVishkin1984}.)  A tight black-box reduction from matrix multiplication (or equivalently, from All-Edge Triangle Counting) to Triangle Counting remains elusive.\\

We know that both integer matrix multiplication and triangle counting are solvable directly in $O(n^\omega) \ll O(n^3)$ time, so there is a trivial subcubic reduction between the problems: just solve them. Thus, we really mean the following:

\begin{center}
{\em Question 2: Is there a (combinatorial) fine-grained subcubic equivalence between triangle counting and all-edge triangle counting, analogous to the one for triangle detection?}
\end{center}

The crucial part of this question is whether there exists a ``simple'' (i.e. combinatorial) reduction, which would be interesting because we do not know ``simple'' truly subcubic time algorithms for either problem. There has been much interest over the years in the answer to Question 2.\\

A priori, a simple reduction seems plausible --- one might for instance try to determine the counts of triangles in some specially chosen subgraphs of the graph defined by the matrix and do some linear algebra to combine them into counts for every pair of vertices.
Our results demonstrate that no reduction of that form is possible, at least without exploiting some graph-theoretic property that wouldn't hold for other relations:

\begin{theorem}\label{cor:no-imm}
    All-edge $R$-triangle counting does not $(n^2, n^{3-\eps})$-black box reduce to $R$-triangle counting for any $\eps > 0$.
\end{theorem}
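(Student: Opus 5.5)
We prove the statement by an adversary (information-theoretic) argument over the choice of relation $R$. A black-box reduction $M$ must work for every $R$, so we may view the input as a $3$-uniform tripartite hypergraph $H \subseteq I\times J\times K$: the set of triples $(i,j,k)$ such that all three edges are present and their weights form a triple in $R$. Choose the weights on the input graph to be distinct labels (say $w(i,j)$, $w(j,k)$, $w(i,k)$ all distinct integers encoding the pair). Then every $H$ is realizable by some $R$. Likewise, any query graph $G_\ell$ that $M$ constructs has edges carrying weights that $M$ chose. Since $M$ does not know $R$, each edge of $G_\ell$ either carries an input weight or a weight that $M$ invented. By taking $R$ to contain only triples of input labels that are consistent with $H$, invented weights never participate in $R$-triangles. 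Each triangle of $G_\ell$ whose three weights come from input edges $(i,j),(j',k),(i'',k')$ is an $R$-triangle iff the corresponding label triple lies in $R$. With our encoding, that happens only when the labels form a genuine triple $(i,j,k)\in H$.

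So a counting query on $G_\ell$ returns a nonnegative integer combination $\sum_{t} c_{\ell,t}\,\mathbbm{1}[t\in H]$, where $c_{\ell,t}$ is the number of triangles of $G_\ell$ whose labels form $t$. We also need to check that $M$ gains nothing from weights it computes adaptively. Its queries may depend on earlier answers, but each answer is still such a linear form in the indicator vector $x\in\{0,1\}^{n^3}$ of $H$.

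The key cost bound comes from the query budget. Each query has $T_B(|V(G_\ell)|,|E(G_\ell)|)=|V(G_\ell)|^2$ cost, and the total is at most $n^{3-\eps}$. A graph on $v_\ell$ vertices has at most $v_\ell^3$ triangles. What we actually need is a bound on the \emph{support}: the number of distinct triples $t$ with $c_{\ell,t}\ne 0$ is at most the number of triangles in $G_\ell$, which can be about $v_\ell^3$. That is too many, so a pure support count is not enough.

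Instead we use a counting/dimension argument on outputs. The all-edge count output determines, for each $(i,j)$, the number $\sum_k x_{ijk}$. The function $M$ computes is determined by the transcript of answers. Each answer is an integer at most $v_\ell^3$, so it carries $O(\log n)$ bits, and the number of queries is at most $n^{3-\eps}$ (each query costs at least $1$). Hence the transcript has $\tilde O(n^{3-\eps})$ bits. The all-edge counts, however, take about $n^2$ values each of $\log n$ bits, which is only $\tilde O(n^2)$ bits. So plain information counting fails, and the argument must exploit linear structure.

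The main step, which I expect to be the obstacle, is this. Over a hard distribution we fix $H$ everywhere except on a random structured block. A natural choice is a set of $n^{1+\eps'}$ "pairs" $(i,j)$, each with a single uncertain $k$. We then show that the low-cost answers cannot separate the relevant count coordinates. Concretely, the plan is as follows:
\begin{enumerate}
\item Show that a query with $v_\ell$ vertices touches at most $v_\ell^2$ input edges. Its answer then depends only on triples whose three edges all lie among those $v_\ell^2$ edges, and by Kruskal--Katona-type bounds there are at most $v_\ell^3$ such triples.
\item Charge the queries: summing gives $\sum_\ell v_\ell^2\le n^{3-\eps}$, while the triples examined must cover enough of $I\times J\times K$.
\item Deduce the contradiction. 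For all-edge counting one must, for each edge $(i,j)$, distinguish configurations that move a triangle from $(i,j,k)$ to $(i,j',k)$ while preserving totals. The linear forms must separate these $\Omega(n^3)$ swap directions, which forces $\sum_\ell v_\ell^2$ to be at least $\Omega(n^3)$ up to logarithmic factors.
\end{enumerate}
Making step iii rigorous is the crux. I would first try a rank argument: the span of the query vectors $c_\ell$ must contain the difference vectors $e_{ijk}-e_{ij'k}-e_{i'jk'}+\dots$ restricted to the hard block. I would then bound that rank by $\sum_\ell v_\ell^2$, using the fact that the triangle-count vector of a graph decomposes through its edge set.
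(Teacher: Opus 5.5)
There is a genuine gap: the lower bound is never actually proved. Your step (iii) is explicitly left open, and the route you propose for it fails. The unknown is a Boolean vector $x\in\{0,1\}^{n^3}$ and the queries are adaptive, so correctness does not force the ``swap directions'' (or even the output functionals) into the span of the query vectors $c_\ell$. Bounded integer measurements can be decoded nonlinearly: for example $\langle (1,2),x\rangle$ determines $x\in\{0,1\}^2$ although $e_1\notin\mathrm{span}\{(1,2)\}$. This is the coin-weighing phenomenon, and a reduction may reuse weights, so such coefficients really occur. Moreover, even for linear decoders only the $3n^2$ all-edge functionals would have to lie in the span, so a dimension count against $\sum_\ell v_\ell^2$ can never reach $n^3$. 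Your proposed hard block is also too weak. With only $n^{1+\eps'}$ uncertain pairs, testing every candidate triple by a constant-size query costs about $n^{2+\eps'}$, which is within budget unless $\eps'\geq 1-\eps$.

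The missing idea is to run the information count \emph{per bucket} rather than globally. This is the paper's argument: \cref{lem:cant-allscount} with $r=3$, $s=2$, $c=2$, plus the remark that an arbitrary query graph on $q$ vertices touches at most $q^2$ buckets. Let $H$ be uniformly random, with each triple present independently with probability $1/2$. Split it into buckets $B_k=H\cap(I\times J\times\{k\})$ for $k\in K$; these are independent, with $n^2$ bits each. Every candidate triangle in a query graph has an edge carrying the weight $w(j,k)$, so by your own step (i) a query on $v_\ell$ vertices touches at most $v_\ell^2$ buckets. Its answer is a sum of per-bucket contributions of $O(\log n)$ bits each. Strengthen the oracle to return each contribution separately. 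The posterior then stays a product over buckets, and only queries touching $B_k$ reveal anything about it. The total number of bucket-touches is at most $\sum_\ell v_\ell^2\leq n^{3-\eps}$, so some $k$ is touched at most $n^{2-\eps}$ times in expectation. Hence $B_k$ keeps $\Omega(n^2)$ bits of entropy given the transcript. But once all other buckets are revealed, the $IJ$ all-edge counts determine $B_k$ exactly, and Fano's inequality makes the reduction fail with high probability. Your remark that plain information counting fails is true only globally. The output carries $\widetilde{O}(n^2)$ bits in total, but conditioned on the other buckets it carries $n^2$ bits about a single bucket that the queries probe only $\widetilde{O}(n^{2-\eps})$ times.
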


As with the previous example, our philosophy is to view this as a barrier to fine-grained reductions, in the same vein as e.g. relativization barriers in classical complexity theory. While a resolution of Question 2 may be possible, our ``black-box'' barrier identifies large classes of techniques which may have initially seemed promising but we now know are doomed to fail. We hope this can help hone future efforts to more productive paths, or at least offer some justification as to why past efforts have been fruitless.

\subsection{Reducing listing to all-edge detection}

Another known example of a complexity relationship for triangle problems comes from a pair of reductions by Duraj, Kleiner, Polak, and Vassilevska Williams~\cite{duraj2020equivalences} between the all-edge triangle detection problem and the problem of listing $m$ triangles (or all triangles if there are fewer than $m$) in an $m$-edge graph.
Their reduction shows that these problems are equivalent up to polylog factors when one considers runtimes in terms of the number of edges. An examination of their reductions reveals that they hold in our ``black-box'' sense for any relation $R$:

\begin{theorem}[Duraj, Kleiner, Polak, and Vassilevska Williams~\cite{duraj2020equivalences}]\label{thm:listing-to-alledge}
    For any $1 \leq c \leq 1.5$, all-edge triangle detection $(m^c, m^c)$-black-box reduces to listing $m$ triangles, and listing $m$ triangles $(m^c, m^c)$-black-box reduces to all-edge triangle detection.
\end{theorem}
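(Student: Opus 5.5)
We prove the two directions separately, in each case checking that the reduction of \cite{duraj2020equivalences} touches the weights only by copying them into subgraphs, so it never depends on $R$. Both directions rest on one observation. Passing to a subgraph (deleting edges or vertices) keeps every surviving edge's weight, and a triangle in the subgraph is an $R$-triangle exactly when it was one in $G$. So any reduction that only queries subgraphs of $G$ and combines the answers through set operations on edges and triangles is automatically black-box in the sense of \cref{def:blackbox}. For the runtime, we use $T_B(m')=m'^c$ with $c\ge 1$, which is superadditive: if the queried subgraphs have edge counts $m_i$ with $\sum_i m_i \le \tilde O(m)$, then $\sum_i m_i^c \le (\sum_i m_i)^c = \tilde O(m^c)$. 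So it suffices to bound the total query size by $\tilde O(m)$, together with $\tilde O(m^c)$ local work, which holds since $c\ge 1$.

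\textbf{Listing reduces to all-edge detection.} Call the all-edge oracle on $G$ to obtain the set $E^*$ of edges lying in some $R$-triangle. Repeatedly take an edge $e\in E^*$ and find one $R$-triangle through it. To do this, split the candidate third vertices into halves and query the oracle on the subgraph consisting of $e$ together with the edges from its endpoints to one half, then recurse into a half that still reports $e$. This costs $O(\log n)$ queries, each on a subgraph of size at most the degree of $e$'s endpoints. Output the triangle, delete $e$ (or mark it found), and continue until $m$ triangles have been listed. To keep the total query size at $\tilde O(m)$ rather than $m\cdot\deg$, we batch as in \cite{duraj2020equivalences}. Randomly partition the vertices into groups and the triangles into levels by degree, so that each oracle call on a subgraph of size $s$ yields about $s/\mathrm{polylog}$ new triangles, or certifies that the subgraph is exhausted. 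Each step here only restricts to subgraphs and reads the oracle bits, so it is $R$-oblivious.

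\textbf{All-edge detection reduces to listing.} Use the standard degree-splitting and random-partition approach. Partition $I,J,K$ randomly into groups so that each subproblem has about $m/\mathrm{polylog}$ edges, and call the listing oracle on each. Any edge that appears in a listed triangle is marked as answered. If a subproblem returns fewer triangles than its edge budget, then all of its triangles were listed, and its edges are fully resolved. Otherwise, enough triangles were listed that we can delete the marked edges and recurse. Each round deletes a constant fraction of the edges, or each call is charged to distinct listed triangles, which are in turn charged to distinct marked edges. Hence the total size of all oracle calls is $\tilde O(m)$. Again, only subgraphs of $G$ are passed to the oracle, so the reduction works for every $R$.

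\textbf{Main obstacle.} The hard part is the charging argument in the first direction. A single edge may lie in many triangles, so after one triangle through it is found, later queries must not repeatedly return triangles through that same heavy edge. I would handle this by deleting each edge once one triangle through it is found, and listing all triangles through it only when needed via subgraph restriction. The bound on $m$ outputs then caps the work at $\tilde O(m)$ total query size, which is the step I expect to require the most care.
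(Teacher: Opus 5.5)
Your top-level argument is the one the paper relies on: it simply cites \cite{duraj2020equivalences} and asserts that those reductions only query subgraphs. Your superadditivity bookkeeping $\sum_i m_i^c\le(\sum_i m_i)^c$ is also correct. The gap is that the reductions you actually write down fail, and the steps you defer to ``as in \cite{duraj2020equivalences}'' are exactly the nontrivial ones.

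In the all-edge-detection-to-listing direction, the charging ``listed triangles $\to$ distinct marked edges'' is false, because the lister may spend its whole budget on triangles through few edges. Take a complete tripartite block with parts of size $s\approx\sqrt m/2$. Every block $IJ$-edge lies in $s$ triangles, so a full answer on a query with $m_i$ edges may resolve only about $m_i/s$ of them. A random partition into polylogarithmically many groups lowers this multiplicity only by a polylog factor, so you need about $\sqrt m/\mathrm{polylog}$ rounds, i.e.\ cost $m^{c+1/2}$ up to polylog factors. Also, ``delete the marked edges'' is unsafe as stated: a marked edge $(i,k)$ may be needed for the only triangle $(i,j',k)$ of an unresolved edge $(i,j')$. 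Two ingredients are missing:
\begin{enumerate}
\item Resolve one pair of parts at a time. Deleting resolved $IJ$-edges is then harmless, since every triangle contains exactly one $IJ$-edge.
\item Control multiplicities, e.g.\ by subsampling $K$ at rates $2^{-\ell}$ for $\ell=\log n,\dots,0$, running each scale until the lister returns fewer triangles than its query has edges.
\end{enumerate}
Then w.h.p.\ every $IJ$-edge still unresolved at scale $\ell$ lies in $O(2^{\ell}\log n)$ triangles of $G$, hence in $O(\log n)$ triangles of the current query. So each full answer on $s$ edges resolves $\Omega(s/\log n)$ new edges, and the total query size is $\tilde O(m)$.

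In the listing-to-all-edge-detection direction, deleting $e$ after finding one triangle through it destroys the others. With $I=\{u\}$, $J=\{v\}$, $K=\{w_1,\dots,w_t\}$ and all edges present, processing $uv$ first leaves a single triangle. Merely marking $e$ does not suffice either: one triangle per edge of $E^*$ yields at most $|E^*|$ triangles. That can be $\Theta(m^{2/3})$ while $G$ has $m$ triangles, e.g.\ $K_{s,s,s}$ with $s=m^{1/3}$ padded by a matching. Your claim that batching makes each call of size $s$ produce $s/\mathrm{polylog}$ new triangles is the entire content of this reduction, and no mechanism is given; ``levels by degree'' is not the right quantity. One mechanism that works with subgraph queries only:
\begin{enumerate}
\item Estimate each $IJ$-edge's witness count $x_e$ as in \cref{lem:estimate}, using all-edge calls on $G[I,J,K']$ for $K'$ sampled at rates $2^{-\ell}$.
\item Group the edges by $x_e\approx 2^\ell$.
\item For a group of $N_\ell$ edges, repeat $\tilde O(2^\ell)$ times: sample $K'$ at rate $2^{-\ell}$, and run the bit-splitting of \cref{lem:find-uniquers-2} on the group's $IJ$-edges plus the $IK$- and $JK$-edges into $K'$.
\end{enumerate}
Each repetition isolates a uniformly random witness of each group edge with constant probability. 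So all $\Theta(N_\ell 2^\ell)$ triangles through the group are found, and they are distinct since each has one $IJ$-edge. The total query size is $\tilde O(N_\ell 2^\ell+m)$, and you stop once $m$ triangles are listed.
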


One might wonder whether a similar result can be shown for runtimes in terms of the number of vertices.

\begin{center}
{\em Question 3: Is the problem of listing $n^2$ triangles equivalent to all-edge detection for runtimes in terms of $n$?}
\end{center}

In one direction, their argument applies directly:

\begin{theorem}[Duraj, Kleiner, Polak, and Vassilevska Williams~\cite{duraj2020equivalences}]\label{thm:alledge-to-listing}
    For any $2 \leq c \leq 3$, all-edge triangle detection $(n^c, n^c)$-black-box reduces to listing $n^2$ triangles.
\end{theorem}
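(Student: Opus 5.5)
The plan is the standard "list, then delete the witnessing edges" reduction of Duraj et al., carried out uniformly in $R$ so that it is black-box in the sense of \cref{def:blackbox}. Given an $n$-vertex tripartite instance $G$ on parts $I,J,K$, we want to decide for each edge of $G$ whether it lies in an $R$-triangle. We proceed in rounds. In each round we call the listing oracle on the current graph $G'$; it returns up to $n^2$ distinct $R$-triangles of $G'$, or all of them if $G'$ has fewer. For every returned triangle we mark its three edges as ``yes'' and delete them from $G'$. These edges are correctly classified, since an $R$-triangle of $G'\subseteq G$ is an $R$-triangle of $G$. The algorithm only ever looks at whether a triple of weights was reported by the oracle. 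It never evaluates membership in $R$ itself, so it does not depend on $R$.

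Next I verify correctness and termination. Deleting an edge only removes triangles. So if the oracle ever returns fewer than $n^2$ triangles, it has listed every remaining $R$-triangle of $G'$, and deleting their edges leaves a graph with no $R$-triangles. At that point every unmarked edge should be answered ``no''. This is not automatic: an edge $e$ might lie in an $R$-triangle of the original $G$ whose other edges were deleted earlier. This is exactly why we only delete edges that are \emph{already marked} yes. If $e$ is still present, we need the witness triangle $\{e,f,g\}$ to survive. But $f$ or $g$ may have been deleted as a yes-edge, so plain deletion is wrong.

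The standard fix is to not delete outright, and instead to partition the vertex set into $t$ groups of size $n/t$. For each triple of groups $(V_a,V_b,V_c)$ we run the listing subroutine on the induced subgraph. A round that returns the full quota of $n^2$ triangles yields $\Omega(n^2)$ new distinct marked edges of that block, so we charge it to those edges. A round that returns fewer triangles is charged to the block itself. Within a block, once it stops producing full quotas, every remaining unmarked edge $(x,y)$ is correctly decided by the final complete listing in that block. The witness vertex $z$ lies in some group, so some block triple contains all of $x,y,z$. Since we delete only triangles' edges \emph{after} recording them, and re-run each block on the original edges of $G$ restricted to that block, with only the target edge set shrinking, no witness is lost. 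Concretely, in block $(a,b,c)$ we keep all $G$-edges between $V_a\cup V_b$ and $V_c$ and remove only already-marked $V_a$--$V_b$ edges. Then a triangle through an unmarked edge is always still present.

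For the cost, take groups of size $s$ and a quota of $s^2$ listings per call; the oracle call on a $3s$-vertex graph costs $s^c$. There are $(n/s)^3$ block triples with one ``final'' call each, and each full-quota call removes $\Omega(s^2)$ of the at most $n^2$ edges, all of them within the block's own $V_a$--$V_b$ pair. Summing, the final calls cost $(n/s)^3 s^c$, and the full calls cost at most $(n/s)\cdot (n^2/s^2)\cdot s^c$ when charged per pair. Choosing $s$ balances these. Since the listing quota is $n^2$ on an $n$-vertex graph, the simplest choice is $s=n$, i.e. one block. The obstacle is then exactly the deleted-witness issue, resolved by removing only marked edges of a single designated pair: we handle the three edge types $I$--$J$, $J$--$K$, $I$--$K$ separately, each time deleting marked edges of only that type. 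The total cost is then $O(n^2/n^2)\cdot n^c + n^c = O(n^c)$ per type, plus $\softo{n^2}$ bookkeeping, within $T_A=n^c$ since $c\ge 2$. I expect the main care point to be making this witness-preservation argument airtight.
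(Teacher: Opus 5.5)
Your correctness mechanism is sound. In a pass that deletes only already-marked $I$--$J$ edges, no witness of an unmarked $I$--$J$ edge is ever destroyed, because every triangle contains exactly one $I$--$J$ edge.

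The gap is the cost bound. You charge each full-quota call to $\Omega(n^2)$ (or $\Omega(s^2)$ per block) \emph{new} marked edges of the designated type. The lister, however, only promises distinct \emph{triangles}. A single $I$--$J$ edge lies in up to $|K|$ of them, so a full call is only guaranteed to delete about $n^2/|K|=\Theta(n)$ new $I$--$J$ edges (about $s$ in a block). For instance, let $G$ be complete tripartite with every triangle an $R$-triangle. A valid lister may answer each call with all triangles through $\Theta(n)$ still-present $I$--$J$ edges. Your $I$--$J$ pass then makes $\Theta(n)$ full calls of cost $n^c$ each, for total $\Theta(n^{c+1})$, so the bound $O(n^2/n^2)\cdot n^c$ is unjustified. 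The block variant has the same defect, and its $(n/s)^3$ final calls alone already cost $n^3 s^{c-3}>n^c$ when $s<n$.

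The missing ingredient is a way to make $O(1)$ listing calls exhaust the relevant triangles. The standard way is to subsample the third part at codegree-dependent rates.
\begin{itemize}
\item Deleting $I$--$J$ edges does not change the codegree of any other $I$--$J$ edge. So process scales $D=2^{\lceil\log n\rceil},\dots,2,1$ with the invariant that every unmarked $I$--$J$ edge lies in fewer than $2D$ $R$-triangles.
\item At scale $D$, repeat $O(\log n)$ times: keep each vertex of $K$ independently with probability $1/D$. The current graph on $I\cup J\cup K'$ then has fewer than $2|I||J|$ $R$-triangles in expectation, which is at most half the quota.
\item Call the lister, mark and delete the returned $I$--$J$ edges, and repeat a bounded number of times. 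Each full call destroys a quota's worth of triangles, so by Markov the listing is complete with probability at least $0.9$ for a suitable constant number of calls.
\item When the listing is complete, every unmarked edge with a witness in $K'$ gets marked. Its witness triangle survives, since only $I$--$J$ edges are deleted.
\item An edge of codegree at least $D$ has a witness in $K'$ with probability at least $1-1/e$. After $O(\log n)$ repetitions all such edges are marked w.h.p., which restores the invariant at $D/2$. At $D=1$ every edge with a witness is marked.
\end{itemize}
This uses $O(\log^2 n)$ calls on subgraphs of $G$ per edge type. The total cost is $\widetilde{O}(n^c)$ plus $\widetilde{O}(n^2)$ bookkeeping, and the reduction uses $R$ only through the oracle's answers.
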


However, in the other direction, such a translation is unclear. The reverse direction of this implication would be quite interesting, because for dense graphs often we \emph{can} get fast algorithms for all-edge $R$-triangle detection using matrix products, so this might be an even more directly useful statement than \cref{thm:listing-to-alledge}. 
We give a quantitatively weaker version of a positive answer to Question 3: 

\begin{theorem}\label{cor:ub-triangle-all}
    Assuming $\omega = 2$ (i.e. that integer matrix multiplication is solvable in $\softo{n^2}$ time), for any $2 \leq c \leq 3$, the problem of listing $n^2$ triangles $\paren{n^{c}, n^{\paren{\frac{5c-9 + \sqrt{297-162c+25c^2}}{4}}}}$-black-box reduces to all-edge triangle detection.
\end{theorem}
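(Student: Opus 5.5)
The plan is a parametrized heavy/light scheme. It combines three ingredients: random partitioning, disjoint-union packing of many small instances into a single oracle call, and boolean matrix products (available in $\softo{n^2}$ time under $\omega=2$) to handle the relation-independent part of the structure. Since the reduction must work for every $R$, $M$ can learn which triples are $R$-triangles only through the all-edge oracle. The boolean structure of $G$ (adjacency, common-neighbour counts), however, is independent of $R$ and can be computed freely with fast matrix multiplication. Throughout I assume, as the disjoint-union packing requires, that the cost $T_B=n^c$ of a query is charged by its number of vertices.

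\textbf{Step 1 (coarse localisation).} Fix a parameter $g$. Partition $I,J,K$ into blocks of size $g$. For every triple of blocks, query the all-edge oracle on the induced tripartite subgraph on $3g$ vertices. This costs $(n/g)^3 g^c = n^3 g^{c-3}$. For every edge $(i,j)$ it tells us which $K$-blocks contain an $R$-witness. We then process (edge, block) pairs in some order and stop once $n^2$ triangles are listed, so only $O(n^2)$ such pairs ever need to be processed. Each of them is a subproblem: find all witnesses of $(i,j)$ inside a $K$-block of size $g$.

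\textbf{Step 2 (heavy/light via common neighbourhoods).} Use fast boolean matrix multiplication, restricted to each block, to compute for each surviving pair $(i,j)$ and block $B$ the size $d_{ij,B}$ of the common neighbourhood of $i,j$ in $B$ in the underlying unweighted graph. This costs $\softo{n^2\cdot n/g}$ or less with rectangular products.

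Light pairs, those with $d_{ij,B}\le d$, are resolved by packing. We build gadget graphs consisting of the single edge $(i,j)$ together with one candidate $k$ at a time, or by binary splitting of the candidate list. Many such gadgets are laid out as a disjoint union in one oracle call on about $n$ vertices, so the $n^2$ light subproblems with $d$ candidates each cost roughly $n^2 d/n$ calls of size $n$.

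Heavy pairs are handled by recursing. We refine the block $B$ into sub-blocks of size $g'$ and query block triples as in Step 1. At a fixed scale, a dense pattern of heavy pairs can occur only in a limited number of block triples, which we bound by counting edges.

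\textbf{Step 3 (optimisation).} Write the total cost as a function of the exponents $\gamma=\log_n g$, $\delta=\log_n d$ and the refinement exponent. Set the dominant terms equal: the coarse-localisation term $3+\gamma(c-3)$, the packing term, and the heavy recursion term. The resulting system should reduce to a quadratic in $\gamma$. Its admissible root should give the exponent $\frac{5c-9+\sqrt{297-162c+25c^2}}{4}$; a sanity check is that this expression equals $3$ at $c=3$ and about $2.386$ at $c=2$.

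The main obstacle I expect is Step 2. The goal is a packing/recursion argument in which the number of heavy (edge, block) pairs that actually require expensive refinement is charged correctly. I would first try to charge each heavy query against either listed triangles or against boolean wedges counted via the matrix products, and accept any polylogarithmic losses from randomisation of the partitions.
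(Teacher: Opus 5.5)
Step 2 of your proposal relies on a quantity that carries no information in the black-box setting. Because $M$ must work for every $R$, the hard inputs are complete tripartite graphs with distinct edge weights, where $R$ encodes an arbitrary $3$-uniform hypergraph. On those inputs $d_{ij,B}=g$ for every pair and every block, so your heavy/light split is vacuous, and so is any charging against boolean wedges. What matters is the $R$-codegree of a pair, and that is visible only through the oracle. The paper estimates it for all pairs by querying random dyadic subsamples of the third part (\cref{lem:estimate}). It uses fast matrix multiplication only on the explicit graph $G'$ of \emph{popular} pairs built from those estimates.

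The disjoint-union packing cannot help either. All-edge detection on a disjoint union is just the union of the answers on the components. Since $|Q|^c$ is superadditive for $c\ge 2$, a packed call costs more than the separate small calls it simulates. Your own count already shows this: $nd$ calls of cost $n^c$ is at least $n^3$. The real leverage of the oracle is that one call on $I\cup J\cup K'$ answers all pairs at once. Through parallel binary search and random partitioning, this resolves exactly the pairs of low $R$-codegree (\cref{lem:find-uniquers-2,lem:find-kinda-uniquers}).

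More fundamentally, as written your scheme only reads the $IJ$ answers: for each $(i,j)$, which $K$-blocks contain witnesses. Once the packing is discarded, every query is an induced subgraph. So the scheme can be simulated at no greater cost by a subgraph reduction to all-$IJ$ detection. \cref{thm:specifiedparts-lb} and its listing corollary show that no such reduction beats $n^{(9-2c)/(4-c)}$. That is strictly larger than the target for every $c<3$; for example, it is $n^{2.5}$ versus roughly $n^{2.386}$ at $c=2$. So no tuning of $\gamma,\delta$ in Step 3 can give the claimed exponent, and the hoped-for quadratic is never derived.

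The missing idea is to use all three pairs of parts, which the paper does as follows.
\begin{enumerate}
\item Set $\eps=3-c$. List every hyperedge containing an unpopular pair (codegree below $n^\alpha$) at cost $\widetilde O(n^{3-\eps+\eps\alpha})$. Every remaining hyperedge is then a triangle of $G'$, which has at most $t/n^\alpha$ edges per pair of parts.
\item Vertices lying in many popular pairs are few, and they are handled by recursing on half-size instances.
\item While some $i$ lies in at least $n^\beta$ triangles of $G'$, recurse on $G[I'_\ell\cup N_{G'}(i)]$ and delete the edges of $G'[N_{G'}(i)]$. This is a non-induced query, implementable by deleting edges.
\item Test the at most $n^{1+\beta}$ remaining triangles of $G'$ one by one. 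They are listed with the algorithm of Bj\"orklund et al., and per-vertex triangle counts come from matrix multiplication, both under $\omega=2$.
\end{enumerate}
Solving the resulting recurrence with the ansatz $T(n,t)\le K(tn^x+n^{2+x})$ gives \cref{thm:allparts-ub-comp}. Dyadic vertex subsampling, which isolates $\widetilde{\Theta}(n^2)$ triangles, then turns this exact-recovery bound into the stated listing reduction.
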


This is a subcubic reduction\footnote{Although this is perhaps not easy to tell from looking at the expression --- see \cref{fig:tikzplot} later in the paper for a plot of this exponent against $c$.} between these problems for any relation $R$, but not a full equivalence as exists in the sparse case.
In fact, we demonstrate that a full equivalence between these problems, if true, won't follow from existing methods.
We are unable to prove a full black box separation, but we rule out reductions with a further restriction:

\begin{definition}
    Say a reduction between weighted graph problems is \emph{non-duplicitous} if no edge weight ever appears with higher multiplicity in a call of the reduction than it did in the input graph.
\end{definition}

\begin{restatable}{theorem}{duplicity}\label{cor:lb-triangle-all}
    For any $2 \leq c \leq 3$ and any $\eps > 0$, there is no non-duplicitous $\paren{n^c, n^{\paren{\frac{9}{6-c} - \eps}}}$-black-box reduction from the problem of listing $n^2$ triangles to all-edge triangle detection.
\end{restatable}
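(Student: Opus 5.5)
We will prove \cref{cor:lb-triangle-all} with an information-theoretic adversary argument in a hypergraph query model. Since the reduction $M$ must work for \emph{every} relation $R$, we pick the edge weights of the input $n$-vertex tripartite graph to be all distinct. The relation $R$ is then an arbitrary set of weight triples, i.e.\ an arbitrary $3$-uniform hypergraph $H$ on the $\Theta(n^2)$ edges, restricted to triples that form triangles in the complete tripartite graph. To $M$, $R$ is an unknown subset of the $n^3$ possible triangles, which it can access only through oracle calls.

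Non-duplicity makes each oracle call on a graph $G_i$ with $n_i$ vertices behave like a sub-instance: every weight appears at most once. So $G_i$ is, up to relabeling, a graph whose edges are distinct edges of the original input, each carrying its original weight, and whose $R$-triangles are exactly the triples of $H$ that form triangles in $G_i$. The oracle answer reveals, for each edge $e$ of $G_i$, one bit: whether some triangle of $G_i$ through $e$ lies in $H$.

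The argument has three steps.

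\textbf{Bounding the number of triples a query can see.} A query on $n_i$ vertices involves at most $n_i^2$ input edges and at most $n_i^3$ candidate triangles. Each edge $e$ in it lies in at most $n_i$ triangles of $G_i$, so the bit returned for $e$ is an OR over at most $n_i$ hypergraph triples.

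\textbf{Hard distribution.} We plant $H$ at random: each of the $n^3$ triangles is included independently with probability $p$. We choose $p$ so that a typical edge lies in many $H$-triangles but a typical small query is uninformative. Listing $n^2$ triangles requires identifying $\Omega(n^2)$ specific triples of $H$. Hence the total information $M$ gathers must be about $n^2 \log(1/p)$ bits, and enough queried triangles must be covered for the witnesses to be certified.

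\textbf{Trading cost against information.} A query on $n_i$ vertices costs $n_i^c$ against a total budget $T=n^{9/(6-c)-\eps}$. It has two regimes. If $n_i p \ll 1$, each bit essentially reveals whether a single specific triple is present. If $n_i p \gg 1$, every bit is almost surely $1$ and reveals almost nothing. The useful query size is therefore $n_i \approx 1/p$. Such a query yields $O(n_i^2)$ bits of information and can certify about $n_i^3 p \approx n_i^2$ triangles.

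We then need $\Omega(n^2)$ certified triangles. We also need the queries to have reached all of them, which forces the union of queried triangles to cover a set containing $n^2$ planted triples, i.e.\ about $n^2/p$ triangles. With $\sum_i n_i^3 \ge n^2/p$ and the cost $\sum_i n_i^c$ minimized, we optimize over $p$ and the $n_i$ together with the counting constraint $\sum_i n_i^2 \gtrsim n^2$. Balancing these should give the exponent $9/(6-c)$. We will make this rigorous with a potential/entropy argument: each query's answer has entropy at most $\sum_{e}H(\Pr[\text{bit}_e=1])$, each term bounded using $n_i p$. An output is correct only with exponentially small probability unless this entropy bound is met.

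\textbf{Main obstacle.} The hardest step is handling adaptivity and mixed query sizes rigorously: later queries depend on earlier answers, which condition $H$ and can concentrate probability on a few triples. We will try a round-by-round argument: reveal the triples touched by heavily positive edges for free, and show that the remaining triples stay close to independent with probability $p$. Then we use a convexity argument over the chosen $n_i$ to show the optimum is achieved by balanced query sizes, which recovers exactly the $9/(6-c)$ tradeoff.
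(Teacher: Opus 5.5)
There is a genuine gap in the hard distribution itself: a uniformly random $H$, with each triangle present independently with probability $p$, is easy to list. Take $K'\subseteq K$ of size $\Theta(1/p)$ (or $K'=K$ if $p\lesssim 1/n$) and work in the induced subgraph on $I\cup J\cup K'$. It has $O(n)$ vertices and contains at least $n^2$ (or all) $H$-triangles. Every pair in $I\times J$ lies in only $O(\log n)$ of them, so \cref{lem:find-uniquers} recovers them all with $\widetilde{O}(1)$ queries of size $O(n)$, i.e.\ at cost $\widetilde{O}(n^c)$. Since $\frac{9}{6-c}-c=\frac{(3-c)^2}{6-c}>0$ for $c<3$, no entropy argument over this distribution can prove the theorem. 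Your own accounting agrees: about $n^2p^2$ queries of size $1/p$ give total cost $n^2p^{2-c}\le n^c$, not $n^{9/(6-c)}$.

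The faulty step is the two-regime analysis. ``Each edge lies in at most $n_i$ triangles'' is only an upper bound, but you treat every edge as lying in about $n_i$ triangles. The unbalanced query above has $n_i\approx n$, yet each $IJ$ edge is in only $\Theta(1/p)$ triangles, so it returns $\Theta(n^2)$ informative bits at cost $n^c$.

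This is exactly why the paper abandons the Erd\H{o}s--R\'enyi model. It fixes a quasirandom set of \emph{admissible} pairs of density $p=n^{-(3-c)/(6-c)}$ in all three pairs of parts. Only triples whose three pairs are admissible may be hyperedges, each with probability $n^{-1}p^{-3}$, giving about $n^2$ hyperedges. \Cref{lem:hypergraph-nodense} then guarantees that a $q$-vertex query of \emph{any} shape contains only $\widetilde{O}(q^2p+q)$ admissible pairs. Combined with a thresholding relaxation close to the one you sketch, this is what bounds the information per unit cost.

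The second missing ingredient is the treatment of non-duplicity. It does not make a query a sub-instance: the reduction may rearrange distinct input edges into triangles that never occur in the input. Once the instance is sparse, a rearranged $q$-vertex query can carry far more relevant edges than any $q$-vertex subgraph, so the density bound above is lost. Your identification of $R$ with $H$ on input triangles implicitly puts these ``fake'' triples outside $R$. An edge lying in one genuine triangle and many fake ones then still returns an informative bit.

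The paper makes the opposite choice: every weight triple that is not a triangle of the input is placed \emph{in} $R$, so any query edge lying in a fake triangle answers $1$ deterministically. The remaining ``good'' edges in the neighbourhood of each third-part query vertex correspond to subgraphs of the admissible graph, packed into the $q$ query vertices in the induced sense of squishability. The squishability lemma (\cref{lem:babypacking}) bounds the distinct input edges they use by $\widetilde{O}(q^2p+q)$, and non-duplicity converts this into a bound on the number of good edges. Without the admissible-pair structure and this fake-triangle/induced-packing step, your final convexity optimization has no valid per-query bound to work with.
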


This non-duplicitousness constraint holds, for instance, for any reduction which only makes queries to \emph{subgraphs} of the original graph.
All known relationships between triangle problems which hold for general $R$ (in particular, those of \cite{williams2010subcubic} and \cite{duraj2020equivalences}) are shown via such subgraph reductions. So, this is evidence that new techniques are needed if we want a positive answer to Question 3.

\subsection{Relating the complexity of ``rotated'' matrix products}

Recall that many forms of matrix multiplication can be thought of as all-edge $R$-triangle detection problems for corresponding relations $R$.
Although we think of these problems in tripartite graphs, most of the matrix products we care about give rise to relations $R$ that are symmetric in the three parts $I$, $J$, and $K$. For instance, Boolean matrix multiplication is equivalent to all-edge (unweighted) triangle detection, and APSP is equivalent to all-edge negative triangle detection, where $R = \{(w_{ij}, w_{jk}, w_{ki}) \ | \ w_{ij} + w_{jk} + w_{ki} < 0\}$. \\

However, one can also consider asymmetric relations $R$, where the edge weights between e.g. parts $I$ and $J$ are treated differently than those between parts $J$ and $K$. In these cases, the problem of detecting which edges \emph{between $I$ and $J$} belong to $R$-triangles may be a qualitatively quite different problem than detecting which edges \emph{between $J$ and $K$} belong to $R$ triangles.\\

There are, in fact, some examples of asymmetric $R$ for which these problems are important.
For instance, consider $R = \{w_{ij}, w_{jk}, w_{ki} \ |\ w_{jk} \geq w_{ki}, w_{ij}=1\}$. 
In this case, the problem of all-edge $R$-triangle detection between parts $I$ and $J$ 
can be readily seen to be equivalent to computing the so called {\em dominance product} \footnote{Technically, this is the so called existence version of dominance product. The original definition of the dominance product of $A$ and $B$ is $C_{ij}=|\{k~|~A_{ik}\leq B_{kj}\}|$. Under this definition, the dominance product is known to be runtime-equivalent (see \cite{labib2019hamming,vnotes}) to (counting version of) the equality product given by $C_{ij}=|\{k~|~A_{ik}= B_{kj}\}|$.}  of matrices $A$ and $B$ 
 %the dominance product of matrices $A$ and $B$ is 
defined as the Boolean matrix $C$
whose $(i, j)$ entry is $1$ if and only if there is some $k$ such that
$A[i, k] \leq B[k, j]$ 
(see e.g. \cite{matouvsek91,maxweighttri,fredmandom}). \\

On the other hand, consider the problem of all-edge $R$-triangle detection between parts $J$ and $K$ for the same choice of $R= \{w_{ij}, w_{jk}, w_{ki} \ |\ w_{jk} \geq w_{ki} , w_{ij}=1\}$. Via a simultaneous binary search argument one can show that this problem is equivalent (up to log factors) to the \defn{Boolean-$(\min,
  +)$-product} --- that is, the problem of computing $(\min, +)$-product between an $n\times n$
integer-valued matrix and an $n\times n$ $\{0, \infty\}$-valued
matrix, which is important for the all-pairs shortest
paths problem in node-weighted graphs~\cite{chan2007more,AbboudF0WX25}.

Both dominance product \cite{matouvsek91} and
Boolean-$(\min, +)$-product \cite{chan2007more} are known to be solvable in time
$\widetilde{O}(n^{(\omega + 3)/2}) \leq O(n^{2.69})$, where $\omega <
2.372$ is the matrix multiplication constant \cite{alman2404more}, via
essentially the same algorithmic techniques; however, neither problem
is known to reduce to the other.
In fact, the dominance product 
%and the provably at least as hard equality product \cite{labib2019hamming,vnotes,fredmandom} are 
is among the very few {\em intermediate} problems (those with known running time between $n^\omega$ and $n^3$) without {\em any} known conditional lower bounds\footnote{Beyond the obvious fact that it's at least as hard as BMM so it likely requires $n^{\omega-o(1)}$ time. However, if $\omega=2$ this is not meaningful and we would really like a higher than $n^{\omega-o(1)}$ time conditional lower bound.}.
Meanwhile, the Boolean-$(\min, +)$-product is known to require at least $n^{2.5-o(1)}$ time under the APSP Hypothesis \cite{nickfischer2026} (see also \cite{fredmandom,ChanWX21}). Any nontrivial reduction between the rotated versions of the corresponding triangle problem would therefore be very interesting because it would give a hardness result for dominance product, which has been thus far elusive.\\

One might conjecture that, in fact, ``rotated'' $R$-triangle detection problems are \emph{always} equivalent:

\begin{center}
{\em Question 4: Does Boolean-$(\min, +)$-product reduce to dominance product? More generally, is all-$IJ$ $R$-triangle detection equivalent to all-$JK$ detection for every $R$?}
\end{center}

Indeed, we know of no $R$ for which we have reason to suspect different complexities. However, no reduction of this form is known.\\

In addition to \cref{cor:ub-triangle-all}, we give a (quantitatively weaker) reduction from listing triangles to \emph{detection just between a single pair of parts} ---
which, along with \cref{thm:alledge-to-listing}, implies the following:

\begin{theorem}\label{cor:rotate-abitt}
    For any $2 \leq c \leq 3$, all-$IJ$ detection $(n^c, n^{\paren{\frac{9-2c}{4-c}}})$-black-box reduces to all-$JK$ detection. So, an $n^c$-time algorithm for dominance product yields an $n^{\paren{\frac{9-2c}{4-c}}}$-time algorithm for Boolean-$(\min, +)$-product.
\end{theorem}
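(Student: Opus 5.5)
The theorem follows by composing two reductions: all-$IJ$ detection reduces to listing $n^2$ triangles, and listing $n^2$ triangles reduces to all-$JK$ detection.

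The first reduction is \cref{thm:alledge-to-listing}. Applied with exponent $c' = \frac{9-2c}{4-c}$, which lies in $[2,3]$ whenever $c\in[2,3]$, it turns an $n^{c'}$-time lister for $n^2$ triangles into an $n^{c'}$-time all-$IJ$ detector. That argument is black-box and only ever asks about edges between $I$ and $J$, so it yields all-$IJ$ detection specifically.

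The real content is a new black-box reduction from listing $n^2$ triangles to all-$JK$ detection. It should run in time $n^{(9-2c)/(4-c)}$ when each oracle call on $N$ vertices costs $N^c$. I would adapt the \cite{williams2010subcubic}/\cite{duraj2020equivalences} partitioning scheme, with the constraint that the oracle only reports which $JK$ edges lie in $R$-triangles.

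Split $I$, $J$, $K$ into blocks of size $s$. On a subgraph induced by blocks $(I_a,J_b,K_c)$, one all-$JK$ call costing $s^c$ marks every $JK$ pair in that block lying in some triangle. For each marked pair $(j,k)$, recover a witness $i$ by binary search over $I_a$: repeatedly halve $I_a$ and re-query the subgraph, which costs $O(\log s)$ calls. After listing, delete the found triangle's contribution and iterate. Since we only need $n^2$ triangles, we amortize. Each detected $JK$ edge that yields a triangle is charged to the output. A block-triple reporting no marked edges is discarded, as in \cite{williams2010subcubic}, which is why the total cost is roughly $(n/s)^3 s^c$ plus the listing work.

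The cost of listing comes from a second parameter. A single $JK$ edge can participate in up to $s$ triangles within its block. To find many triangles on the same $jk$ we would re-query with found $i$'s removed, but removing an $i$ kills all its triangles. The natural fix is to remove only the edge $(i,j)$ or $(i,k)$, and a subgraph query permits exactly that. Alternatively, subsample $I$ into groups of size $t$ so that each $JK$ edge witnesses at most about one triangle per group.

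Balancing terms of the form $(n/s)^3 s^c$ (scanning block triples), $n^2 \cdot (\text{per-triangle cost } t^{c}\text{-ish})$, and the number of $(I\text{-group}, \text{block})$ pairs should yield the exponent $\frac{9-2c}{4-c}$. As sanity checks, $c=2$ gives $5/2$ and $c=3$ gives $3$, which looks like an optimization of the form $\max\big(n^{3}/s^{3-c}, \ldots\big)$. I would set up these costs and solve the resulting two-parameter optimization.

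The main obstacle is the witness-recovery step: extracting up to $n^2$ distinct triangles using only $JK$ edge-membership information, without being able to ask directly which $IJ$ or $IK$ edges participate. That step is where the exponent degrades relative to \cref{cor:ub-triangle-all}, which has all-edge information.

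The final step is the application. By the discussion before Question 4, all-$IJ$ detection for $R=\{w_{jk}\ge w_{ki}, w_{ij}=1\}$ is the dominance product, and all-$JK$ detection is equivalent to Boolean-$(\min,+)$-product. Mapping these onto the two ends of the reduction chain gives the stated corollary.
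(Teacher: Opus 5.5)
Your decomposition matches the paper's: compose \cref{thm:alledge-to-listing} with a reduction from listing $n^2$ triangles to single-pair all-edge detection. The paper gets the second reduction from the specified-parts exact-recovery algorithm (\cref{thm:baby-triangle-ub}) with $|E|=n^2$, at cost $n^{3-\frac{3-c}{4-c}} = n^{\frac{9-2c}{4-c}}$. That second reduction is the whole content of the theorem, and you have not supplied it. Your exponent is asserted, not derived, and the mechanisms you sketch do not reach it:
\begin{itemize}
\item Binary searching separately for each marked pair needs its own sequence of queries for each of up to $s^2$ pairs per block triple. Halving $I_a$ in lockstep for all pairs (reading off the witness's bits) works only for pairs with a \emph{unique} witness; that is exactly \cref{lem:find-uniquers-2}.
\item ``Delete the found triangle and iterate'' is not available. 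A query can only drop vertices or edges, and dropping $(i,j)$ kills every $(i,j,k')$, including unlisted ones. In the induced hypergraph model you cannot exclude a single hyperedge at all.
\item Subsampling $I$ into groups of size $t$ isolates witnesses only for $JK$ pairs of codegree at most about $n/t$. Pairs of codegree near $n$ cannot be isolated at any affordable granularity, and your ``$n^2\cdot t^c$'' term does not handle them within budget.
\end{itemize}

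The missing idea is a codegree threshold, with brute force for the heavy pairs. First estimate every $JK$ pair's codegree within a constant factor, using $O(\log^2 n)$ full-size calls on random subsets of $I$ of each dyadic size (\cref{lem:estimate}). Call pairs with estimate at least $n^\alpha$ popular. Since there are only $n^2$ triangles, there are $O(n^{2-\alpha})$ popular pairs. Each is resolved by $n$ constant-size queries on $\{i,j,k\}$, for $O(n^{3-\alpha})$ in total.

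For the remaining pairs, randomly partition each part into $n^\alpha$ chunks of size $n^{1-\alpha}$. A given triangle of an unpopular pair is then, with constant probability, the only triangle of that pair in its chunk triple. So the lockstep binary search over all $n^{3\alpha}$ chunk triples finds it, at cost $\widetilde{O}(n^{3\alpha}(n^{1-\alpha})^c)$, repeated $O(\log n)$ times. This is your $(n/s)^3 s^c$ term with $s=n^{1-\alpha}$; the other term should be $n^{3-\alpha}$, not $n^2t^c$. Balancing $3-\alpha = 3\alpha+(1-\alpha)c$ gives $\alpha=\frac{3-c}{4-c}$ and total cost $\widetilde{O}(n^{(9-2c)/(4-c)})$. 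If the graph has more than $n^2$ triangles, you also need the paper's dyadic vertex subsampling.

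Finally, your chain as written turns an all-$JK$ (Boolean-$(\min,+)$) algorithm into an all-$IJ$ (dominance) algorithm, which is the reverse of the stated application. You should take the specified pair in the listing reduction to be the dominance pair. This is fine, since the construction is symmetric in which pair is specified.
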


\cref{cor:rotate-abitt} is the first nontrivial reduction between dominance product and Boolean-$(\min, +)$ product. 
The reduction is unfortunately not strong enough to give a superquadratic lower bound for dominance product, independent of $\omega$. However, it does show that if Boolean-$(\min, +)$-product requires $n^{2.5+\delta-o(1)}$ time for some constant $\delta>0$, then dominance product requires $n^{2+\eps-o(1)}$ time for $\eps=4\delta/(1+\delta)>0$. Thus, as long as Boolean-$(\min, +)$-product requires at least $\Omega(n^{2.6115})$ time (this is lower than the current best known running time), dominance product requires at least $\Omega(n^{2.372})$ time, and this is higher than the current best matrix multiplication running time.
The reduction also implies that computing dominance product in $O(n^{2.53})$ time would suffice to improve upon the current best known algorithm for Boolean-$(\min, +)$-product.\\

Getting a better reduction seems difficult: interpreting matrix multiplication variants as all-edge $R$-triangle detection problems, we show that \cref{cor:rotate-abitt} is tight for non-duplicitous black-box reductions.

\section{Main results for hypergraph query complexity}\label{sec:hypergraphs}

Our upper and lower bounds for black-box reductions all come from thinking about the problems information-theoretically.
If the relation $R$ is unknown, these $R$-triangle problems on a graph $G$ can be thought of as \emph{hyperedge detection} (resp. listing, counting, etc.) problems on an unknown hypergraph: the vertices of the hypergraph are the vertices of $G$ and the hyperedges are the triples of vertices whose edge triples satisfy $R$. \\

Observe that, if the edge weights of the graph are all distinct, by choice of the relation $R$ one can make it so that any arbitrary collection of triples of vertices correspond to $R$-triangles. So, if one wishes to use a reduction to a problem $B$ to solve an $R$-triangle problem $A$ in a black-box sense, one is effectively attempting to answer a question about an \emph{arbitrary} unknown $3$-uniform hypergraph, where access to this hypergraph is provided in the form of queries to $B$.\\

This perspective has already proved quite fruitful in generating generally-applicable positive results in fine-grained complexity. Prior work has observed that the question of ``what information about the set of relation-satisfying triples can we determine via black-box use of a detection algorithm'' can be thought of in terms of query complexity in the \defn{independent set oracle} model. That is, the query model in which access to an unknown hypergraph is provided via an oracle which takes in a set of vertices and determines whether or not they contain at least one hyperedge. One known result is that, for a $k$-uniform, $k$-partite hypergraph, it is possible to approximately count the number of hyperedges by making only $\polylog(n)$ many independent set queries --- in particular, this gives a black-box reduction from approximate counting to detection for triangle problems, as well as higher-arity relations like $k$-SUM or $k$-CLIQUE~\cite{beame2020edge, dell2021fine, bhattacharya2019hyperedge, dell2022approximately,Censor-HillelEW25}.

\subsection{ All-\texorpdfstring{$s$}{s} codegree and all-\texorpdfstring{$s$}{s} containment queries}

In this paper, we consider a new type of hypergraph query in which, instead of just learning whether some subset of the $3$-uniform hypergraph contains at least one hyperedge, we learn \emph{for every pair of vertices} whether that pair belongs to a hyperedge in the queried subset. This is the exactly the kind of information we would get from an all-edge detection algorithm. In fact, we could consider a similar definition for higher uniformity graphs and larger collections of vertices --- one could imagine an oracle which takes in a subset of an $r$-uniform hypergraph and answers ``for every collection of $s$ vertices, are those $s$ vertices contained in at least one hyperedge together''? Although the case $r=3$, $s=2$ is the one relevant for our $R$-triangle applications, the algorithms and lower bounds we present generalize to these higher-dimensional analogues. 

\newcommand{\Rreplace}{U} % zoe: in case we want to change it again

\begin{definition}
    For a fixed integer $s\geq 1$, an \defn{all-$s$ containment oracle} for an unknown hypergraph $H = (V, E)$ takes as input a pair of subsets $Q \subseteq V$ and $\Rreplace \subseteq \mathcal{P}(Q)$ (where $\mathcal{P}$ denotes the power set), and returns the list of all $s$-element subsets of $Q$ that are contained in at least one hyperedge in $\Rreplace \cap E$. 
\end{definition}

Note that this definition allows queries to specify not only a subset of vertices, but also a subset of \emph{candidate hyperedges} $\Rreplace$ to intersect the hyperedge-set of the hypergraph with.
We will primarily discuss the strictly weaker version of this model in which we always require $\Rreplace = \mathcal{P}(Q)$, which we call the \defn{induced} version of the query model.
(Note that the standard independent set query model can be thought of as an ``induced all-$0$ containment oracle''.)\\

We also define the following related model, a strengthened version of hyperedge \emph{counting} queries (also known as \defn{additive queries} in the literature):

\begin{definition}
    For a fixed integer $s\geq 1$, an \defn{all-$s$ codegree oracle} for an unknown hypergraph $H = (V, E)$ takes as input a pair of subsets $Q \subseteq V$ and $\Rreplace \subseteq \mathcal{P}(Q)$, and returns the codegree of (i.e. number of hyperedges containing) each of the $\binom{|Q|}{s}$ distinct sets of $s$ vertices in the hypergraph $(Q, \Rreplace \cap E)$.
\end{definition}

We will generally work with $r$-partite, $r$-uniform hypergraphs. 
In that setting, we could also weaken the oracles by letting them only return information about the $s$-tuples of vertices spanning a specific set of parts. (This is important for our discussion of ``rotated'' matrix products, where we wanted to reduce $R$-triangle listing to all-edge detection restricted only to the edges between a particular pair of parts --- see \cref{sec:finegrained}.) 

\begin{definition}
    For an $r$-partite hypergraph $H =  (X_1 \sqcup \dots \sqcup X_r, E)$, a \defn{specified-parts} all-$s$ codegree oracle takes as input a pair of subsets $Q\subseteq V$ and $\Rreplace\subseteq\mathcal{P}(Q)$ and returns the codegree of every tuple $(v_1, \dots, v_s) \in (Q \cap X_1) \times \dots \times (Q \cap X_s)$ in the subhypergraph $(Q, \Rreplace \cap E)$. (Correspondingly, a specified-parts all-$s$ containment oracle returns whether each of these values is $0$.)
\end{definition}

As our motivation for studying algorithms in these query models is to instantiate the oracles with efficient algorithms and get an efficient algorithm out, our results will deal with oracles whose costs scale with the size of their input, as opposed to unit cost queries.
When we say an oracle has ``cost $n^c$'' for some $c$, this means that an algorithm making queries $(Q_1, \Rreplace_1), \dots, (Q_q, \Rreplace_q)$ is defined to have query complexity $\sum_{1 \leq i \leq q} |Q_i|^c$.
Our cost functions will always be of the form $n^c$ for some $s \leq c \leq r$, where $r$ is the uniformity of the hypergraph, as an all-$s$ algorithm with cost less than $n^s$ is impossible due to the output size, and an all-$s$ algorithm with cost $O(n^r)$ is trivial so long as we can efficiently decide whether a given $r$-tuple is an edge.\\

In addition to relating these query models to previously studied ones, the primary problem we will focus on solving with these oracles is the \defn{exact recovery} problem: given query access to an unknown hypergraph, can we learn a complete explicit description of the hypergraph?
%Clearly, the query complexity of this problem is at least as hard as that of any other problem.
For general $r$-uniform hypergraphs, it is reasonably straightforward to see that no nontrivial algorithm can solve this: a full description of the hypergraph is simply too much information to obtain with few queries.
However, if we restrict the hypergraph to have a small number of hyperedges, this is an interesting problem that has been studied previously in a number of different query models. (And, indeed, this problem turns out to be exactly what we need for our results on the fine-grained complexity of triangle listing.)

\subsection{Results}

Note that, for all of our results, we treat parameters $s$, $c$, and $r$ as constants, so that our asymptotic notation is asymptotic in terms of the vertex and edge counts. We write tildes over $O$, $\Omega$, $o$, and $\omega$ to denote hiding of $\polylog$ factors in addition to constants.\\

We begin in \cref{sec:ezstories} by comparing our all-$s$ codegree and all-$s$ containment oracles with the previously studied independent set and additive (i.e. hyperedge counting) oracles.
We show that (the not-necessarily-induced version of) independent set queries can simulate all-$s$ containment queries at some quantitative loss, via a direct generalization of Vassilevska Williams and Williams's \cite{williams2010subcubic} reduction from All-Edge $R$-Triangles to $R$-Triangle detection.
However, we show that this quantitative loss is necessary and tight, even if we instead have access to additive queries.
We also observe that an all-$s$ containment oracle can do no better than the trivial approach for simulating additive queries (meaning also that it cannot simulate an all-$s$ codegree oracle).
\begin{theorem}\label{thm:sims}
    Given any $s < c < r$, for unknown $r$-partite, $r$-uniform hypergraphs we have the following:
    \begin{enumerate}
        \item Given a cost-$n^c$ oracle for detecting whether a subhypergraph contains an edge, all-$s$ containment queries can be answered at cost $\softo{n^{\paren{s + \frac{c(r-s)}{r}}}}$. 
        \item Given a cost-$n^c$ oracle for counting the number of edges in a subhypergraph, $\widetilde{\Omega}\left(n^{\paren{s + \frac{c(r-s)}{r}}}\right)$ cost is required to answer an induced all-$s$ containment query.
        \item Given a cost-$n^c$ oracle for counting the number of edges in a subhypergraph, $\widetilde{\Omega}(n^{\min(r, s+c)})$ cost is required to answer an induced all-$s$ codegree query.
        \item Given a cost-$n^c$ all-$s$ containment oracle, $\widetilde{\Omega}(n^{r})$ cost is required to count the total number of hyperedges.
    \end{enumerate}
\end{theorem}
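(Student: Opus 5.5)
The four parts are independent, so I will handle them separately. Part i) is an upper bound via a generalized \cite{williams2010subcubic} scheme. Parts ii)--iv) are information-theoretic lower bounds, each obtained by choosing a family of hidden hypergraphs and bounding how much every query can reveal.

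\textbf{Part i).} Given an all-$s$ containment query on $(Q,\Rreplace)$ with $|Q|=n$, I partition each of the $r$ parts of $Q$ into blocks of size $t$. For every tuple of blocks $(B_1,\dots,B_r)$ I repeat the following. While the non-induced detection oracle, restricted to the candidate edges of $\Rreplace$ inside $B_1\times\dots\times B_r$ that contain at least one still-unresolved $s$-set, reports an edge, I binary search to find one such edge. I then mark all $\binom{r}{s}$ of its $s$-subsets as resolved and remove them from the candidates. The non-induced oracle is essential here, because it lets us delete candidate hyperedges. Each detection query costs $(rt)^c$. Each success resolves at least one new $s$-set and costs $O(\log n)$ queries, and each block tuple has exactly one failing query. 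The number of successes is at most the number of $s$-sets, $O(n^s)$, and there are $(n/t)^r$ failures, so the total cost is $\softo{(n^s+(n/t)^r)\,t^c}$. Balancing with $t=n^{1-s/r}$ gives $\softo{n^{s+c(r-s)/r}}$.

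\textbf{Parts ii) and iii)} use an adversary counting argument against the counting oracle. For ii) I take a planted family. Fix a block size $t$ and a random tuple of blocks. Inside that tuple, the hypergraph contains, for each $s$-set of the block tuple, a random single hyperedge containing it, or none at all. The containment answer therefore reveals roughly $t^s$ bits, each about one region of size $t^{r-s}$. The answer to a counting query on a set $Q$ is a single number, worth $O(\log n)$ bits. Moreover, a query of size $q$ can only be informative about those $s$-sets whose hidden hyperedge has a good chance of landing in $Q$; this chance is roughly $(q/n)^{r-s}$ times the number of $s$-sets touched.

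To make this work I will use an encoding in which each $s$-set $S$ is a coin that either does or does not have exactly one hyperedge, chosen uniformly among the $\sim n^{r-s}$ extensions of $S$. A query of size $q$ then detects information about a given $S$ only with probability about $(q/n)^{r-s}$. It still returns only $O(\log n)$ bits, so the information it gains is at most $\min\bigl(\log n,\ q^{s}(q/n)^{r-s}\bigr)$, up to entropy bookkeeping. We must learn $n^s$ bits, and each query costs $q^c$. Optimizing over $q$, the first term binds when $q^r\ge n^{r-s}$; taking $q=n^{1-s/r}$ yields total cost $n^s\cdot q^c=n^{s+c(r-s)/r}$, as required. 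Part iii) is analogous, except that the codegrees carry $n^s\log n$ bits with independent edges of density $1/2$. Either each query reveals $O(\log n)$ bits, which forces $n^s$ queries of cost at least $n^c$ each (we may assume $q\ge n^{\Omega(1)}$), or, if $s+c>r$, the trivial bound $n^r$ applies.

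\textbf{Part iv).} Let $H$ be complete except possibly for one hyperedge $e$. Every all-$s$ containment query returns all $s$-sets of $Q$ unless $Q$ contains essentially all extensions of some $s$-subset of $e$. For $r>s$ this requires $|Q|=\Omega(n)$, since an $s$-set has $n^{r-s}$ extensions; a query of size $q$ covers only $q^{r}$ hyperedges, so distinguishing which $e$ is missing needs $\Omega(n^r/q^r)$ queries when $q<n$. I will adjust the construction so that queries of size $n$ are also uninformative: take a random hypergraph with each hyperedge present independently, so that every $s$-set has many hyperedges, and count parity. Then every containment query answers ``all'' with high probability, and the hyperedge count cannot be determined without covering $\Omega(n^r)$ tuples at cost at least $q^c\cdot n^r/q^r\ge n^r$ for $c\le r$.

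The main obstacle is making the information bound in part ii) rigorous, because the counting oracle is not induced. I would handle it with a per-query mutual-information bound of $\min(\log n,\ \text{expected hits})$ and then sum over the queries.
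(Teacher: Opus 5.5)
Your part i) is the paper's argument; the paper handles one $s$-tuple of parts at a time and repeats $\binom{r}{s}$ times, but resolving all $s$-subsets of each found edge at once works equally well. Parts iii) and iv), however, do not go through as written.

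\textbf{Part iv).} Your claim that a query is uninformative unless $|Q|=\Omega(n)$ is false. With $H$ complete minus $e$, an $s$-set $S\subseteq e$ is reported uncontained exactly when $e$ is the \emph{only} candidate in $U$ inside $Q$ containing $S$. So a query on just the $r$ vertices of $e$, or any query whose $U$ isolates $e$, tests $e$ directly. Your count of $q^r$ tested tuples per query then yields only cost about $n^r q^{c-r}\le n^r$. The final inequality $q^c\cdot n^r/q^r\ge n^r$ in your parity fallback is reversed for $c<r$. That fallback's premise, that every query answers ``all'' w.h.p., also fails for isolating queries. The missing observation is that a tuple is tested only if it is the unique candidate for one of its $s$-subsets. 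Hence a query on $q$ vertices tests at most $\binom{q}{s}\le q^c$ tuples (using $c>s$), and your first distribution already gives $\Omega(n^r)$.

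\textbf{Part iii).} ``$n^s\log n$ bits at $O(\log n)$ per query'' only forces $\widetilde{\Omega}(n^s)$ queries. Nothing forces those queries to have size $n$; $q\ge n^{\Omega(1)}$ does not give $q^c\ge n^c$. For $s+c>r$, the trivial $n^r$ is an upper bound, not a lower bound. What is needed is the paper's locality argument:
\begin{enumerate}
\item Index buckets by $(r-s)$-tuples in the last $r-s$ parts. Under the uniform distribution each bucket holds $n^s$ independent bits, all determined by the codegree vector once the other buckets are known.
\item A query on $q$ vertices touches at most $q^{r-s}$ buckets. Strengthening the oracle to return $O(\log n)$ bits per touched bucket keeps the buckets independent.
\item So every bucket must be touched $\widetilde{\Omega}(n^s)$ times, i.e.\ $\sum_i|Q_i|^{r-s}\ge\widetilde{\Omega}(n^r)$. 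Comparing $q^c$ with $q^{r-s}$ for $q\le rn$ gives $\widetilde{\Omega}(n^{\min(r,s+c)})$.
\end{enumerate}

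\textbf{Part ii).} Your balancing calculation is the right one. However, the bound $\min(\log n,\,q^r n^{s-r})$ uses the \emph{prior} hit probability $(q/n)^{r-s}$, while adaptively chosen queries are answered under the \emph{posterior}. Earlier queries can concentrate where $S$'s hyperedge lies, e.g.\ a few size-$n$ counting queries whose $U$ is restricted to extensions of $S$ in half the vertices. After that, a small query has a large hit probability. So summing per-query ``expected hits'' does not bound the total information. The obstacle is adaptivity, not non-inducedness: a non-induced query on $q$ vertices still involves at most $q^r$ candidates.

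The paper handles this with a threshold relaxation:
\begin{enumerate}
\item Any $r$-tuple whose posterior exceeds twice its prior $n^{s-r}$ is revealed for free.
\item Each tuple crosses that threshold with probability at most $1/2$, so these revelations cannot supply all the entropy. The queries must still contribute $\Omega(n^s)$ bits.
\item In the relaxed model every unrevealed tuple has posterior at most $2n^{s-r}$. A query on $q<n^{(r-s)/r}$ vertices therefore yields only $O(q^r n^{s-r}\log n)$ bits, and larger queries $O(\log n)$ bits.
\end{enumerate}
The paper also targets exact recovery of a density-$n^{s-r}$ random hypergraph, reduced from all-$s$ containment via \cref{lem:find-uniquers}. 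Your coin-vector target is a reasonable alternative if restricted to $s$-sets in one fixed $s$-tuple of parts, so that planted edges do not make other $s$-sets' answers interact. It needs the same kind of posterior control.
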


We then begin considering the problem of exact recovery using these oracles. In \cref{sec:specified-parts-upperbound}, we give our main algorithmic result (with the special case of $r=3$, $s=2$ proved in \cref{sec:triangles}):

\begin{restatable}{theorem}{specifiedpartsub}\label{thm:specifiedpartsub}
    Given cost-$n^c$ specified-parts induced all-$s$ containment oracle access to an unknown $r$-partite, $r$-uniform hypergraph $H = (V, E)$, we can exactly recover $H$ with high probability at cost 
    $\softo{|V|^{r(1-\gamma)}|E|^{\gamma}},$
    where $\gamma = \frac{r-c}{2r-s-c}$.

\end{restatable}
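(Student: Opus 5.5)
The plan is a block-decomposition algorithm. We choose the block size $t$ to balance the cost of coarse queries against the cost of brute-force completion of each detected $s$-tuple. Write $n = |V|$ and $m = |E|$, with parts $X_1, \dots, X_r$ where the oracle reports on $X_1 \times \dots \times X_s$.

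\textbf{Step 1 (coarse queries).} Partition each part $X_i$ into $\lceil n/t \rceil$ blocks of size at most $t$. For every choice $(B_1, \dots, B_r)$ of one block per part, issue a single induced query on $Q = B_1 \cup \dots \cup B_r$. There are $O((n/t)^r)$ such queries, each of size $O(t)$, so the total cost is $O(n^r t^{c-r})$. Since $H$ is $r$-partite, every hyperedge lies in exactly one such $Q$. The query on $Q$ returns every $s$-tuple $(v_1, \dots, v_s) \in B_1 \times \dots \times B_s$ that extends to a hyperedge inside $Q$. The number of such detected tuples is at most the number of hyperedges inside $Q$, because each detected tuple is witnessed by at least one hyperedge, and distinct tuples are witnessed by distinct hyperedges (a hyperedge determines its $s$-tuple). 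Summing over all block combinations, the total number of detected $s$-tuples is at most $m$.

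\textbf{Step 2 (completion).} For each detected $s$-tuple $(v_1, \dots, v_s)$ in block combination $(B_1, \dots, B_r)$, enumerate all $(r-s)$-tuples $(v_{s+1}, \dots, v_r) \in B_{s+1} \times \dots \times B_r$. For each one, query the $r$-vertex set $\{v_1, \dots, v_r\}$. This costs $O(1)$ per query, and the answer reports $(v_1, \dots, v_s)$ exactly when $\{v_1, \dots, v_r\}$ is a hyperedge. This recovers every hyperedge, at cost $O(t^{r-s})$ per detected tuple, so $O(m\, t^{r-s})$ in total.

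\textbf{Step 3 (balancing and handling unknown $m$).} The total cost is $O(n^r t^{c-r} + m\,t^{r-s})$. Setting $t = (n^r/m)^{1/(2r-s-c)}$ equalizes the two terms and gives cost
\[
m \cdot (n^r/m)^{(r-s)/(2r-s-c)} = n^{r(1-\gamma)} m^{\gamma},
\]
with $\gamma = \frac{r-c}{2r-s-c}$ as claimed. We clamp $t$ to $[1,n]$; in the extreme regimes the bound is dominated by the trivial algorithms, which are consistent with the formula.

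Since $m$ is not known in advance, we run the procedure with guesses $m' = 2^j$ in increasing order. For each guess we abort Step 2 as soon as more than $m'$ $s$-tuples are detected, which certifies $m > m'$. Otherwise the guess succeeds and the output is complete. The costs form a geometric series dominated by the final guess, which is at most $2m$, so this loses only $O(\log n)$ factors. Alternatively, $m$ can be estimated first with $\polylog$ queries via the approximate-counting results for independent-set oracles cited earlier; the independent-set oracle is the $s=0$ case and can be simulated by checking whether the answer is nonempty. This is where the ``with high probability'' in the statement enters; the block algorithm itself is deterministic.

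\textbf{Main obstacle.} The delicate point is the charging argument in Step 1: the number of detected $s$-tuples, summed over block combinations, is at most $m$. This needs the block combinations to partition the $r$-tuples, which holds because $H$ is $r$-partite, and it needs each detected tuple to be witnessed inside its own query. Because the queries are induced on $Q$, these witnesses are exactly the hyperedges inside $Q$. Once that charging is in place, the remaining work is the exponent arithmetic above.
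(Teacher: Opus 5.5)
\textbf{Verdict.} Your proof is correct, and it takes a noticeably simpler route than the paper's.

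\textbf{How the paper argues.} The paper partitions only the $s$ specified parts into $h$ chunks up front. For each $s$-tuple of chunks, it views the oracle's answer for a fixed $s$-tuple of vertices as an independent-set oracle on the remaining $(r-s)$-uniform hypergraph and runs \cref{lem:hot-potato}. That lemma randomly partitions the other parts, uses subsampling and non-adaptive binary search to extract representative hyperedges, and builds a family $\mathcal{F}_X$ of sets $S$ of codegree $\widetilde{\Omega}(h^{r-s-|S|})$ covering every hyperedge through $X$. The paper then brute-forces completions over the whole remaining parts, charging $|V|^{r-s-|S|}$ work to $\widetilde{\Omega}(h^{r-s-|S|})$ hyperedges. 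For $r=3$, $s=2$ it instead estimates codegrees, splits pairs into popular and unpopular, and isolates unpopular pairs by random partitioning and parallel binary search.

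\textbf{How your route differs.} You partition all $r$ parts and confine the brute force to the block combination in which the tuple was detected. This makes the charging immediate and removes any need to isolate hyperedges, which is exactly the obstacle \cref{lem:hot-potato} is built to overcome when $r-s\ge 2$. With $t=|V|/h$, your two terms $|V|^r t^{c-r}$ and $|E|\,t^{r-s}$ are exactly the paper's $h^{r-c}|V|^c$ and $|E|(|V|/h)^{r-s}$. So you obtain the same bound, deterministically (using your doubling on $|E|$) and without the polylogarithmic losses of \cref{lem:hot-potato}. What the paper's machinery buys is reusable structure: codegree estimates, high-codegree families, and a standalone independent-set-model lemma, which feed the all-parts algorithm of \cref{thm:allparts-ub}. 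None of that is needed for this theorem.

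\textbf{A small correction.} Clamping at $t=|V|$ is not ``consistent with the formula.'' When $|E|<|V|^{s+c-r}$, the balancing choice of $t$ exceeds $|V|$, and your cost becomes $\Theta(|V|^c)$, which is strictly larger than $|V|^{r(1-\gamma)}|E|^{\gamma}$. However, $\Omega(|V|^c)$ is unavoidable for any algorithm. Every $r$-tuple must lie inside some query, and a query of size $q\le |V|$ covers at most $q^r$ tuples at cost $q^c$ with $c<r$. The paper's choice of $h$ also drops below $1$ in exactly this regime. So the statement should be read with an additive $|V|^c$ term there; this is a caveat on the statement, not a gap in your argument relative to the paper.
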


Then, in \cref{sec:lbs}, we show lower bounds:

\begin{restatable}{theorem}{specifiedpartslb}\label{thm:specifiedparts-lb}
    For any values of $s\leq c <r$, and any sufficiently large values of $|V|$ and $|E| < \frac{1}{2}\binom{|V|/r}{r}$, any algorithm that exactly recovers unknown $r$-partite, $r$-uniform hypergraphs $H = (V, E)$ with high probability using cost-$n^c$ specified-parts all-$s$ codegree queries must incur cost 
    $\widetilde{\Omega}\paren{|V|^{r(1-\gamma)}|E|^{\gamma}}$
    on some instances, where $\gamma = \max\paren{\frac{r-c}{2r-s-c}, \frac{r-c}{r-1}}$.
\end{restatable}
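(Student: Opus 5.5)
The plan is an information-theoretic adversary argument with two hard families, one for each term of the maximum defining $\gamma$. Write $N=|V|$ and $M=|E|$. The basic counting step is the same in both cases. Fix a family $\mathcal{H}$ of candidate hypergraphs. Any algorithm that recovers a uniformly random member with high probability must extract about $\log|\mathcal{H}|$ bits from its query answers. I then bound the information any single query can reveal by a quantity that depends on its size $q=|Q|$, and compare this with its cost $q^c$.

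\textbf{First term, $\gamma=\frac{r-c}{r-1}$.} Here I would use a ``planted sparse'' family. Choose a scale $t$ and partition $V$ into $(N/t)^r$ blocks of $r$-tuples of parts, each block of side $t$ in every part. In each block hide a random $r$-partite pattern with roughly $M(t/N)^r$ edges, chosen so that the specified-parts codegree of each $s$-tuple is $0$ or $1$ and no $s$-tuple lies in two hyperedges. The key local lemma is then: a query $Q$ of size $q$ learns essentially only edges lying entirely inside $Q$. Since $s<r$, an edge that is not contained in $Q$ contributes nothing to the answer, and an edge contained in $Q$ is revealed along with its constituent $s$-sets. So the information gained is at most about $\widetilde O(\min(M(q/N)^r,\,q^s))$, where the first bound is the expected number of edges inside a random-like $Q$ and the second is the output size. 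An adversary argument or a convexity argument then shows that the best trade-off is achieved at $q^s\approx M(q/N)^r$, or else at the uniform choice. Summing $q^c$ over queries until $M\log N$ bits are covered gives the bound $N^{r(1-\gamma)}M^{\gamma}$ with $\gamma=\frac{r-c}{r-1}$. This matches the earlier lower bound in \cref{thm:sims}, whose exponent has the same shape.

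\textbf{Second term, $\gamma=\frac{r-c}{2r-s-c}$.} This is where codegrees can be large, and I would use a ``dense cluster'' family. Take clusters of $\ell$ vertices per part. Inside each cluster hide $\approx M/(\#\text{clusters})$ edges, chosen so that every $s$-tuple in the cluster has codegree equal to a fixed value with probability $1-o(1)$; random designs or random subsets of a complete structure should do this. Answers then carry few bits per $s$-tuple, since the codegree distribution is concentrated. The lemma I need is that a query of size $q$ inside a cluster returns only $\widetilde O(q^s)$ words, whose entropy is small unless $Q$ isolates few edges. Balancing cluster density against isolating cost should give $\gamma=\frac{r-c}{2r-s-c}$, the same exponent that appears in the upper bound of \cref{thm:specifiedpartsub}. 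The constraint $M<\frac12\binom{N/r}{r}$ guarantees the family has entropy $\Omega(M\log(N^r/M))$.

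\textbf{Main obstacle.} The hardest step is the per-query information bound in the second family: showing that codegree answers on a query of size $q$ reveal only about $\widetilde O(q^s\cdot(\text{something small}))$ bits rather than $q^s\log q$ bits. My first attempt would be to condition on all edges outside $Q$ and bound the entropy of the codegree vector using the concentration of codegrees in a random design, then apply a chain rule across adaptively chosen queries.
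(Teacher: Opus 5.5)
There is a genuine gap: both of your hard families are easy to recover, so the per-query information bounds you need for them cannot hold.

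\textbf{The first family is easy.} Every specified-parts $s$-tuple has codegree $0$ or $1$, which already forces $|E|\le|V|^s$. The parallel binary search of \cref{lem:find-uniquers} keeps the $s$ specified parts whole and, for each bit position, restricts the other $r-s$ parts according to that bit. It recovers every edge with $O(\log|V|)$ queries of size $O(|V|)$, i.e.\ at cost $\softo{|V|^c}$. The bound you want from this family is $|V|^r(|E|/|V|^r)^{\frac{r-c}{r-1}}=|V|^c(|E|/|V|)^{\frac{r-c}{r-1}}$, which is polynomially larger once $|E|\gg|V|$. Your own estimate already shows this. With $M\le N^s$, the ratio $\min(M(q/N)^r,q^s)/q^c$ is maximized at $q=N$, where one query accounts for all $M$ edges. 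Your crossover $q^s\approx M(q/N)^r$ would in any case give exponent $\frac{r-c}{r-s}$, not $\frac{r-c}{r-1}$. The second item of \cref{thm:sims} concerns single-count queries and is not comparable.

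\textbf{The second family is also easy.} Homogeneous codegrees are exactly what random partitioning exploits. Take a block with $\ell$ vertices per part and edge density $\rho$, and split each part into $h\approx\ell\rho^{1/(r-s)}$ random chunks. Each specified $s$-tuple then sees $O(1)$ expected edges per tuple of chunks, so a fixed edge is alone with constant probability, and the same binary search finds it. The cost per block is $\softo{h^r(\ell/h)^c}=\softo{\ell^r\rho^{\frac{r-c}{r-s}}}$. Summed over blocks this is at most the single-block value $\softo{|V|^r\mu^{\frac{r-c}{r-s}}}$, where $\mu=|E|/|V|^r$ and $|E|\ge|V|^s$. Since $\frac{r-c}{r-s}>\frac{r-c}{2r-s-c}$, this is polynomially below what you hope this family yields. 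For $s\ge2$ it is also below the theorem's bound: for $r=3$, $s=2$, $c=3-\eps$ and $|E|=n^\tau$ it costs $n^{3-(3-\tau)\eps}$ versus $n^{3-(3-\tau)\eps/(1+\eps)}$. This is the paper's remark that exact recovery is easy for Erd\H{o}s--R\'enyi hypergraphs. Subsampled queries destroy codegree concentration, so ``concentrated answers carry few bits'' is no obstruction.

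\textbf{The missing idea is heterogeneous codegrees on a sparse, spread-out support.} The paper uses a single distribution. It fixes a publicly known, quasirandom set of admissible $s$-tuples in the specified parts, each included with probability $p$. Each $r$-tuple extending an admissible tuple is then an edge independently with probability $|E||V|^{-r}p^{-1}$. Admissible tuples therefore have large codegree and must be isolated by fine partitioning. Yet by \cref{lem:hypergraph-nodense} a size-$q$ query contains only $\softo{q^sp+q}$ of them. This is precisely the property dense clusters violate, since one cluster-sized query contains all of a cluster's tuples.

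The paper combines this with the thresholding relaxation of \cref{lem:cant-use-is}: reveal an $r$-tuple once its posterior probability exceeds $2|E||V|^{-r}p^{-1}$. Then each admissible tuple in a size-$q$ query contributes $\softo{\min(1,q^{r-s}|E||V|^{-r}p^{-1})}$ bits. Summing, optimizing over $q$ by convexity, and comparing with the $\Omega(|E|)$ bits required gives the bound.

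Finally, because $|E|/|V|^r<1$, a larger $\gamma$ gives a smaller bound, so the $\max$ in $\gamma$ is the weaker of two expressions. It does not call for one family per term. It records how the single parameter $p$ must be balanced against different query sizes in this one construction. The term $\frac{r-c}{r-1}$ comes from mid-sized queries containing only about $q$ admissible tuples, i.e.\ the additive term of \cref{lem:hypergraph-nodense}, which is the binding constraint when $c\le r-s+1$.
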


\begin{restatable}{theorem}{allpartslb}\label{thm:allparts-lb}
    For any values of $r$, $s<r$, and any sufficiently large values of $|V|$ and $|E| < \frac{1}{2}\binom{|V|/r}{r}$, any algorithm that exactly recovers unknown $r$-partite, $r$-uniform hypergraphs $H = (V, E)$ with high probability using cost-$n^c$ all-$s$ codegree queries must incur cost 
    $\widetilde{\Omega}\paren{|V|^{r(1-\gamma)}|E|^{\gamma}}$
    on some instances, where $\gamma = \max\paren{\frac{\binom{r}{s}(r-c)}{\binom{r}{s}(r-s) + (r-c)}, 1 - \frac{c-1}{(r-s) + (s-1)\binom{r}{s}}}$.
\end{restatable}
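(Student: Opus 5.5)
We plan an information-theoretic argument via Yao's principle, with one hard distribution for each term in the maximum defining $\gamma$; the stated bound is then the better of the two. Let $n=|V|$, $m=|E|$ and $p = m/(n/r)^r$. The first distribution is a uniformly random $r$-partite hypergraph with exactly $m$ hyperedges. Its entropy is $\Omega(m\log(1/p))$ bits, so any exact-recovery algorithm must, in expectation, extract that much information from the oracle answers.

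The core step is bounding the information a single query $(Q,\Rreplace)$ with $|Q|=q$ can reveal, charged against its cost $q^c$. To handle adaptivity we strengthen the oracle, using an adversary that answers generously.
\begin{enumerate}
\item The answer consists of $\binom{q}{s}$ codegrees, so it carries at most $\softo{q^s}$ bits.
\item If $\Rreplace$ contains few candidate tuples that are not yet determined, the adversary simply reveals $\Rreplace\cap E$. This costs at most $\softo{\mathbb{E}|\Rreplace\cap E|}$ bits.
\end{enumerate}
The difficulty is that, after earlier answers, the posterior density inside $Q$ need not be $p$. An $s$-set whose codegree was learned to be positive forces a hyperedge among its $\binom{r-1}{\,\cdot\,}$-sized link. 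I would therefore track, for each $s$-set, how much of its link remains undetermined. A hyperedge is only pinned down once information has arrived through enough of its $\binom{r}{s}$ constituent $s$-sets, and we amortize the information gained across these $\binom{r}{s}$ sub-tuples. Optimizing the resulting per-bit cost $q^c/\min\bigl(q^s,\ \text{posterior-weighted } q^r p\bigr)$ over $q$ should yield a total cost of $n^r p^{\gamma_1}$ with $\gamma_1=\frac{\binom{r}{s}(r-c)}{\binom{r}{s}(r-s)+(r-c)}$. The $\binom{r}{s}$ factor is exactly the loss from this amortization, which is why we obtain this exponent rather than the naive $(r-c)/(r-s)$.

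For the second term $1-\frac{c-1}{(r-s)+(s-1)\binom{r}{s}}$, we would use a structured distribution rather than a uniform one. We plant the $m$ hyperedges so that a fixed set of $s-1$ coordinates ranges over a small block of vertices, while the remaining $r-s+1$ coordinates are random in full parts. This makes the codegree answers on $s$-sets nearly uninformative: any $s$-set meeting the small block has large codegree regardless of the unknown part. Recovery then essentially reduces to learning a random sparse pattern with effectively $1$-dimensional useful answers per query, which gives the $c-1$ in the numerator. We choose the block size to balance the $(r-s)$ free dimensions against the $(s-1)\binom{r}{s}$ loss, and rerun the same per-query entropy accounting.

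I expect the main obstacle to be the adaptivity and posterior-density control in the first argument. Specifically, we must show that no sequence of queries can concentrate the residual uncertainty so that later cheap queries become highly informative. My first attempt would be a potential function: $\sum_{T}\log(\text{number of consistent completions of the link of }T)$ over $s$-sets $T$. We would show that each query of cost $q^c$ decreases it by at most $\softo{q^c\cdot n^{-r}p^{-\gamma_1}}\cdot m\log(1/p)$. The hypotheses $|E|<\frac12\binom{|V|/r}{r}$ and ``sufficiently large'' are used to ensure $\log(1/p)=\Omega(1)$ and that the concentration bounds hold.
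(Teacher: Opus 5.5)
Your proposal has a genuine gap: neither of your two hard distributions is hard enough. Take $r=3$, $s=2$, $c=2$ and $|E|=\Theta(|V|^2)$. Both terms of the max then equal $3/4$, so the theorem demands cost $\widetilde{\Omega}(|V|^{9/4})$.

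But in a uniformly random $3$-partite hypergraph with $\Theta(n^2)$ edges, every pair has codegree $O(\log n)$ w.h.p. So \cref{lem:find-uniquers} recovers all hyperedges at cost $\softo{n^c}=\softo{n^2}$. The same happens for every $c<7/3$, where the required exponent $(7+c)/4$ exceeds $c$.

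More generally, optimizing $q^c/\min(q^s,q^rp)$ on the uniform distribution yields only the naive bound $n^rp^{(r-c)/(r-s)}$. That bound is essentially achievable: chunk the parts into pieces of size $p^{-1/(r-s)}$ and isolate unique hyperedges. Since $p<1$ and $\gamma_1<(r-c)/(r-s)$, the bound $n^rp^{\gamma_1}$ you aim for is \emph{stronger} than the naive one. So it cannot come from an amortization ``loss'' over the $\binom{r}{s}$ sub-tuples, because a loss in the accounting can only weaken a lower bound.

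Your block distribution fails on the same example. With one coordinate confined to a block and the other two free, the pairs in the two free parts have codegree about $|E|/n^2=O(1)$. The same isolation attack then recovers everything at cost $\softo{n^2}$. With all-parts queries, \emph{every} $s$-subset of every hyperedge must have large codegree, and neither of your distributions ensures this.

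The missing idea is the paper's clustered distribution.
\begin{enumerate}
\item Fix and reveal a sparse quasirandom family of admissible $s$-tuples: each $s$-tuple, in every one of the $\binom{r}{s}$ part-combinations, is admissible with probability $p$.
\item Call an $r$-tuple admissible if all its $s$-subsets are admissible.
\item Include each admissible $r$-tuple independently with probability $t=|E||V|^{-r}p^{-\binom{r}{s}}$.
\end{enumerate}
The factor $\binom{r}{s}$ enters through the candidate count $|V|^rp^{\binom{r}{s}}$, which inflates $t$ and all codegrees and so defeats isolation.

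Adaptivity is handled not by a potential function but by revealing any $r$-tuple whose posterior probability exceeds $2t$. After that, a query $Q$ can carry information only through the admissible $s$-tuples it contains. By \cref{lem:hypergraph-nodense} there are at most $\softo{|Q|^sp+|Q|}$ of these, each contributing at most $\softo{\min(1,|Q|^{r-s}t)}$ bits, and convexity over $|Q|$ finishes the argument.

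The two terms of $\gamma$ are two regimes of this single optimization; the second comes from the $+|Q|$ term, not from a separate distribution. Also, since $|E|<|V|^r$, a larger $\gamma$ gives a \emph{smaller} bound. The max in $\gamma$ therefore selects the weaker of the two expressions, not the better one.
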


Observe that the upper bound of \cref{thm:specifiedpartsub} is tight with the lower bound of \cref{thm:specifiedparts-lb} in our motivating case of $r = 3$, $s=2$. (In fact, it is tight whenever $c \geq r - s + 1$.) However, it remains far from the lower bound of \cref{thm:allparts-lb} in the case where we get this containment information for every pair of parts. While we do not know how to match \cref{thm:allparts-lb} even in the case of $r=3$, $s=2$, we do demonstrate that this all-parts information can grant additional power, showing the following improvement over \cref{thm:specifiedpartsub}:

\begin{restatable}{theorem}{allpartsub}\label{thm:allparts-ub}
    Given cost-$n^c$ all-pairs containment oracle access to an unknown $3$-partite, $3$-uniform hypergraph $H = (V, E)$, we can exactly recover $H$ with high probability at cost 
    \[\softo{|V|^{\frac{-1 + \sqrt{9 + 4\eps^2}}{2} - \eps}|E| + |V|^{\frac{3 + \sqrt{9 + 4\eps^2}}{2} - \eps}},\]
    where $\eps = 3-c$.
\end{restatable}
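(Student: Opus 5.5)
The plan is a ``sample, triangulate, verify'' algorithm. Its parameters will be balanced at the end, and the square root in the exponent should come out of that balancing. The idea is to partition each of the three parts $X_1, X_2, X_3$ into $n/t$ random blocks of size $t$, where $n = |V|$. For each of the $(n/t)^3$ triples of blocks $(B_1, B_2, B_3)$ we issue one induced all-pairs containment query on $Q = B_1 \cup B_2 \cup B_3$, at cost $t^c$. Every hyperedge lies in exactly one such $Q$. We repeat with $O(\log n)$ independent partitions for the concentration arguments below.

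Inside a single query, the oracle returns a ``flagged'' tripartite graph $F_Q$ on $Q$ whose edges are the pairs contained in a hyperedge of $H[Q]$. Every hyperedge of $H[Q]$ is a triangle of $F_Q$, so the candidate set for $Q$ is the set of triangles of $F_Q$. We list these triangles and verify each one with a single constant-cost query on its three vertices. The total cost is therefore
\[
\softo{(n/t)^3\, t^c + \textstyle\sum_Q \#\mathrm{tri}(F_Q)}.
\]
The first term is $n^3 t^{c-3}$. The work lies in bounding the second term, the total number of spurious triangles, as a function of $t$ and $m = |E|$.

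A spurious triangle $(x,y,z)$ in $F_Q$ is witnessed by up to three distinct hyperedges of $H[Q]$: one containing each of its pairs. These are $(x,y,z_1)$, $(y,z,x_1)$ and $(x,z,y_1)$, with the third vertices distinct from $z$, $x$ and $y$ respectively. I would split hyperedges by the maximum codegree of their pairs, taking a threshold $d$.

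Pairs of codegree larger than $d$ are heavy. There are at most $3m/d$ of them, and they can be handled by a direct procedure. For a heavy pair $(x,y)$, find its neighbourhood in the third part by all-pairs queries on $\{x,y\} \cup S$ for blocks $S$ of size about $n^{c/2}$, followed by binary splitting. I expect this step to produce the additive $n^{\frac{3+\sqrt{9+4\eps^2}}{2}-\eps}$ term.

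For light configurations, I would bound the expected number of spurious triangles summed over all $Q$ by $\sum_{\text{config}} \Pr[\text{all } 4\text{--}6 \text{ vertices land in the same } Q]$. A spurious triangle together with its three witnesses involves six vertices, two in each part, and the probability that they all fall into one block triple is about $(t/n)^{3}$ beyond the triangle's own vertices. The number of such configurations is at most about $m \cdot d^2$ under the codegree cap. Configurations in which witnesses coincide, for example a single hyperedge supplying two of the pairs, have to be handled separately, with a similar count.

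This gives an expected spurious-triangle cost of roughly $m d^2 (t/n)^3 \cdot (n/t)^3$-type terms, each needing care, plus true triangles costing $m$. These terms must then be balanced against $n^3 t^{c-3}$ and the heavy-pair cost. The two constraints, from $t$ and from $d$, are what should produce a quadratic equation in the exponent of $t$, which is where $\sqrt{9+4\eps^2}$ arises. As a sanity check, at $\eps=0$ the bound reduces to $n|E| + n^3$, which matches the trivial strategy.

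The main obstacle is the spurious-triangle count. A naive count, in which $F_Q$ has about $m t^3/n^3$ edges and hence at most about $(m t^3/n^3)^{3/2}$ triangles, is too weak to beat \cref{thm:specifiedpartsub}. The all-parts information helps only because a false triangle needs witnesses in all three pairs of parts simultaneously, so the proof must exploit that correlation.

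My first attempt would be the codegree-threshold split described above, computing the expectation exactly over the random partition. If the exponents do not close, I would add a second level of recursion: rather than verifying candidates one at a time, re-query the subgraph of $F_Q$ spanned by the candidates on a smaller vertex set, so that spurious candidates are pruned geometrically.
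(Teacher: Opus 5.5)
Your proposal leaves its main step open, and the one-level block scheme it describes cannot reach the stated bound.

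\textbf{The configuration count is wrong.} The claim of ``at most about $m d^2$'' configurations fails because a codegree cap does not bound how many $z$ make both $(x,z)$ and $(y,z)$ covered pairs. In the Latin-square hypergraph $\{(i,j,i+j \bmod n)\}$ every pair has codegree $1$, yet the covered pairs form $K_{n,n,n}$, which has $n^3$ triangles.

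\textbf{The correct count is the ``naive'' one, and it is tight.} For a triple $xyz$, the three flag events depend on the partitions of three different parts. So the expected number of flagged spurious triangles is at most $\sum_{xyz}\prod_{e\subset xyz}\min(1,\mathrm{codeg}(e)\,t/n)$. The inequality $\mathrm{tr}(ABC)\le\|A\|_F\|B\|_F\|C\|_F$ bounds this by $O((|E|t/n)^{3/2})$, which is exactly the bound you set aside. It is attained: blow up a Latin square by groups of size $b=\widetilde\Theta(n/t)$ on $n/\sqrt b$ vertices per part. This gives $|E|=n^2$, all codegrees equal to $b$, and $\widetilde\Theta((nt)^{3/2})$ spurious triangles flagged in every partition.

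\textbf{Heavy pairs do not rescue it.} Your heavy routine spends at least $\Theta(n)$ per pair. A query on $\{x,y\}\cup S$ already reveals $N(x,y)\cap S$ through the flags on the pairs $(x,z)$, but it costs $|S|^c$, so blocks of size $n^{c/2}$ are far worse than constant-size ones. Declaring the pairs in this example heavy therefore costs $\widetilde\Theta(n^2t)$ in total.

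\textbf{Consequence.} At $|E|=n^2$, every choice of $t,d$ costs at least $n^3t^{-\eps}+(nt)^{3/2}$ up to polylogarithmic factors, i.e.\ at least $n^{3-3\eps/(3+2\eps)}$ ($n^{2.4}$ at $c=2$). The statement claims $n^{2+x}$ with $x=\frac{-1+\sqrt{9+4\eps^2}}{2}-\eps$ ($n^{2.303}$ at $c=2$), and $2+x<3-\frac{3\eps}{3+2\eps}$ for every $\eps>0$. Moreover, balancing finitely many monomial costs in $t$ and $d$ is a linear system in their exponents, so it only produces exponents rational in $\eps$; $\sqrt{9+4\eps^2}$ cannot arise that way. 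Done correctly, your scheme does prove a weaker valid bound: at $|E|=n^2$ it gives $\widetilde O(n^{3-3\eps/(3+2\eps)})$, which already beats \cref{thm:specifiedpartsub}.

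\textbf{The missing idea} is a recursion driven by the explicitly known graph of high-codegree pairs. The paper proceeds as follows.
\begin{enumerate}
\item Low-codegree pairs are not handled by verifying candidate triangles. Instead, random partitioning plus parallel binary search isolates each such pair's hyperedge inside its block triple (\cref{lem:find-kinda-uniquers}). This costs $\widetilde O(n^{3-\eps+\eps\alpha})$ and produces no spurious candidates.
\item The popular pairs form a known graph $G'$ with at most $m/n^{\alpha}$ edges between each pair of parts, where $m=|E|$.
\item The at most $n/10$ vertices of $I$ with high popular degree are handled by four recursive calls on half-size instances.
\item While some $i$ lies in at least $n^\beta$ triangles of $G'$, the algorithm recurses on $G[I'_\ell\cup N(i)]$. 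This instance has $O(m/n^{1+\alpha})$ vertices because $i$ has low popular degree. It then deletes the pairs of $G'[N(i)]$ from all later queries, which requires non-induced queries.
\item Only the at most $n^{1+\beta}$ remaining triangles are checked individually.
\end{enumerate}
This gives the recurrence $T(n,m)\le n^{2-\beta}T(m/n^{1+\alpha},mn^\beta/n^2)+4T(n/2,m/4)+\widetilde O(n^{3-\eps+\eps\alpha})+n^{1+\beta}$. Under the ansatz $T=K(mn^x+n^{2+x})$, the recursive term contributes $n^{2-\beta}(m/n^{1+\alpha})^{2+x}$, whose exponent multiplies $x$ by $\alpha$. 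Since $\alpha$ is itself linear in $x$ (it is set by balancing $n^{3-\eps+\eps\alpha}$ against $mn^x$), this self-similarity yields a quadratic in $x$, and hence $\sqrt{9+4\eps^2}$.

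Your fallback of re-querying candidate subgraphs on smaller vertex sets does not supply this. In the blow-up example the candidates span every vertex, so such queries do not shrink.
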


We note that, while our other results are fully algorithmic in addition to information theoretic (that is, the computational overhead of the algorithm is always upper bounded by the cost of its queries), this algorithm further requires listing many triangles in an explicitly-known graph, which requires no query cost but introduces an additional computational overhead not accounted for in this query cost. We next give an upper bound that also considers computational costs. See \cref{fig:tikzplot} for an illustration of these bounds in the case that $|E| = |V|^2$. 

\begin{restatable}{theorem}{allpartsubcomp}\label{thm:allparts-ub-comp}
    Assuming $\omega = 2$ (i.e. that integer matrix multiplication is solvable in $\softo{n^2}$ time), given cost-$n^c$ all-pairs containment oracle access to an unknown $3$-partite, $3$-uniform hypergraph $H = (V, E)$, we can exactly recover $H$ with high probability at computational cost 
    \[\softo{|V|^{\frac{-2-5\eps + \sqrt{36+12\eps+25\eps^2}}{4}}|E| + |V|^{\frac{6-5\eps + \sqrt{36+12\eps+25\eps^2}}{4}}},\]
    where $\eps = 3-c$.
\end{restatable}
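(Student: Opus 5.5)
The plan is to re-run the algorithm behind \cref{thm:allparts-ub}, but to charge honestly for the step that was free in the query model: listing the triangles of the explicitly known ``active pair'' graph. Assuming $\omega=2$, I will use the output-sensitive triangle listing algorithm of Björklund, Pagh, Vassilevska Williams and Zwick. With $\omega = 2$ it lists all $t$ triangles of an $N$-vertex graph in time $\softo{N^2 + N t^{2/3}}$, and it can be run locally on each block. The parameters are then re-optimised with this extra term in the objective, and that optimisation produces the new square-root exponent.

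\textbf{Step 1 (block queries).} Randomly split each of the three parts into blocks of size $b = n^{\beta}$. For every triple of blocks, issue one all-pairs containment query at cost $\softo{b^c}$, so the total is $\softo{(n/b)^3 b^c}$. For each block triple this returns the set $P$ of pairs that lie in some hyperedge of that triple. Every hyperedge appears as a triangle of $P$ inside its own block triple. Conversely, every triangle of $P$ that is not a hyperedge consists of three pairs each covered by some hyperedge, so it is witnessed by a small configuration of hyperedges. Using the random partition, I will bound the expected number $t$ of such spurious triangles by a function of $m=|E|$, $n$ and $b$. The bound should be about $m^{3/2}$ in the worst case, thinned by the probability that the witnessing hyperedges land in the same block triple. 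This is the same counting already used for \cref{thm:allparts-ub}.

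\textbf{Step 2 (listing and filtering).} Within each block triple, list all triangles of $P$ with the fast listing algorithm. Summed over the $(n/b)^3$ block triples, this costs $\softo{(n/b)^3 b^2 + \sum b\, t_{\mathrm{blk}}^{2/3}}$, and by concavity the second term is at most $b\,(n/b)^{3\cdot 1/3} t^{2/3}$. Next, decide which candidates are real hyperedges. I will do this as in \cref{thm:allparts-ub}: recursively re-query with finer random sub-blocks restricted to the candidate pairs, so that each level keeps only candidates whose three pairs remain active. Alternatively, when the candidate count is small, I will check candidates in groups. Each level gives another query-cost/listing-cost trade-off.

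\textbf{Step 3 (balancing).} The total cost is a sum of three kinds of term, each of the form $n^{a}m$ or $n^{a}$: the query term $(n/b)^3 b^{c}$, the $\omega=2$ matrix term $n^3/b$, and the listing term in $t$. Setting $c = 3-\eps$ and equating the dominant terms gives a quadratic in $\beta$. Its root should be $\beta$ with discriminant $36+12\eps+25\eps^2$, and substituting back yields the claimed exponents $\frac{-2-5\eps+\sqrt{\cdot}}{4}$ on $m$ and $\frac{6-5\eps+\sqrt{\cdot}}{4}$ for the additive term. The high-probability guarantee follows by repeating with fresh random partitions $O(\log n)$ times.

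The main obstacle is Step 1's bound on spurious triangles combined with the filtering recursion. The listing cost depends on $t^{2/3}$, not on $t$ alone, so the quadratic is only clean if the bound on $t$ at each recursion level has the right shape. If the direct bound is too loose, my first fallback is to split candidates by the degree of their pairs: high-degree pairs are handled by the $N^2$ matrix-multiplication term, and low-degree pairs by enumeration.
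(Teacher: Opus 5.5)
Your proposal has a genuine gap, and it is the one you flagged yourself. The bound on spurious triangles in Step~1 is never proved, and it is false in the form you need. The proof of \cref{thm:allparts-ub} also contains no such counting to borrow: it never lists triangles of a block-partitioned ``covered pairs'' graph.

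In fact no choice of $b$ lets Steps~1--2 reach the claimed bound. Here is an instance. Take a cluster $A\times B\times C$ with $|A|=|B|=|C|=L$, identify each side with $\mathbb{Z}_L$, and let $E=\{(i,j,k): i+j+k\in S\}$ for a set $S$ of size $d\approx (n/b)\log n$. Then:
\begin{enumerate}
\item Every cluster pair has codegree exactly $d$, so $|E|=L^2d$. With high probability the witnesses of every cluster pair meet every block of the third part.
\item Hence every cluster triple has all three of its pairs covered inside its own block triple, and all $L^3$ cluster triples become triangles of $P$, although only $L^2d$ of them are hyperedges.
\item Taking $L\approx\sqrt{nb}$ gives $|E|\approx n^2$ up to logarithms, but about $(nb)^{3/2}$ candidates, all of which your Step~2 must at least output.
\item Together with the query cost $n^3b^{-\eps}$, your total is at least $\max(n^3b^{-\eps},(nb)^{3/2})\ge n^{3-3\eps/(3+2\eps)}$. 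For $b$ below roughly $n^{1/3}$, where the cluster does not fit, the query term alone is already too large.
\end{enumerate}
The stated $x=\frac{-2-5\eps+\sqrt{36+12\eps+25\eps^2}}{4}$ is the positive root of $2x^2+(2+5\eps)x-(4-\eps)$. This quadratic evaluated at $(3-\eps)/(3+2\eps)$ equals $3\eps^2(3-2\eps)/(3+2\eps)^2>0$. So $2+x<3-3\eps/(3+2\eps)$ for every $\eps\in(0,1]$; for example, the target is $n^{2.386}$ while your approach needs $n^{2.4}$ at $\eps=1$. Your later filtering levels therefore cannot rescue the bound. Step~3 asserts the discriminant $36+12\eps+25\eps^2$ rather than deriving it, and no valid cost expression lies behind it.

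The missing idea is how the paper handles high-codegree pairs: it exploits the clustering of the explicitly known graph $G'$ of popular pairs. The algorithm runs as follows.
\begin{enumerate}
\item Estimate all codegrees (\cref{lem:estimate}).
\item Recover every hyperedge through a pair of codegree below $n^\alpha$ by isolation (\cref{lem:find-kinda-uniquers}), at cost $\softo{n^{3-\eps+\eps\alpha}}$.
\item Recurse on fashionable vertices.
\item While some $i\in I'$ lies in at least $n^\beta$ triangles of $G'$, recurse on $G[I'_\ell\cup N(i)]$ for every chunk $I'_\ell$, then delete $G'[N(i)]$. A cluster like the one above is thereby cleared in one batch instead of being enumerated triple by triple.
\item List only the leftover triangles of $G'$ (each $i$ now lies in fewer than $n^\beta$ of them) with \cref{lem:list-triangles}, at cost $\softo{n^2+n^{5/3+2\beta/3}}$. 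The per-vertex triangle counts that drive the loop are computed by matrix multiplication under $\omega=2$.
\end{enumerate}
The exponent then comes from the recursive term $n^{2-\beta}T(t/n^{1+\alpha},tn^\beta/n^2)$ together with the inductive guess $T(n,t)\le K(tn^x+n^{2+x})$. At $|E|=n^2$, choose $\alpha=(x-1+\eps)/\eps$ and $\beta=(3x+1)/2$ so that the isolation cost and the final listing cost equal $n^{2+x}$. The constraint coming from the recursive term then becomes exactly $2x^2+(2+5\eps)x-(4-\eps)\ge 0$, and that is the source of the discriminant.
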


\begin{figure}[h]
    \centering
\begin{tikzpicture}
\begin{axis}[
    xmin=2, xmax=3,
    ymin=2, ymax=3,
    domain=2:3,
    samples=500,
    xlabel={Exponent on $n$ for all-pairs detector},
    ylabel={Exponent on $n$ for lister},
    legend style={at={(0.5,-0.20)},anchor=north}
]

\addplot[teal]
{ 3 - (3 - x)/(4 - x) };
\addlegendentry{Upper bound from \cref{thm:specifiedpartsub}}

\addplot[green]
{ (6 - 5*(3-x) + sqrt(36 + 12*(3-x) + 25*(3-x)^2))/4 };
\addlegendentry{Upper bound from \cref{thm:allparts-ub-comp} (assuming $\omega=2$)}

\addplot[orange]
{ (3 + sqrt(9 + 4*(3 - x)^2))/2 - (3 - x) };
\addlegendentry{Upper bound from \cref{thm:allparts-ub} (information theoretic)}

\addplot[purple]
{ 3 - 3*(3 - x)/(6 - x) };%{ (21 - 6*x)/(10 - 3*x) };
\addlegendentry{Lower bound from \cref{thm:allparts-lb}}

\end{axis}
\end{tikzpicture}

    \caption{The cost of exactly recovering an $n$-vertex, $n^2$-edge, $3$-uniform hypergraph, in terms of the cost of an all-pairs detection oracle. The teal line is unconditional, the green line requires standard integer matrix multiplication to be solvable in time $\softo{n^2}$, and the orange line considers only query cost and not the cost of auxiliary computation.}
    \label{fig:tikzplot}
\end{figure}

\subsection{Interpretation in terms of black-box fine-grained reductions}

Studying the power of all-$2$ containment oracles in $3$-uniform hypergraphs is a very closely related pursuit to studying the power of black-box reductions to all-edge $R$-triangle detection. 
Hence, our results in \cref{sec:finegrained} can be viewed as \emph{moral} corollaries of these query complexity upper and lower bounds. 
However, there are some incomparable details of the models, meaning that our $R$-triangle results cannot quite be written as direct consequences as the results in this section (nor vice-versa) --- instead, we note where appropriate how to modify the proofs to handle those details.\\

For one, an upper bound on query complexity is applicable in fine-grained reductions only if the query algorithm furthermore incurs very little \emph{computational} overhead.
This is the case for most of our above results, but not for \cref{thm:allparts-ub}, which is why we have to use the quantitatively weaker \cref{thm:allparts-ub-comp} for our fine-grained reductions.\\

For another, a lower bound on all-$2$ containment query cost does not quite suffice to rule out all types of black-box reduction. 
If we had a black-box reduction to all-edge $R$-triangle detection which \emph{only ever called the all-edge $R$-triangle algorithm on subgraphs of the original graph}, then this would directly correspond to an algorithm in the all-$2$ containment query model for unknown hypergraphs\footnote{Because, as noted, if all edge weights are distinct, $R$ can encode an arbitrary hypergraph, and each call made in the black-box reduction can be expressed as a query to the all-$2$ containment oracle. Note that if the black-box reduction call is made to an induced subgraph, this gives an induced all-$2$ containment query, but if the query deletes edges we must remove all triples involving those edges from consideration in the all-$2$ containment query.}.
However, an additional approach such reductions can take is to \emph{rearrange} or \emph{reweight} edges, calling the $R$-triangle detection oracle on other constructed graphs that don't appear as a subgraph of the input. 
Hence, in order to translate our results on all-$2$ containment queries to lower bounds against arbitrary black-box reductions, we have to show that such queries do not afford any additional power beyond the ability to query subgraphs. \\

Finally, note that even if computational overheads are small, upper bounds in the all-$2$ containment query model need not be implementable in $R$-triangle reductions. The reason is that the all-$2$ containment oracle model allows excluding an arbitrary collection of candidate triples from consideration, whereas in the $R$-triangle setting we can only exclude triples from consideration by deleting some composite edge. We are largely able to avoid worrying about this issue, since most of our upper bounds hold in the \emph{induced} all-$2$ containment model.

\section{Technical overview}\label{sec:overview}

\subsection{Proof ideas}

Here, let us briefly outline the most important technical ideas in our main results.\\

\textbf{Query complexity lower bounds.} At a high level, our lower bounds on query cost in the unknown hypergraph model take the following structure. We fix a distribution over hypergraphs such that the distribution over the correct output of the algorithm has high entropy. Then, we consider some relaxation of the query model --- i.e. some nicely-behaved random variables which provide strictly more information than the query responses. We show that no deterministic (adaptive) algorithm with low total query cost can cause that collection of random variables to have sufficiently high mutual information with the distribution of correct outputs. By Yao's principle, this implies that even randomized algorithms must fail with high probability.\\

For the case of computing all-$s$ codegree using hyperedge-counting queries, our distribution is simply a uniform random hypergraph (i.e. each hyperedge present with independent probability $1/2$). Here, we define a relaxation of the query model which ensures that the set of hyperedges involving each $(r-s)$-tuple of vertices in the last $(r-s)$ parts of the hypergraph all remain independent random variables conditional on the results of any number of queries. Then we show that, for any low-cost query algorithm, some such $(r-s)$-tuple must have high conditional entropy even given all of the query results.\\

For the case of computing all-$s$ containment, we instead consider a random hypergraph with exactly $n^s$ edges. We relax the queries to be ``thresholded'', in that whenever the conditional distribution on hypergraphs given all previous queries places sufficiently high probability on a given hyperedge being present, we reveal whether that hyperedge is present. This allows us to argue that small queries always reveal very little information, because they are very unlikely to involve any unrevealed hyperedges. Bounding the information returned from larger queries by $O(\log(n))$ bits, we show that the optimal size and number of queries matches the reduction of \cite{williams2010subcubic}. \\

For lower bounds on solving exact recovery with all-$s$ codegree queries, we consider a slightly more complicated distribution. We first fix a small number of random ``admissible'' $s$-tuples of vertices for each $s$-tuple of vertex parts. Then, we place each hyperedge uniformly at random conditional on intersecting each $s$-tuple of vertex parts in an ``admissible'' $s$-tuple. This produces more tightly-clustered hypergraphs (which is necessary because exact recovery is easy in the Erd\H{o}s--R\'enyi case). Our analysis looks similar to the all-$s$ containment to hyperedge-counting lower bound, except that we run it in parallel for the neighbourhood of every admissible $s$-tuple, and we further use the fact that no large query can contain a high density of admissible $s$-tuples.\\

\textbf{Lower bounds for black-box $R$-triangle reductions.} These query cost lower bounds in the unknown hypergraph model immediately translate into lower bounds against black-box \emph{subgraph} reductions computing all-edge $R$-triangle counting, all-edge $R$-triangle detection, and $R$-triangle listing, respectively. It is straightforward to modify the first two lower bound proofs to rule out arbitrary black-box reductions (i.e. where calls to the reduction can be arbitrary weighted graphs, not necessarily subgraphs of the input). However, relaxing the subgraph condition in the listing case requires more work, since the lower bound crucially used the subgraph condition to argue that no query contained too many admissible pairs. Using a combinatorial lemma about Ruzsa--Szemer\'edi-type induced packings, we are able to argue that no arrangement of weighted edges can contain many candidate triangles without placing a large fraction of edges in triangles already known to be relation-satisfying (a detection query reveals no further information for those edges). This allows us to weaken the subgraph restriction to the more mild restriction that no edge weight appears with higher multiplicity in a call of the reduction than in the original graph. (That is, we permit reductions to rearrange edges, just not duplicate them.) \\

\textbf{Upper bounds for exact recovery.} We'll describe the case $r=3$, $s=2$, as this is both the simplest nontrivial case and the one most relevant to our black-box reduction results. The basic idea is as follows: we first use random sampling and our all-pairs detection oracle to compute estimates of the codegree of each pair of vertices. For a pair whose codegree is sufficiently high, we make $O(n)$ constant-size queries to determine all hyperedges involving that pair. To handle pairs of lower codegrees, we randomly partition the hypergraph into vertex subsets such that, for every sufficiently low-codegree pair of vertices, for every hyperedge involving them, there is constant probability that this is the only hyperedge involving them in its triple of vertex subsets. Then, for each triple of vertex subsets, we run an appropriate ``parallel binary search'' to recover the identity of each such unique hyperedge. (A similar argument works for higher uniformities, although there are several complications that arise. For instance, we can no longer afford to partition finely enough to fully isolate hyperedges in their tuple of vertex subsets --- instead, we use the partitioning to find in parallel a collection of high-codegree vertex subsets, and argue that every hyperedge will be found with constant probability by brute-forcing over all $r$-tuples involving those subsets.)\\

This algorithm yields a tight reduction in the case that our all-pairs oracle only returns results between a single pair of vertex parts. However, we show that it is possible to substantially improve on this approach given a full all-pairs oracle.
We handle low-codegree pairs in a similar way.
But now, instead of handling each high-codegree pair separately, we make use of the graph structure on the high-codegree pairs.
To handle the vertices involved in a large number of these high-codegree pairs, we make several recursive calls to the algorithm on subhypergraphs.
For the remaining vertices, we observe that the graph defined by the high-codegree pairs must be highly clustered, in the sense that it contains a much larger number of triangles than a random graph of the same max-degree would.
Such graphs are highly structured --- in particular, one can capture a large number of the triangles in a small collection of vertex subsets~\cite{gupta2014decompositions}.
We use these ideas to iteratively process all high-codegree pairs in batches at less cost than it would take to brute-force over each of them individually.

\subsection{Structure of the paper}

In \cref{sec:ezstories}, we prove \cref{thm:sims}, which implies our lower bounds on reducing all-edge $R$-triangle detection and all-edge $R$-triangle counting to $R$-triangle counting. Then, in \cref{sec:triangles}, we give upper bounds for the exact recovery problem in the special case of $r=3$, $s=2$, which is the one relevant for triangle detection. In \cref{sec:hypergraphs}, we generalize those results to hypergraphs of arbitrary uniformity. Then, in \cref{sec:lbs}, we show lower bounds on exact recovery (in generality). Finally, in \cref{sec:duplicitousness}, we show how to translate these exact recovery lower bounds into lower bounds on black-box reductions from $R$-triangle listing to all-edge $R$-triangle detection.

\subsection{Background and related work}
There is a rich literature surrounding learning (hyper)graphs, or determining properties thereof, from restricted query models.
Much of this focuses on independent set queries, which determine whether a given subset of vertices contains a hyperedge.
There are lines of work on optimal parameters for learning graphs~\cite{alon2004learning,alon2005learning,angluin2005learning,chang2014learning,abasi2019learning} and hypergraphs~\cite{angluin2005learning,d2016multistage,abasi2018error,abasi2018non,chang2018learning,balkanski2022learning,austhof2025non} using these queries.
Another particularly well-studied problem in the independent set query model is estimating the number of edges~\cite{beame2020edge,chen2020nearly,dell2021fine} or hyperedges~\cite{bhattacharya2019hyperedge,dell2022approximately,dell2024nearly,Censor-HillelEW25}. The latter reference \cite{Censor-HillelEW25} considers independent set queries with a non-constant cost.
Note that in some of these works there is a separation between what can be done in bipartite graphs (or $r$-partite $r$-uniform hypergraphs) compared to general graphs.\\

Graph and hypergraph recovery problems have also been studied in a number of other global query models, including queries that count the number of edges~\cite{choi2008optimal,bshouty2009reconstructing,bshouty2010optimal}, measure distances~\cite{mathieu2013graph}, test connectivity~\cite{kluk2024graph}, or find maximal independent sets~\cite{konrad2025graph}.

\section{Preliminaries}

For $n\in\mathbb{N}$, we write $[n]$ for the set $\{1, 2, \ldots, n\}$.\\

Our lower bound results make use of Yao's lemma, which gives a way to lower bound the worst-case performance of a randomized algorithm.

\begin{lemma}[Yao's lemma, \cite{yao1977probabilistic}]\label{lem:yao}
    Let $\mathcal{X}$ be a finite set of inputs, and let $\mathcal{A}$ be a finite set of deterministic algorithms that solve a problem on the inputs in $\mathcal{X}$. Let us use $\mathcal{R}$ to denote a probability distribution on $\mathcal{A}$, and $\mathcal{D}$ to denote a probability distribution on $\mathcal{X}$. Let $c(A, x)$ denote the cost of running algorithm $A\in\mathcal{A}$ on input $x\in\mathcal{X}$. We have the following:
    \[    \min_{\mathcal{R}}\max_{x\in\mathcal{X}}\mathbb{E}_{A\leftarrow\mathcal{R}}[c(A, x)] = \max_{\mathcal{D}}\min_{A\in\mathcal{A}}\mathbb{E}_{x\leftarrow\mathcal{D}}[c(A, x)].
    \]
\end{lemma}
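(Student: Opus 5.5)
The plan is to recognize the identity in \cref{lem:yao} as von Neumann's minimax theorem for the finite two-player zero-sum game in which one player picks an algorithm $A\in\mathcal{A}$, the other picks an input $x\in\mathcal{X}$, and the payoff is $c(A,x)$. Concretely, form the $|\mathcal{A}|\times|\mathcal{X}|$ cost matrix $M$ with $M_{A,x}=c(A,x)$. A distribution $\mathcal{R}$ on $\mathcal{A}$ is a vector $p$ in the simplex $\Delta_{\mathcal{A}}$, and a distribution $\mathcal{D}$ on $\mathcal{X}$ is a vector $q\in\Delta_{\mathcal{X}}$. Both sides of the claimed equality will then be rewritten in terms of the bilinear form $p^{\top}Mq$.

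The first step is to reduce each inner optimization over pure strategies to one over mixed strategies. For a fixed $p$, the map $q\mapsto p^{\top}Mq$ is linear on the simplex, so its maximum is attained at a vertex. Hence
\[\max_{x\in\mathcal{X}}\mathbb{E}_{A\leftarrow\mathcal{R}}[c(A,x)]=\max_{q\in\Delta_{\mathcal{X}}}p^{\top}Mq.\]
Symmetrically, for fixed $q$,
\[\min_{A\in\mathcal{A}}\mathbb{E}_{x\leftarrow\mathcal{D}}[c(A,x)]=\min_{p\in\Delta_{\mathcal{A}}}p^{\top}Mq.\]
The statement therefore becomes
\[\min_{p}\max_{q}p^{\top}Mq=\max_{q}\min_{p}p^{\top}Mq.\]
The outer min and max exist because the simplices are compact and the inner value functions are continuous (a max or min of finitely many linear functions).

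The second step is to establish this equality, which is the only substantive part. The inequality $\max_q\min_p\le\min_p\max_q$ is immediate (weak duality): for any $p_0,q_0$ we have $\min_p p^{\top}Mq_0\le p_0^{\top}Mq_0\le\max_q p_0^{\top}Mq$. For the reverse inequality I would write $\min_p\max_x (p^{\top}M)_x$ as a linear program: minimize $t$ subject to $p^{\top}M e_x\le t$ for all $x$, $\sum_A p_A=1$, and $p\ge0$. Its LP dual is: maximize $u$ subject to $e_A^{\top}Mq\ge u$ for all $A$, $\sum_x q_x=1$, and $q\ge0$, which is exactly $\max_q\min_A (Mq)_A$. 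Both programs are feasible and bounded, since $M$ is finite, so strong LP duality gives equality. Alternatively, one can cite von Neumann's minimax theorem directly, or prove it via the separating hyperplane theorem.

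The main obstacle is that strong duality step. Everything else is bookkeeping about linearity, and it is standard enough that I would invoke it rather than reprove it. In the intended application I would also note that the inequality actually needed for lower bounds is the easy direction: for any fixed input distribution $\mathcal{D}$, the quantity $\min_A\mathbb{E}_{x\leftarrow\mathcal{D}}[c(A,x)]$ lower bounds the worst-case expected cost of every randomized algorithm.
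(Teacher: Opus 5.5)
Your proof is correct. The paper gives no proof of this lemma and only cites Yao; the standard argument behind that citation is exactly yours: reduce the inner pure-strategy optimizations to mixed ones by linearity, then apply von Neumann's minimax theorem, or equivalently LP strong duality, to the cost matrix $c(A,x)$. Your closing remark is also apt, since every use in the paper (e.g.\ \cref{lem:cant-use-is}, \cref{thm:specifiedparts-lb}) needs only the easy weak-duality/averaging direction.
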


Our results come from thinking about hyperedge detection problems on an unknown hypergraph. A hypergraph is a generalization of a graph where every edge is allowed to contain any number of vertices. An $r$-uniform hypergraph is a hypergraph where every edge contains exactly $r$ vertices. We will mostly consider $r$-partite $r$-uniform hypergraphs, which are $r$-uniform hypergraphs $H$ on $r$ parts of vertices such that every edge contains exactly one vertex from each part, i.e., we can write $H = (V = V_1\sqcup\cdots\sqcup V_r, E)$, where $V_1, \ldots, V_r$ are the parts of $V$ and we have $E\subseteq V_1\times\cdots\times V_r$. For a set $S \subseteq V$, we define the codegree of $S$ to be $|\{ e \in E \ | \ S \subseteq E\}|$ --- this is a generalization of the notion of degree in graphs.

\section{Relating counting and detection to their all-\texorpdfstring{$s$}{s} versions}\label{sec:ezstories}

The all-$s$ containment and all-$s$ codegree queries we discuss in this paper can be viewed as direct strengthenings of two query models that have been extensively previously studied: \emph{independent set queries}, which detect whether an induced subhypergraph contains at least one hyperedge, and \emph{additive queries}, which count the number of edges in an induced subhypergraph.
Our new models differ by additionally providing versions of this information for \emph{every} $s$-tuple of vertices in the query.
So, an all-$s$ containment (resp. all-$s$ codegree) oracle at cost $n^c$ immediately yields an independent set (resp. additive) oracle at cost $n^c$.
Similarly, additive (resp. all-$s$ codegree) queries at cost $n^c$ directly yield independent set (resp. all-$s$ containment) queries at cost $n^c$, as we can just check which counts are $0$.
In this section, we investigate other relationships between these query models. \\

As a positive result, we observe that it \emph{is} possible to nontrivially simulate all-$s$ containment queries using an independent set oracle that's allowed to query \emph{subhypergraphs} as opposed to just \emph{induced subhypergraphs}.
In other words, to decide if a given collection of $r$-tuples of vertices on a particular vertex set includes a hyperedge.
This model has been called \defn{edge emptiness queries} or \defn{OR queries} in the literature~\cite{assadi2021graph,bishnu2023complexity}.
We have the following (note that, as in the case of our all-$s$ problems, we measure the cost of an oracle call in terms of the number of vertices it involves as opposed to the number of possible edges):

\begin{lemma}\label{lem:use-is}
    Given edge emptiness queries at cost $n^c$ to an unknown $r$-partite $r$-uniform hypergraph, we can compute the answer to an all-$s$ containment query at cost $\softo{n^{\paren{s + \frac{c(r-s)}{r}}}}$.
\end{lemma}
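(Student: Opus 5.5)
We generalize the Vassilevska Williams--Williams reduction from All-Edge $R$-Triangle to $R$-Triangle. The two ingredients are a block decomposition of the query and a "find-and-delete" procedure driven by edge emptiness queries.

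\textbf{Setup.} Let the all-$s$ containment query be $(Q,\Rreplace)$ with $|Q|=n$. By the definition of the specified-parts oracle, it suffices to handle one choice of $s$ parts at a time, say $X_1,\dots,X_s$. There are only $\binom{r}{s}=O(1)$ such choices, so this costs a constant factor. Fix a block size $t$ and partition each part $Q\cap X_i$ into $n/t$ blocks of size $t$. For every tuple of blocks $(B_1,\dots,B_r)$, one from each part, we will run a subroutine using only queries on the vertex set $B_1\cup\dots\cup B_r$, which has size $rt$. Every candidate hyperedge lies in exactly one block tuple, so an $s$-tuple is "contained" iff some block tuple containing it has a surviving candidate hyperedge through it.

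\textbf{Find-and-delete loop.} Within a block tuple, maintain a set $\mathcal{C}$ of still-active candidate $r$-tuples. Initially $\mathcal{C}$ is $\Rreplace$ restricted to $B_1\times\dots\times B_r$. Repeat the following.
\begin{enumerate}
\item Ask an edge emptiness query on $\mathcal{C}$.
\item If the answer is "empty", stop.
\item Otherwise, locate one hyperedge $e\in\mathcal{C}\cap E$ by binary search: split $\mathcal{C}$ in half and recurse on a nonempty half. Each of these queries is still on at most $rt$ vertices, and the search takes $O(r\log t)$ queries.
\item Mark the $s$-tuple $e\cap(X_1\cup\dots\cup X_s)$ as contained. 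Then delete from $\mathcal{C}$ every $r$-tuple that extends this $s$-tuple.
\end{enumerate}
Each successful iteration discovers a new $s$-tuple, and each $s$-tuple is discovered at most once per block tuple. Hence each block tuple costs $\softo{(1+d)\,t^c}$, where $d$ is the number of $s$-tuples first found there. To keep the total number of discoveries at most $n^s$ overall rather than per block tuple, we also delete globally. Process block tuples in a fixed order, and when initializing $\mathcal{C}$ for a new block tuple, remove all $r$-tuples whose $s$-part projection has already been marked. This is legitimate because edge emptiness queries may use arbitrary candidate sets. With this, the total number of successful iterations over all block tuples is at most $n^s$.

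\textbf{Cost accounting and optimizing $t$.} The cost splits into two terms.
\begin{itemize}
\item There are $(n/t)^r$ final "empty" queries, each of cost $t^c$.
\item There are at most $n^s$ successful iterations, each costing $\softo{t^c}$.
\end{itemize}
The total is therefore $\softo{n^r t^{c-r} + n^s t^c}$. Balancing the two terms gives $t^r=n^{r-s}$, i.e.\ $t=n^{(r-s)/r}$. The total cost is then $\softo{n^{s+c(r-s)/r}}$, matching \cref{lem:use-is}.

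\textbf{Main obstacle.} The delicate step is ensuring that discoveries are charged globally to distinct $s$-tuples and not repeated across the many block tuples sharing them. The global deletion handles this, and it relies on the oracle accepting arbitrary candidate sets. This is why the lemma needs edge emptiness queries and not merely induced independent set queries. The binary search inside a block must also stay within the same vertex set, so its queries cost $t^c$ and not $n^c$; that holds because splitting $\mathcal{C}$ only shrinks the candidate set, never the vertex set.
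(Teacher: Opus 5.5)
Your proposal is correct and follows the paper's proof: the same chunking of each part into $n^{s/r}$ blocks of size $n^{(r-s)/r}$, the same ``find a hyperedge, mark its $s$-projection, and exclude that $s$-tuple from all later queries'' loop, and the same charging of each attempt either to a newly marked $s$-tuple or to moving on to the next block tuple. The differences are cosmetic: you binary search over the candidate set rather than over the vertices of one part, and you obtain the block size by balancing $n^r t^{c-r}$ against $n^s t^c$ rather than fixing it in advance.
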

\begin{proof}
    We wish to solve the all-$s$ containment problem on an $n$-vertex $r$-partite $r$-uniform hypergraph.
    It suffices to solve the specified-parts all-$s$ containment problem (i.e. only report containment for vertices spanning the first $s$ parts), as a generic algorithm for this could then be run $\binom{r}{s}$ times to determine values for every $s$-tuple of parts.
    We first arbitrarily partition each of the $r$ parts into $n^{s/r}$ distinct chunks, each of size at most $n^{\paren{\frac{r-s}{r}}}$.
    Then, we iterate through every $r$-tuple of those chunks, and do the following: 
    Make an edge emptiness query to the subhypergraph induced by the union of those chunks. If the oracle reports that a hyperedge exists, then find such a hyperedge by binary search (by repeatedly throwing away half of the vertices in a given part and testing if a hyperedge still remains, we can find some hyperedge with $O(\log n)$ edge emptiness queries). 
    Then, consider the $s$-tuple of vertices that the hyperedge involves among the first $s$ parts.
    Record that this $s$-tuple \emph{is} involved in some hyperedge, and subsequently exclude all sets containing those $s$ vertices together from all queries made in the rest of the reduction (note that this is where we use that we are able to query any subhypergraph, not just induced subhypergraphs).
    Repeat this whole process again until the edge emptiness oracle returns that there are no more hyperedges between these chunks, at which point we progress to the next $r$-tuple of chunks.\\

    To argue correctness, we note that for any $s$-tuple of vertices among the first $s$ parts, if it is involved in a hyperedge, then some such hyperedge will be found: by the time we get to the first $r$-tuple of chunks involving such a hyperedge, the only hyperedges that we will be excluding from consideration in the induced subhypergraph will be those involving a different $s$-tuple of vertices among the first $s$ parts.
    To argue cost, note that every time we attempt to find a hyperedge, we either successfully mark some previously-unmarked $s$-tuple as belonging to a hyperedge (which can happen at most $n^s$ times), or we move on to the next $r$-tuple of chunks (which can happen at most $(n^{s/r})^r = n^s$ times).
    The cost of running that ``attempt to find a hyperedge'' procedure is at most $\log n$ many edge emptiness queries to a subhypergraph on at most $n^{\frac{r-s}{r}}$ vertices, and so has cost $O\paren{\paren{n^{\paren{\frac{r-s}{r}}}}^c}$.
    Overall, this gives cost $\softo{n^{\paren{s + \frac{c(r-s)}{r}}}}$. 
\end{proof}

In the case of $r = 3$, $s = 2$, note that this proof exactly corresponds to Vassilevska Williams and Williams's reduction from All-Edge $R$-Triangle Detection to $R$-Triangle detection.
The fact that their reduction is implementable in the hypergraph edge emptiness model suggests a limitation of their techniques for proving a stronger equivalence.
Specifically, we can show information theoretically that the above is tight even given the \emph{counting} version of edge emptiness queries. We start with the following observation:

\begin{lemma}\label{lem:find-uniquers}
    Given a cost-$n^c$ all-$s$ containment oracle, we can at cost $\softo{n^c}$ exactly recover all hyperedges that involve some $s$-tuple with codegree at most $\polylog(n)$.
\end{lemma}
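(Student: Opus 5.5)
We will use random sampling to isolate hyperedges, and then recover each isolated hyperedge by a parallel binary search run through all-$s$ containment queries. Queries here are induced: a query is a vertex set $Q$, and the oracle answers for every $s$-subset of $Q$ whether it lies in a hyperedge inside $Q$. The cost of a query on at most $n$ vertices is $n^c$, so we may afford $\polylog(n)$ queries in total.

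\textbf{Step 1: isolation by sampling.} Fix an $s$-tuple $S$ with codegree $d \le \polylog(n)$, and fix a hyperedge $e \supseteq S$. Choose a random set $Q$ as follows.
\begin{itemize}
\item Keep every vertex with probability $1/2$, independently.
\item Condition on (or simply observe) the event that $e$ survives; this has probability $2^{-r}$.
\item Every other hyperedge $e' \supseteq S$ has some vertex outside $e$, so it survives with probability at most $1/2$.
\end{itemize}
This is not yet enough to isolate $e$. We therefore use sparser sampling with rate $p = 1/(2d_{\max})$ and $d_{\max} = \polylog(n)$. Then $e$ survives with probability $p^{r-s}$, where we keep $S$ forced and sample only the other vertices. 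The chance that any other hyperedge through $S$ survives is then at most $d \cdot p \le 1/2$.

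Since $S$ varies, we cannot force $S$. Instead we sample all vertices at rate $p$:
\begin{itemize}
\item The pair $(S,e)$ survives with probability $p^{r}$.
\item Given that, $e$ is the unique surviving hyperedge through $S$ with probability at least $1/2$.
\end{itemize}
Repeating over $O(p^{-r}\log n) = \polylog(n)$ independent samples $Q_1,\dots,Q_t$ (with $p^{-1} = \polylog$ and $r$ constant) guarantees, with high probability and by a union bound over the $\le n^s\cdot\polylog$ pairs $(S,e)$, that every such pair is isolated in some $Q_i$. Each query has size at most $n$, so this costs $\softo{n^c}$.

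\textbf{Step 2: parallel binary search.} Fix a sample $Q_i$. For each vertex $v \notin S$ we want to learn the remaining $r-s$ vertices of the unique hyperedge through $S$ in $Q_i$, for all such $S$ simultaneously.
\begin{itemize}
\item Index the vertices by $\lceil\log n\rceil$-bit labels.
\item For each bit $b$ and value $\beta\in\{0,1\}$, query $Q_i^{b,\beta}$. This set contains every vertex whose label has bit $b$ equal to $\beta$, together with the vertices needed to keep $S$ present.
\end{itemize}
The catch is that $S$ itself has labels, and different $S$ need different vertices kept. To handle this, restrict the bit-splitting to one part at a time. Since the hypergraph is $r$-partite, we may handle one assignment of parts at a time: $s$ parts for $S$ and $r-s$ other parts, with constantly many assignments. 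Using specified-parts reasoning, query $Q_i$ with part $X_j$ (a non-$S$ part) restricted to label-bit $b = \beta$, leaving the other parts intact.

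For an isolated $S$, the answer ``$S$ is contained in a hyperedge'' then reveals exactly bit $b$ of $e$'s vertex in $X_j$. Reading all bits over $O(r\log n)$ queries per sample decodes a candidate vertex tuple for every $S$. When $S$ is not isolated the decoded tuple may be garbage, so we verify each candidate $e$ with a constant-size query on the vertex set $e$. That costs $O(1)$ per candidate, and there are at most $n^s\cdot\polylog \le \softo{n^c}$ candidates since $c\ge s$.

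\textbf{Step 3: correctness and the main obstacle.} Every hyperedge containing a low-codegree $s$-tuple is isolated in some sample, decoded correctly there, and confirmed by verification. All queries are induced subsets, and the total cost is $\softo{n^c}$.

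The main obstacle is making the bitwise restriction in Step 2 preserve uniqueness. Removing vertices can only delete hyperedges, so isolation is maintained. We must also check that a ``no'' answer is not caused by deleting part of $S$; this holds because we only restrict parts outside $S$'s parts, which the part assignment ensures.
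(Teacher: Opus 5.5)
Your proposal is correct and takes essentially the same route as the paper. You isolate each hyperedge through a low-codegree $s$-tuple by random restriction at rate $1/\polylog(n)$, then run a simultaneous binary search over the parts not containing the $s$-tuple, and you add an explicit verification of the decoded candidates (total cost $n^s\polylog(n)\le\softo{n^c}$) that the paper's general-$r$ proof leaves implicit and includes only in the $r=3$ version, \cref{lem:find-uniquers-2}; the one difference is that the paper repeatedly re-partitions each part into $\polylog(n)$ random chunks and queries every $r$-tuple of chunks, whereas you take $\polylog(n)$ independent random samples, which works equally well since $p^{-r}=\polylog(n)$.
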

\begin{proof}

We first give an algorithm for recovering, for each $s$-tuple in a unique hyperedge, the hyperedge that the $s$-tuple is in, and then bootstrap this to recover all hyperedges involving an $s$-tuple with codegree at most $\polylog(n)$.\\

Write the hypergraph as $H = (V = X_1\sqcup, \ldots, \sqcup X_r), E)$. We give an algorithm for recovering all such $s$-tuples in the specific tuple of parts $(X_1, \ldots, X_s)$ (then, to get an algorithm for recovering all such $s$-tuples, we can just run this algorithm on all possible $s$-tuples of parts). The idea is to do binary search on all the vertex parts to simultaneously find for every $s$-tuple $x$ in exactly one hyperedge, the hyperedge that it is in. For each $i\in\{s+1, \ldots, r\}$, let $f_i: X_i\to [|X_i|]$ be an
ordering on the vertices in $X_i$.  Initialize a list $L$ of size $n^s$ where each
entry is a tuple of $r-s$ empty strings. For every
$x$ in a unique hyperedge $e$, the $i$th string in $L[x]$ will contain
$f_i(v_i)$ where $v_i$ is the vertex in $X_i$ in $e$.\\

For $j\in[\log n]$, we split each $X_i$ into $X_{i, 0}$ and $X_{i,
  1}$, where $X_{i, 0}$ (resp. $X_{i, 1}$) consists of all vertices
$v_i\in X_i$ such that $f_i(v)[j]$ is $0$ (resp. $1$). For $k\in
[2^{r-s}]$, we run $\mathcal{A}$ on $X_1\sqcup\cdots\sqcup X_s\sqcup
X_{s+1, k[1]}\sqcup\cdots\sqcup X_{r, k[r-s]}$ (where $k[\ell]$ is the
$\ell$th bit of the $(r-s)$-length binary representation of $k$). For
each $x\in V(s)$, if $\mathcal{A}$ outputs that $x$ is in some
hyperedge, then for each $\ell\in [r-s]$ we append $k[\ell]$ to
$L[x][\ell]$. At the end, for every $s$-tuple $x$ in a unique hyperedge, $L[x]$ contains the other vertices in the hyperedge with $x$. We return $L$. This algorithm runs in time $\widetilde{O}(n^c)$ since we make $\widetilde{O}(1)$ cost-$n^c$ calls to the oracle.\\

Now we bootstrap this algorithm: to do this, the idea is to, in $K\polylog(n)$ iterations where $K$ is some large enough constant, partition each of the vertex parts of $H$ into equal-sized chunks of size $Cn/\polylog(n)$ for some large enough $C$, then run the algorithm previously described on all $r$-tuples of chunks, one from each part, and output all the hyperedges that the algorithm lists (in any iteration). Applying a Chernoff bound and choosing $C$ large enough, it can be readily seen that with high probability every $s$-tuple that has codegree at most $\polylog(n)$ is listed. The cost is $\widetilde{O}(n^c)$ since we make $\widetilde{O}(1)$ calls to the cost-$n^c$ oracle.

\end{proof}

We can now show the following:

\begin{lemma}\label{lem:cant-use-is}
    Fix some algorithm which makes queries to a cost-$n^c$ oracle for counting the number of edges in subhypergraphs of an unknown $r$-partite $r$-uniform hypergraph.
    If this algorithm solves the all-$s$ containment problem with high probability, then it requires cost $\widetilde{\Omega}\paren{n^{\paren{s + \frac{c(r-s)}{r}}}}$ for some hidden hypergraphs.
\end{lemma}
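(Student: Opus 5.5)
We prove the bound via Yao's lemma (\cref{lem:yao}), following the outline in the technical overview. The hard distribution is a uniformly random $r$-partite $r$-uniform hypergraph on parts of size $n$ with exactly $m = n^s$ hyperedges; if $m$ is a small constant times $n^s$, a constant fraction of the specified-parts $s$-tuples have codegree exactly $1$. The correct all-$s$ containment output determines which $s$-tuples are covered, and this output has entropy $\Omega(n^s \log n)$. The entropy is not the main point, though: given the output, only the other $r-s$ coordinates of each hyperedge remain unknown, so the containment output alone has entropy $\Theta(m\log n)$. The algorithm must therefore learn $\Omega(n^s)$ ``units'' of information, each worth about $\log n$ bits, and we will show each query of size $q$ yields $\widetilde{O}(1 + q^r/n^{r-s}\cdot\text{stuff})$ units. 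Concretely, a query on $q$ vertices (any subhypergraph) contains at most $q^r$ candidate $r$-tuples. Each fixed tuple is a hyperedge with probability about $m/n^r = n^{s-r}$, so the expected number of hyperedges inside the query is $q^r n^{s-r}$.

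\textbf{Step 1 (thresholded relaxation).} We strengthen the oracle as follows. After each counting query we also reveal, for free, every $r$-tuple whose conditional probability of being a hyperedge (given the transcript so far) exceeds a threshold $\tau$, such as $1/\polylog(n)$. We also reveal every hyperedge found by such revelations. Since the relaxation only gives the algorithm more information, lower bounds for it transfer to the original model.

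\textbf{Step 2 (small queries reveal little).} Take a query whose candidate tuples are all unrevealed, so each has conditional probability at most $\tau$. If the query has few candidates, say $q^r \ll n^{r-s}$, its count is $0$ with high probability. Its information is then $O(H(\text{Bernoulli}(p)) + p\log n)$ with $p \le \min(1, q^r n^{s-r}\cdot\polylog)$. Large queries return a count in $[0,q^r]$ and so give $O(\log n)$ bits. In either case the information per query is $\widetilde{O}(\min(1, q^r/n^{r-s}))$ units. The main obstacle is controlling the conditional distribution: we need the number of hyperedges revealed by thresholding to also be bounded by $\widetilde{O}$ of the information gained. I would argue this by noting that the total probability mass pushed above $\tau$ is at most $1/\tau$ times the expected revelation. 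Formally, the sum of the revealed tuples' posterior probabilities is bounded by the mutual information divided by $\tau$ (a ``each bit can raise total probability mass by $O(1)$'' argument via KL divergence). This accounting is the delicate part.

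\textbf{Step 3 (optimization).} The per-query cost is $q^c$ and the gain is $\widetilde{O}(\min(1, q^r/n^{r-s}))$. Because $c<r$, the best gain per unit cost occurs at $q = n^{(r-s)/r}$, giving $n^{c(r-s)/r}$ cost per unit. Gathering $\Omega(n^s)$ units, which by Fano's inequality is needed for success with high probability, therefore costs $\widetilde{\Omega}(n^{s + c(r-s)/r})$. Yao's lemma then transfers this to randomized algorithms. This matches \cref{lem:use-is}, whose chunk size is exactly $n^{(r-s)/r}$.
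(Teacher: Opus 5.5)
Your overall architecture (Yao's lemma, a thresholded relaxation of the oracle, then the convexity optimization in Step~3 landing at $q=n^{(r-s)/r}$) matches the paper. However, the information accounting that feeds Step~3 has two genuine gaps.

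\textbf{The target is wrong, and the reduction to exact recovery is missing.} The all-$s$ containment output is a binary string of length $O(n^s)$, so its entropy is $O(n^s)$ bits, not $\Omega(n^s\log n)$. The $\Theta(m\log n)$ figure you compute is the entropy of the hypergraph itself, i.e.\ of \emph{exact recovery}. Under your distribution, the containment answer does not determine the hypergraph.

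The paper bridges this gap with \cref{lem:find-uniquers}. Any all-$s$ containment procedure yields exact recovery of every hypergraph whose $s$-tuples all have codegree at most $\polylog(n)$. It does so with only $\widetilde{O}(1)$ calls on instances of at most $n$ vertices, using bit-restricted simultaneous binary search plus random partitions into $\polylog(n)$ chunks to isolate hyperedges. Only after this step is the target the full hypergraph, with entropy $n^rH(n^{s-r})$.

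Without this step, you would be aiming at the $O(n^s)$-bit containment answer directly, and then your free-revelation relaxation stops being harmless. The natural bound on what the revelations carry about $G$ is roughly $m\log(1/\tau)$ bits. For $\tau=1/\polylog(n)$ this is already at least the entropy of the answer. So ``the relaxation only helps the algorithm'' leaves nothing that the counting queries provably must supply.

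\textbf{The threshold $\tau=1/\polylog(n)$ breaks Step~2.} With this threshold, unrevealed tuples may have posterior probability anywhere up to $\tau$, far above the prior $n^{s-r}$, and subhypergraph queries can be aimed exactly at them. For example, suppose a query on a box of $n^{(r-s)/r}$ vertices per part returns count $1$. The later steps of a binary search that repeatedly halves the sides of the box are queries with $q^r\ll n^{r-s}$. Each still returns a full bit, and none of the box's tuples is revealed until only about $1/\tau$ candidates remain. So the per-query bound $\widetilde{O}(\min(1,q^r/n^{r-s}))$ is false. Your KL sketch controls how much thresholding reveals, not this.

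The paper instead thresholds at $2n^{s-r}$. Every unrevealed candidate then has posterior at most $2n^{s-r}$, and subadditivity gives $O(q^rn^{s-r}\log n)$ bits for $q<n^{(r-s)/r}$; large queries give $O(\log n)$ bits. The price is that revelations become expensive, up to $\frac{n^r}{2}H(2n^{s-r})$ bits. But this is still at most $n^rH(n^{s-r})-\Omega(n^s)$, so the counting queries must supply $\Omega(n^s)$ bits.

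That residual exists only because the target is the whole hypergraph, which is the first point above. Note also that the residual is $\Omega(n^s)$ bits, not $\Omega(n^s)$ units of $\log n$ bits. This is still enough for the $\widetilde{\Omega}$ bound after your Step~3 optimization.
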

\begin{proof}
    Suppose we had a more efficient reduction. By \cref{lem:find-uniquers}, this also gives us a $\widetilde{o}\paren{n^{\paren{s + \frac{c(r-s)}{r}}}}$ cost algorithm to exactly recover an $r$-uniform, $r$-partite hypergraph so long as each $s$-tuple belongs to at most $\polylog(n)$ distinct edges.
    We will show that this is impossible.
    By Yao's principle, it suffices to fix a distribution over hypergraphs such that with high probability each $s$-tuple belongs to at most $\polylog(n)$ hyperedges, and such that no deterministic algorithm with so small a query complexity can succeed with high probability over this distribution.
    The distribution we choose will simply place $n$ vertices in each part and include every hyperedge with independent probability $n^{s - r}$.
    Observe that this distribution has entropy $n^r H(n^{s-r})$ --- so, any algorithm solving the exact recovery problem with high probability must have an output distribution with mutual information at least $(1-o(1))n^r H(n^{s-r})$ with the hypergraph.
    By the data processing inequality, this implies that the distribution of all the results of its queries must have entropy at least $(1 - o(1))n^r H(n^{s-r})$.\\

    The goal now becomes bounding the total amount of information revealed by any given query.
    First, we pass to a relaxation of the oracle model: for any $r$-tuple of vertices, if at any point in the course of the algorithm's runtime the probability of that $r$-tuple being a hyperedge conditional on all previous queries exceeds $2n^{s-r}$, we will reveal for free whether it is involved in a hyperedge.
    We'll let random variable $X_i$ denote the response to the $i$th query, and random variable $Y_i$ denote the information revealed after the $i$th query from these threshold exceedences.
    Observe that this thresholding causes us to provide strictly more information to the algorithm, so if we can rule out algorithms in this query model we rule them out in the original model.
    Observe also that, since we expect at most $n^r/2$ threshold exceedences, the expected amount of information we reveal in response to them is insufficient to capture all of the entropy of the distribution: 
    \[\sum_{i} H(Y_i \ | \ X_1, \dots, X_{i}, Y_1, \dots, Y_{i-1}) \leq \frac{n^r}{2} \cdot H(2n^{s-r}) \leq n^r H(n^{s-r}) - \Omega(n^s).\]
    By the chain rule for conditional entropy, this implies that we have 
    \[\sum_{i} H(X_i \ | \ X_1, \dots, X_{i-1}, Y_1, \dots, Y_{i-1}) \geq \Omega(n^s).\]

    We now need to show that this is impossible for any query strategy that always has small query cost.
    Consider some run of the algorithm in which it made (possibly adaptively chosen) queries $(Q_1, \Rreplace_1), \dots, (Q_q, \Rreplace_q)$ for $\sum_{i=1}^q |Q_i|^c < n^{\paren{s + \frac{c(r-s)}{r}}} \cdot \log^{-2}(n)$,
    and got responses $x_1, \dots, x_q$, with additional bonus information $y_1, \dots, y_q$ revealed.
    We now bound $H(X_i \ | \ X_1 = x_1, \dots, X_{i-1}=x_{i-1}, Y_1 = y_1, \dots, Y_{i-1}= y_{i-1})$ in terms of $|Q_i|$. 
    When $|Q_i| \geq n^{\paren{\frac{r-s}{r}}}$, we use the trivial bound of $H(X_i) \leq O(\log n)$ (which holds because $X_i$ is supported only on integers between $0$ and $n^{r-s}$).
    When $|Q_i| < n^{\paren{\frac{r-s}{r}}}$, we instead use subadditivity of entropy.
    Such a query can involve at most $|Q_i|^r$ distinct $r$-tuples of vertices that haven't yet been revealed to have be a hyperedge, and we know each of them has probability at most $2n^{s-r}$ conditional on our previous queries of being a hyperedge.
    So, the total conditional entropy of this query is at most $|Q_i|^r H(2n^{s-r}) < O(|Q_i|^r n^{s-r}\cdot\log(n))$.
    Together, we can therefore bound (assuming small cost and appealing to convexity in the penultimate inequality): 
    \begin{align*}
    \sum_{i} H(X_i \ | \ X_1 = x_1, \dots, X_{i-1}=x_{i-1}, Y_1=y_{1}, \dots, Y_{i-1}=y_{i-1}) &\leq \\
    \sum_{\substack{i \\ |Q_i|> n^{\paren{\frac{r-s}{r}}}}} O(\log n) + \sum_{\substack{i \\ |Q_i|< n^{\paren{\frac{r-s}{r}}}}} O(|Q_i|^r n^{s-r}\cdot\log(n)) &\leq \\
    \frac{n^{\paren{s + \frac{c(r-s)}{r}}} \cdot \log^{-2}(n)}{\paren{n^{\paren{\frac{r-s}{r}}}}^c} \cdot O(\log(n)) + \frac{n^{\paren{s + \frac{c(r-s)}{r}}} \cdot \log^{-2}(n)}{\paren{n^{\paren{\frac{r-s}{r}}}}^c} \cdot O(\log(n))&\leq\\ O(n^s \cdot \log^{-1}(n)). 
    \end{align*}

    Since we've shown this sum of conditional entropies is $o(n^s)$ for every run with small query complexity, but $\Omega(n^s)$ in expectation, we know that the algorithm must incur query cost at least $n^{\paren{s + \frac{c(r-s)}{r}}} \cdot \log^{-2}(n)$ on some instances.
\end{proof}

Since the all-$s$ codegree problem is strictly harder than the all-$s$ containment problem, \cref{lem:cant-use-is} also gives some lower bound on the cost of solving all-$s$ codegree with additive queries. Our following lemma strengthens that lower bound:

\begin{lemma}\label{lem:cant-allscount}
    Fix some algorithm which makes queries to a cost-$n^c$ oracle for counting the number of edges in subhypergraphs of an unknown $r$-partite $r$-uniform hypergraph.
    If this algorithm solves the all-$s$ codegree problem with high probability, then it requires cost $\widetilde{\Omega}\paren{n^{\min(r, s+c)}}$ for some hidden hypergraphs.
\end{lemma}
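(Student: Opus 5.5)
We use Yao's principle (\cref{lem:yao}) and an entropy-counting argument, following the overview's description of this case. The hard distribution is the uniform random hypergraph: $n$ vertices in each part, and each of the $n^r$ possible hyperedges present independently with probability $1/2$. The all-$s$ codegree output restricted to the first $s$ parts is the vector of codegrees $d(x)$, $x \in X_1\times\dots\times X_s$. We aim to show that the queries must reveal $\Omega(n^s \log n)$ bits about this vector. This holds because the $d(x)$ are independent $\mathrm{Bin}(n^{r-s},1/2)$ variables, each with entropy $\tfrac12\log(n^{r-s}) - O(1) = \Omega(\log n)$. So any algorithm that is correct with high probability must have query transcripts with mutual information $\Omega(n^s\log n)$ with the codegree vector.

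To bound the information per query, we relax the oracle so that conditional independence is preserved. For each $(r-s)$-tuple $y \in X_{s+1}\times\dots\times X_r$, let $N(y)\subseteq X_1\times\dots\times X_s$ be the set of $s$-tuples $x$ with $x\cup y \in E$; these sets are independent across $y$. A subhypergraph count query $(Q,U)$ equals $\sum_y c_y$, where $c_y$ counts the hyperedges of $U\cap E$ through $y$. We relax the query to reveal every $c_y$ individually for each $y \subseteq Q$. Each $c_y$ is a function of $N(y)$ alone, so conditional on any history the sets $N(y)$ stay independent. The codegree of $x$ is $d(x)=\sum_y \mathbf 1[x\in N(y)]$.

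The accounting then splits into two regimes, giving the two terms of the bound. First, a query on $|Q|=q$ vertices touches at most $q^{r-s}$ tuples $y$ and reveals $O(\log n)$ bits about each, since each $c_y \le n^s$. Second, per-$y$ information is capped at $n^s$ bits, because $H(N(y))=n^s$. For the codegree vector to be nearly determined, the total remaining entropy $\sum_y H(N(y)\mid \text{transcript})$ must be small. Otherwise, by independence across $y$, the sum $d(x)$ retains $\Omega(\log n)$ bits of entropy for a constant fraction of $x$. The quantitative claim to prove is that $\Omega(n^s\log n)$ bits about $(d(x))_x$ require gaining $\widetilde\Omega(n^s)$ bits on each of a constant fraction of the $y$'s, i.e.\ $\widetilde\Omega(n^r)$ bits in total when $c\ge r-s$. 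Heuristically, if every $N(y)$ retains $\Omega(n^s)$ bits of conditional entropy spread over the $x$'s, each $d(x)$ is a sum of $n^{r-s}$ independent nondegenerate terms and so keeps $\Omega(\log n)$ entropy. Granting this, a query of size $q$ yields at most $\widetilde O(q^{r-s})$ bits, while its cost is $q^c$. If $c\ge r-s$, the cheapest way to collect $\widetilde\Omega(n^r)$ bits uses $q=n$, which forces about $n^{s}$ queries of cost $n^c$ each. That gives total cost $n^{s+c}$, capped by the trivial $n^r$, hence $\widetilde\Omega(n^{\min(r,s+c)})$. We also check the regime where queries have $q<n$: each then reveals at most $q^{r-s}\log n$ bits at cost $q^c$, and convexity in $q$ shows large queries are optimal.

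The main obstacle is the middle step: turning "little information about most $N(y)$" into "little information about most $d(x)$" rigorously. Conditional on the transcript, $d(x)$ is a sum over $y$ of dependent-in-$x$ but independent-in-$y$ indicator variables. We need an anti-concentration or entropy lower bound for sums of independent but non-identical Bernoullis whose conditional biases may be skewed. The first attempt would be a Littlewood--Offord or variance argument. We would lower-bound $\sum_y \mathrm{Var}(\mathbf 1[x\in N(y)]\mid\text{transcript})$ for most $x$ by averaging: total conditional entropy of $\Omega(n^r)$ bits forces a constant fraction of the pairs $(x,y)$ to have conditional variance $\Omega(1)$. This gives $d(x)$ conditional variance $\Omega(n^{r-s})$, and a local limit or Berry--Esseen style bound then yields entropy $\Omega(\log n)$. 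If that route fails, we would instead further relax the oracle to reveal whole $N(y)$'s once a query ``spends'' enough on them, which makes the revealed set of $y$ explicit and the bound a direct counting argument.
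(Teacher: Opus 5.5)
Your setup matches the paper's. You use the uniform random hypergraph, the per-bucket relaxation (the paper lets a size-$q$ query evaluate an arbitrary $O(\log n)$-bit function on each of $q^{r-s}$ buckets, and your counts $c_y$ are a special case), independence of the buckets under the posterior, and a final convexity count. The genuine gap is exactly the step you flag and then grant: going from ``the transcript carries few bits about the sets $N(y)$'' to ``the codegree output is not determined.'' As written, nothing in the proposal proves it.

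The paper closes this step with a trick that needs no anti-concentration and only \emph{one} under-queried bucket. If total bucket-touches are $\widetilde{o}(n^r)$, some $y^*$ is touched only $\widetilde{o}(n^s)$ times in expectation, so $I(N(y^*);T)=o(n^s)$. Now give the algorithm every $N(y)$ with $y\neq y^*$ for free. Since the posterior is a product over buckets, $H\bigl(N(y^*)\mid T,(N(y))_{y\neq y^*}\bigr)=H(N(y^*)\mid T)\ge(1-o(1))n^s$. Meanwhile the codegrees on the first $s$ parts determine $N(y^*)$ via $\mathbf{1}[x\in N(y^*)]=d(x)-\sum_{y\neq y^*}\mathbf{1}[x\in N(y)]$. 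So any estimator of the codegrees from $T$ yields an estimator of the $n^s$-bit variable $N(y^*)$ from $\bigl(T,(N(y))_{y\neq y^*}\bigr)$, and Fano gives failure probability $1-o(1)$.

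Your variance route can also be completed, but it needs more than you wrote.
\begin{enumerate}
\item Pass to a typical transcript $t$ by Markov, so that $\sum_y H(N(y)\mid T=t)\ge(1-o(1))n^r$.
\item Use subadditivity $H(N(y)\mid T=t)\le\sum_x h(p_{x,y})$, where $p_{x,y}=\Pr[x\in N(y)\mid t]$, to find an $x$ with $\sum_y p_{x,y}(1-p_{x,y})=\Omega(n^{r-s})$.
\item For that fixed $x$ the indicators are independent across $y$, so a Poisson-binomial point-probability bound $\max_k\Pr[d(x)=k\mid t]=O(1/\sigma)$ applies (Kolmogorov--Rogozin, or Berry--Esseen as you suggest).
\end{enumerate}
One such coordinate already makes the output wrong with high probability. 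Aim for this rather than an $\Omega(n^s\log n)$-bit bound on the whole vector: the $d(x)$ are correlated given $t$, so per-coordinate entropies do not add up.

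Separately, your regime analysis is inverted. A size-$q$ query buys $q^{(r-s)-c}$ bucket-touches per unit cost. For $c\ge r-s$ the \emph{smallest} queries are therefore the most efficient, and they give $\widetilde{\Omega}(n^r)$ directly. Only for $c<r-s$ are size-$n$ queries optimal, giving $n^{s+c}$. A lower bound cannot be ``capped by the trivial $n^r$''; the correct step is $\sum_i|Q_i|^{r-s}\le\max\bigl(1,(rn)^{r-s-c}\bigr)\sum_i|Q_i|^c$.
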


Observe that it is trivial to achieve cost $O(n^r)$ (simply iterate over every $r$-tuple of vertices and make a constant-cost query to check if it is a hyperedge), or cost $O(n^{s+c})$ (for every $s$-tuple of vertices, remove all other vertices sharing parts with them and use a single cost-$n^c$ oracle query to determine the number of hyperedges involving those $s$ vertices).
This lemma demonstrates that nothing can substantially improve on those algorithms.

\begin{proof}[Proof of \cref{lem:cant-allscount}]
    We consider a strictly stronger model of query.
    For every $(r-s)$-tuple of vertices spanning the last $(r-s)$ parts of the hypergraph, let the associated ``bucket'' be the set of all $r$-tuples involving those $(r-s)$ vertices.
    Observe that any additive query $(Q, \Rreplace)$ can intersect at most $|Q|^{r-s}$ buckets, and will reveal a $O(\log(n))$-bit value (namely, a single count).
    We could instead imagine providing the algorithm with a stronger oracle: for any $q$, at cost $q^c$, the algorithm can evaluate \emph{any} $O(\log(n))$-bit-valued function on each of \emph{any} collection of $q^{r-s}$ many buckets.
    This oracle model is at least as powerful, as the algorithm could choose a set of buckets corresponding to all $(r-s)$-tuples of the vertices in $Q$, learn for each of them the number of hyperedges shared with $\mathcal{P}(Q) \cap \Rreplace$, and then add up all those values.\\

    Now, consider a hidden $r$-uniform $r$-partite hypergraph chosen by including each hyperedge with independent probability $1/2$.
    The set of hyperedges in each bucket is an independent random variable with $n^s$ bits of entropy.
    Now, consider some (possibly adaptive) strategy of querying our strengthened oracle, and suppose that for some $i$ the $i$th bucket is queried $\widetilde{o}(n^{s})$ times in expectation.
    After any sequence of queries, each bucket's set of hyperedges is still independent from all other buckets.
    So, only the queries we make to this bucket yield any information about its contents. 
    Since each query returns $O(\log n)$ bits per bucket, this means that the results of all of our queries have mutual information $o(n^s)$ with the $i$th bucket's hyperedges, meaning that the conditional entropy of the $i$th bucket's hyperedges is $\Omega(n^s)$ given the results of all queries.
    We claim that this implies that the algorithm must fail the all-$s$ codegree problem with overwhelming probability.
    Indeed, suppose that following the algorithm's last query, we subsequently revealed the contents of every bucket other than the $i$th. 
    Now, determining all-$s$ codegrees entails exactly determining the set of hyperedges in the $i$th bucket, which remain distributed as before as all buckets are independent.
    As conditioning cannot increase conditional entropy, this means that the correct output to the all-$s$ codegree problem has conditional entropy $\Omega(n^s)$ given the results of all of the algorithm's queries.
    By Fano's inequality, any estimator for the output of the all-$s$ codegree problem given those query results must fail with high probability.\\

    Finally, we observe that, since a query of cost $x = q^c$ involves at most $q^{r-s} = x^{(r-s)/c}$ buckets, and queries must have sizes at least $r$ and at most $rn$, by convexity any sequence of queries of cost $\widetilde{o}(n^{\min(r, s+c)})$ must query the average bucket at most $\widetilde{o}(n^s)$ times.
    So, an algorithm whose query cost is always this low must fail with high probability on the uniform distribution.
\end{proof}

The final question we consider is whether all-$s$ containment gives any use in simulating additive queries.
It is again easy to see that, at cost $O(n^{r})$, one can learn the entire hypergraph, just by querying every $r$-tuple of vertices individually.
The following straightforward lower bound shows that no improvement to this is possible:

\begin{lemma}\label{lem:cant-count}
    Fix some algorithm which makes queries to a cost-$n^c$ all-$s$ containment oracle for an unknown $r$-partite $r$-uniform hypergraph, with $s < c < r$.
    If this algorithm returns the number of hyperedges in the hypergraph with high probability, then it requires cost $\Omega\paren{n^{r}}$ for some hidden hypergraphs.
\end{lemma}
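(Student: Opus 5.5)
We will prove \cref{lem:cant-count} by a counting-parity style indistinguishability argument. The idea is to build two hypergraph distributions that give identical answers to every all-$s$ containment query involving fewer than $n^{\Omega(1)}$-ish vertices, yet whose hyperedge counts differ. The natural choice is a \emph{dense} base hypergraph in which every $s$-tuple is already covered many times. Then deleting or adding a single hyperedge is invisible to containment queries unless the query is so small that it isolates the relevant $s$-tuples.

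\textbf{Construction.} Fix $n$ vertices per part and let $H_0$ be the complete $r$-partite hypergraph. Let $H_1$ be $H_0$ with one uniformly random $r$-tuple $e^*$ deleted. By Yao's principle (\cref{lem:yao}), the input is $H_0$ or $H_1$ with probability $1/2$ each. Any algorithm that outputs the hyperedge count correctly with high probability must distinguish these two cases.

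\textbf{Which queries can tell them apart.} Consider a query $(Q,\Rreplace)$, allowing non-induced queries for generality. Its answer differs between $H_0$ and $H_1$ only if some $s$-subset $S$ of $e^*$ lies in $Q$ and $e^*$ is the \emph{only} tuple in $\Rreplace\cap E(H_0)$ containing $S$. In particular we need $e^*\in\Rreplace$, so $e^*\subseteq Q$. For a fixed query, the set of ``witness'' tuples $e$ for which the answer on $H_0\setminus\{e\}$ differs from the answer on $H_0$ has size at most $\binom{|Q|}{s}\le |Q|^s$, since each $s$-subset of $Q$ has at most one unique covering tuple. Since $s<c$, at most $|Q|^s\le|Q|^c$ tuples are ever ``tested'' by a query of cost $|Q|^c$.

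\textbf{The adaptive argument.} Run the deterministic algorithm on $H_0$. Its query sequence is fixed, say $(Q_1,\Rreplace_1),\dots,(Q_q,\Rreplace_q)$ with total cost $C=\sum|Q_i|^c$. On $H_1$ the algorithm follows exactly the same transcript unless $e^*$ is a witness for some query in this fixed $H_0$-transcript. The first divergence must occur at such a query, and until then the queries coincide. The union of the witness sets has size at most $\sum_i |Q_i|^s\le C$. So the probability that $e^*$ is caught is at most $C/n^r$. If $C=o(n^r)$, then with probability $1-o(1)$ the transcripts on $H_0$ and $H_1$ are identical and the outputs agree, so the algorithm errs with probability about $1/2$. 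Hence $C=\Omega(n^r)$.

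\textbf{Main obstacle.} The only delicate step is making the witness-set bound rigorous for non-induced queries with arbitrary $\Rreplace$. I would check it directly: removing $e^*$ changes only the containment bits of $s$-subsets of $e^*$, and only those whose sole covering tuple in $\Rreplace\cap E(H_0)$ is $e^*$. Each $s$-subset of $Q$ is charged at most once, which gives the $|Q|^s$ count. Note that this argument needs only $s\le c$, and in fact it yields the stronger bound that cost $\Omega(n^r)$ is needed even to distinguish $|E|=n^r$ from $|E|=n^r-1$.
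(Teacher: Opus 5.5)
Your proof is correct and is essentially the paper's own argument. It uses the same distribution (the complete $r$-partite hypergraph versus the complete hypergraph with one random hyperedge removed) and the same observation: a query $(Q,\Rreplace)$ can only detect the deletion through an $s$-subset of $Q$ whose covering tuple in $\Rreplace$ is unique, so each query tests at most $\binom{|Q|}{s}\le|Q|^c$ tuples. Your step of fixing the deterministic transcript on $H_0$ and union-bounding over its witness sets spells out the adaptivity point that the paper only asserts. Your opening heuristic that queries on few vertices give identical answers has the direction backwards, but the formal argument never uses it.
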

\begin{proof}
    Again, we use Yao's principle: it suffices to define a distribution on which any deterministic algorithm with $o(n^r)$ query complexity is likely to err.
    The distribution we choose will with $1/2$ probability be the complete $r$-partite $r$-uniform hypergraph on $n$ vertices, and with $1/2$ probability be that hypergraph with one random hyperedge removed.
    For a given query $(Q, \Rreplace)$, observe that any $s$-tuple of vertices in $Q$ that belongs to at least two distinct sets of size $r$ in $\Rreplace$ will necessarily return true, as at least one of those sets must be an edge.
    So, the query will only provide information distinguishing these two distributions if some $s$-tuple of vertices in $Q$ belongs to a \emph{unique} $r$-tuple in $\Rreplace$, and that $r$-tuple happens to be the missing edge of the graph.
    There can be at most $\binom{|Q|}{s}$ many $r$-tuples tested this way by a single query, so since $\sum_{i} |Q_i|^s \leq \sum_i |Q_i|^c \leq o(n^r)$, an algorithm with $o(n^r)$ query cost will with high probability never make such a distinguishing query.
\end{proof}

\cref{lem:use-is}, \cref{lem:cant-use-is}, \cref{lem:cant-allscount}, and \cref{lem:cant-count} together constitute our \cref{thm:sims}.

\subsection{Interpretation in terms of black-box fine-grained reductions}

As noted, any lower bound on complexity in the $3$-uniform hypergraph query model directly translates to lower bounds on black-box triangle reductions that are \emph{restricted to only ever querying subgraphs}.
So, plugging in $r=3$ and $s=2$, \cref{lem:cant-use-is} and \cref{lem:cant-allscount} directly imply \cref{cor:cantbeatvirgi} and \cref{cor:no-imm} if we consider only reductions that always query subgraphs.\\

It turns out these are the interesting cases --- the full versions of \cref{cor:cantbeatvirgi} and \cref{cor:no-imm} follow directly from slight modifications to those proofs. Specifically, note that in the proof of \cref{lem:cant-use-is}, the only properties we use of the queries are that 
\begin{itemize}
    \item a query of size $q$ involves at most $q^3$ distinct candidate triangles, and
    \item each query returns at most $O(\log(n))$ bits.
\end{itemize}
Both of these facts remain true even if the reduction is allowed to rearrange or duplicate edges. (Any graph on $q$ vertices can contain at most $q^3$ triangles, so we consider at most that many candidates from the original graph. And the $R$-triangle counting query still must return an integer between $0$ and $q^3$.) So, we can conclude \cref{cor:cantbeatvirgi}.\\

For the other case, we need to be slightly more careful.
The proof of \cref{lem:cant-allscount} (when restricted to $r=3$, $s=2$) used the fact that a query of size $q$ can reveal information about only $q$ distinct vertices in part $K$ of the tripartition.
This is true for subgraph queries, but is no longer the case when we allow the reduction to query arbitrary graphs --- it is no longer the case that each vertex in the query corresponds to a vertex in the original graph.
In principle, a query could return information pertaining to up to $q^2$ different vertices in $K$, one for each edge in the query.
Fortunately, when $2= s \leq c \leq r =3$, our convexity argument would still have applied even if we bounded the number of buckets involved in a size-$q$ query by $q^2$ instead of $q$.
That is, it's still the case that any sequence of queries of $\widetilde{o}(n^3)$ cost must query some bucket at most $\widetilde{o}(n^2)$ times, meaning that the reduction fails with high probability.
So, we can conclude \cref{cor:no-imm}.

\section{Upper bounds for exact recovery in the case \texorpdfstring{$r=3, s=2$}{r=3, s=2}}\label{sec:triangles}

In this section, we give algorithms for the problem of
recovering a $3$-uniform, $3$-partite hypergraph given access to a
cost-$n^c$ induced all-pairs containment oracle, in the case when the
oracle is a specified-parts oracle and in the case when the oracle is
an all-parts oracle. These results are particularly relevant for the
fine-grained complexity discussion. (In the language of $R$-triangles,
we give black-box algorithms for the problem of listing
$R$-triangles in a tripartite graph given access to a cost-$n^c$
oracle for all-edge triangle detection.)\\

Here, we use $G = (V = I\cup J\cup K, E)$ for the $3$-uniform,
$3$-partite graph and $\mathcal{A} = \mathcal{A}_G$ for the oracle,
and when $\mathcal{A}$ is an oracle for a specific pair of parts, we
take $I$ and $J$ to be these parts. We use $n$ for $|V|$ as usual. We
write $|E| = t = n^\tau$, where $0\le\tau\le 3$. We write $c =
3-\eps$, for some $0\le\eps\le 1$, so that $\mathcal{A}$ is a
cost-$n^{3-\eps}$ oracle. To start, we give a few subroutines that
both of our algorithms use.\\

The first subroutine uses the oracle to estimate for every pair of
vertices in $I\times J$ the number of hyperedges that it is in.

\begin{lemma}\label{lem:estimate}
 Given $G$ and access to $\mathcal{A}$, we can compute in randomized
 $\widetilde{O}(n^{3-\eps})$ time a list of estimates $a_{i,j}$ for
 every pair of vertices $(i,j) \in I \times J$, such that with high
 probability every pair $(i,j)$ belongs to at least $\frac{a_{i,
     j}}{10}$ and at most $10a_{i, j}$ distinct hyperedges in $G$.
\end{lemma}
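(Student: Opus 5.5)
We estimate each codegree $d_{i,j} = |\{k \in K : (i,j,k)\in E\}|$ with the standard subsampling trick: restricting $K$ to a random subset of density $p$ turns the oracle's containment bit for $(i,j)$ into a coin whose bias reveals whether $d_{i,j}$ is above or below roughly $1/p$. Concretely, for each scale $\ell = 0,1,\dots,\lceil \log_2 n\rceil$ and each repetition $t = 1,\dots,C\log n$, we sample $K_{\ell,t}\subseteq K$ by including each vertex independently with probability $2^{-\ell}$. We then call the induced oracle $\mathcal{A}$ on $I\cup J\cup K_{\ell,t}$. This query has at most $n$ vertices and so cost at most $n^{3-\eps}$. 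There are $O(\log^2 n)$ calls in total, for total cost $\widetilde{O}(n^{3-\eps})$. The bookkeeping of reading off the $|I||J|\le n^2\le n^{3-\eps}$ answers and aggregating them also fits in this budget, since $\eps\le 1$.

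For a fixed pair $(i,j)$ with codegree $d$, the oracle reports $(i,j)$ as contained in a hyperedge of the induced subhypergraph exactly when $K_{\ell,t}$ hits one of the $d$ vertices $k$ with $(i,j,k)\in E$. This happens with probability
\[
q_\ell(d) = 1-(1-2^{-\ell})^d .
\]
This quantity is monotone in $d$, roughly $d2^{-\ell}$ when $d2^{-\ell}\le 1$, and at least $1-e^{-d2^{-\ell}}$ otherwise. Let $\hat q_\ell$ be the empirical fraction of the $C\log n$ repetitions at scale $\ell$ that report $(i,j)$. We define $\ell^*$ as the largest $\ell$ with $\hat q_\ell \ge 1/2$ and output $a_{i,j} = 2^{\ell^*}$. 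If no scale has $\hat q_\ell\ge 1/2$, then in particular $\hat q_0<1/2$, which forces $d=0$ (since $q_0(d)=1$ for $d\ge 1$). In that case we output $a_{i,j}=0$, which is consistent with the bounds as stated.

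Correctness follows from Chernoff bounds plus a union bound over all $n^2$ pairs and $O(\log n)$ scales. With $C$ large, every $\hat q_\ell$ is within $0.1$ of $q_\ell(d)$ with high probability. We then check the two sides of the window. When $2^\ell \le d/4$, we have $q_\ell(d)\ge 1-e^{-4}>0.6$, so $\hat q_\ell\ge 1/2$ and therefore $\ell^*\ge \log_2(d/4)-1$, i.e. $a_{i,j}\ge d/8$. When $2^\ell\ge 4d$, we have $q_\ell(d)\le d2^{-\ell}\le 1/4$, so $\hat q_\ell<1/2$ and therefore $a_{i,j}<4d$. Together these give $d\in[a_{i,j}/4,\,8a_{i,j}]\subseteq[a_{i,j}/10,\,10a_{i,j}]$.

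The main point needing care is only the constant bookkeeping in this threshold window. Monotonicity of $q_\ell$ is not needed, because the bounds are applied scale-by-scale to the extreme scales on each side. Everything else is routine. In particular, the oracle here is the specified-parts version on $I\times J$, which is exactly the information required, and the induced restriction to $K_{\ell,t}$ is allowed by the query model.
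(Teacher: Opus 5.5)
Your proof is correct and follows essentially the same route as the paper: subsample $K$ at geometric rates, repeat each scale $O(\log n)$ times, query $I\cup J\cup K'$, and use Chernoff plus a union bound to find the scale where the detection frequency for each pair crosses a constant threshold. The differences are minor. You sample each vertex independently at rate $2^{-\ell}$ rather than taking fixed-size random subsets. You pick the largest $\ell$ with $\hat q_\ell\ge 1/2$ rather than the smallest $\ell$ with $y_\ell\in[1/4,3/4]$. You also explicitly handle codegree-zero pairs by outputting $a_{i,j}=0$, a case the paper's write-up glosses over.
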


\begin{proof}
For each $\ell \in \{ 0, \dots, \log n \}$, for $1000 \log n$
iterations, we sample a uniformly random subset $K' \subseteq K$,
where $|K| = 2^\ell$, and call $\mathcal{A}$ on $I \cup J \cup K'$.
For a particular pair $(i,j) \in I \times J$, let $x$ be the number of
distinct hyperedges in $G$ that contain $(i,j)$, and let $y_\ell$ be
the fraction of these $1000\log n$ iterations for which $\mathcal{A}$
detected a hyperedge containing $(i,j)$.  Applying Chernoff bounds, we obtain the following:
\begin{itemize} 
\item If $x < \frac{2^\ell}{10}$, then $\Pr\left[y_\ell < \frac{1}{4}\right] \geq 1 - \frac{1}{n^3}$.
\item If $x > 10 \cdot 2^\ell$, then $\Pr\left[y_\ell > \frac{3}{4}\right] \geq 1- \frac{1}{n^3}$.
\item If $2^{\ell-1} \leq x \leq 2^\ell$, then $\Pr\left[\frac{1}{4} \leq y_\ell \leq \frac{3}{4}\right] \geq 1- \frac{1}{n^3}$.
\end{itemize}
So, our algorithm can set $a_{i,j}$ to be $2^{\ell}$, where $\ell$ is
the smallest value such that $1/4 \leq y_\ell \leq 3/4$.  Our
guarantees ensure that with high probability such an $a_{i,j}$ will
exist for every pair in $I\times J$, and will be within a factor of
$10$ of the true number of hyperedges containing $(i, j)$.
\end{proof}  

The next subroutine uses the oracle to list, for every pair of
vertices in $I\times J$ that is in a \emph{unique} hyperedge, the
hyperedge that it is in.

\begin{lemma}\label{lem:find-uniquers-2}
Given $G$ and access to $\mathcal{A}$, there exists an
$\widetilde{O}(n^{3-\varepsilon})$-time algorithm that, for every
$(i,j) \in I \times J$ involved in \emph{exactly} one
hyperedge in $G$, lists that hyperedge.
\end{lemma}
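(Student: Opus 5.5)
The plan is to run a parallel binary search over the vertices of $K$, as in the first half of the proof of \cref{lem:find-uniquers} (now with $r=3$, $s=2$ and the oracle returning information only for pairs in $I\times J$). Fix an arbitrary bijection $f: K\to[|K|]$ and let $b=\lceil \log_2 n\rceil$. For each bit position $\ell\in[b]$ and each $\beta\in\{0,1\}$, let $K_{\ell,\beta}=\{k\in K : f(k)[\ell]=\beta\}$. We call $\mathcal{A}$ on the induced subhypergraph on $I\cup J\cup K_{\ell,\beta}$. This uses $2b=O(\log n)$ queries, each on at most $n$ vertices, so the total cost is $\widetilde{O}(n^{3-\eps})$. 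The bookkeeping (maintaining, for each of at most $n^2$ pairs, a string of $b$ bits) takes $\widetilde{O}(n^2)\le \widetilde{O}(n^{3-\eps})$ time, since $\eps\le 1$.

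Next we decode. Consider a pair $(i,j)$ that lies in exactly one hyperedge $(i,j,k^*)$. For each $\ell$, the oracle reports $(i,j)$ as contained in a hyperedge on the side $\beta=f(k^*)[\ell]$ and not on the side $1-\beta$, because the only hyperedge containing $(i,j)$ is $(i,j,k^*)$. So for each $\ell$ exactly one of the two queries reports $(i,j)$, and recording that $\beta$ spells out $f(k^*)$. We then output the triple $(i,j,f^{-1}(\text{string}))$.

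The main subtlety is that the lemma says to list the hyperedge for pairs in exactly one hyperedge, and we should not output spurious triples for other pairs. A pair in zero hyperedges is never reported, so we discard it. A pair in two or more hyperedges might be reported on both sides at some bit, in which case we discard it. However, if its hyperedges all agree on every bit, they share the same $k$, which is impossible for distinct hyperedges. So a pair in two or more hyperedges is reported on both sides at some bit where two of its $k$'s differ, and it is always discarded.

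This gives an exact characterization: the pairs that are reported on exactly one side at every bit are precisely the pairs with codegree one. The decoded triple is therefore always a true hyperedge. If it is desired, the decoded triple can also be verified with one constant-size oracle query.
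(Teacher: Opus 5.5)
Your proposal is correct and is essentially the paper's argument: a parallel binary search over the bits of an ordering of $K$, using $O(\log n)$ induced queries on $I\cup J\cup K'$ of size at most $n$. The only difference is cosmetic. You query both halves at each bit, so that being reported on exactly one side at every bit certifies codegree exactly one; the paper queries only the bit-$0$ half and then checks each decoded triple with a single constant-cost oracle call.
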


\begin{proof}

Let $g: K\to [|K|]$ be an ordering on the vertices in $K$. We run a
binary search on $K$ simultaneously for each $(i, j)\in I\times J$ to
find a vertex $k$ such that $(i, j, k)\in E$ (if one exists).\\

We initialize an array $L$ of empty strings, one for each $(i,j) \in I
\times J$.  For $\ell = 1, \ldots, \log n$, set $K'\subseteq K$ to
consist of the vertices $k\in K$ such that the $\ell$th bit of $g(k)$
is $0$. We call $\mathcal{A}$ on $I\cup J\cup K'$. For each $(i, j)$,
if $\mathcal{A}$ returns that $(i, j)$ is in some hyperedge, we append
$0$ to $L[(i, j)]$; otherwise, we append $1$. Upon exiting the outer
for-loop, for each $(i, j)$, we check whether the vertex $k$ such that
$g(k) = L[(i, j)]$ is in a hyperedge with $i$ and $j$, and if $k$ is,
we output $(i, j, k)$.\\

As we call $\mathcal{A}$ $O(\log(n))$ times on instances of size
$O(n)$, this algorithm runs in time $\widetilde{O}(n^{3-\eps})$.

\end{proof}  

Finally, we give a subroutine that bootstraps the algorithm described
in \cref{lem:find-uniquers-2} to list, for every pair of vertices $(i,
j)\in I\times J$ that's in not-too-many hyperedges, all the hyperedges
that it is in.

\begin{lemma}\label{lem:find-kinda-uniquers}
  Given $G$, access to $\mathcal{A}$, and some parameter $\alpha > 0$, there
  exists a randomized $\widetilde{O}(n^{3-\varepsilon +
    \varepsilon\alpha})$-time algorithm such that the following
  guarantee holds with high probability: for pair of vertices $(i,j)
  \in I \times J$ involved in at most $n^\alpha$ distinct hyperedges
  in $G$, the algorithm lists all hyperedges involving $(i,j)$.
\end{lemma}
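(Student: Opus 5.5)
The plan is to bootstrap \cref{lem:find-uniquers-2} by random sparsification of the part $K$, so that every pair of low codegree becomes, in some iteration, uniquely covered with respect to each of its hyperedges. Set $p = n^{-\alpha}$ (or $1/(2n^{\alpha})$ to keep constants clean). We repeat the following for $C n^{\alpha}\log n$ independent rounds, with $C$ a large constant. In each round we sample $K' \subseteq K$, including each vertex independently with probability $p$. We then run the algorithm of \cref{lem:find-uniquers-2} on the subhypergraph induced by $I \cup J \cup K'$; this is legitimate because the oracle is induced, so querying subsets of $I\cup J\cup K'$ is allowed. Every hyperedge it outputs is verified (it is certified by construction via the final membership check) and added to the output list.

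For correctness, fix a pair $(i,j)$ with codegree $d \le n^{\alpha}$ and a hyperedge $(i,j,k)$. In a given round, $(i,j,k)$ is the unique hyperedge on $(i,j)$ inside $I\cup J\cup K'$ exactly when $k \in K'$ and none of the other $d-1$ vertices $k'$ lie in $K'$. This happens with probability $p(1-p)^{d-1} \ge n^{-\alpha}(1-n^{-\alpha})^{n^{\alpha}} \ge n^{-\alpha}/4$, say, for $n^\alpha\ge 2$; the degenerate small case is handled by adjusting $p$ to $1/2$. When it happens, \cref{lem:find-uniquers-2} lists $(i,j,k)$. Over $Cn^{\alpha}\log n$ independent rounds, the probability that a given hyperedge is never isolated is at most $(1-n^{-\alpha}/4)^{Cn^{\alpha}\log n} \le n^{-C/4}$. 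A union bound over the at most $n^3$ hyperedges through low-codegree pairs then gives high probability for large $C$.

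For the cost, each round calls \cref{lem:find-uniquers-2} on an instance of at most $n$ vertices. In fact $|K'|\approx pn$ is smaller, but we do not need this, since the instance still contains all of $I$ and $J$. Each round therefore costs $\widetilde{O}(n^{3-\eps})$, and the total is $\widetilde{O}(n^{\alpha}\cdot n^{3-\eps})$.

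This is worse than the claimed $\widetilde{O}(n^{3-\eps+\eps\alpha})$, and closing that gap is the main obstacle. The fix I would try is to also subsample $I$ and $J$: partition $I$ and $J$ each into $n^{\alpha/2}$ chunks of size $n^{1-\alpha/2}$ and run each round on every pair of chunks together with $K'$. Then the instance has size $m\approx n^{1-\alpha/2}+pn$, so it is better to balance the parts. Concretely, split $I$, $J$ into blocks of size $n^{1-\alpha}$ while taking $K'$ of density $n^{-\alpha}$. Each instance then has $\Theta(n^{1-\alpha})$ vertices and costs $n^{(1-\alpha)(3-\eps)}$. There are $n^{2\alpha}$ block pairs and $n^{\alpha}$ rounds, for a total of $n^{3\alpha+(1-\alpha)(3-\eps)} = n^{3-\eps+\eps\alpha}$, which matches exactly. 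Correctness is unchanged, since each pair $(i,j)$ lies in exactly one block pair per round and isolation depends only on $K'$. So the final algorithm uses this balanced three-way partition, with polylog repetitions for amplification.
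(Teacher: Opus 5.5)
Your final algorithm is correct and is essentially the paper's proof. It uses blocks of size $n^{1-\alpha}$ in $I$ and $J$, together with $\widetilde{O}(n^{\alpha})$ independent density-$n^{-\alpha}$ samples of $K$. The paper instead randomly partitions all three parts into $n^{\alpha}$ chunks of size $n^{1-\alpha}$, runs \cref{lem:find-uniquers-2} on all $n^{3\alpha}$ chunk triples, and repeats $O(\log n)$ times. Its $n^{\alpha}$ chunks of $K$ play the role of your $n^{\alpha}$ rounds: the chunk containing $k$ is always processed, so isolation succeeds with constant probability per repetition rather than $\Theta(n^{-\alpha})$ per round. The number of instances and the $\widetilde{O}(n^{3-\eps+\eps\alpha})$ bound come out the same.
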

\begin{proof}
  Call a pair of vertices $(i, j)\in I\times J$ \defn{unpopular} if it
  is involved in at most $n^{\alpha}$ distinct hyperedges.  Note that
  it suffices to give an $\widetilde{O}(n^{3 - \eps +
    \eps\alpha})$-time algorithm with the guarantee that every
  hyperedge involving an unpopular pair is listed with constant
  probability. This is because repeating $O(\log n)$ independent
  iterations of that algorithm will w.h.p. list all such hyperedges.\\
  
  The procedure is to partition each of $I$, $J$, and $K$ into random
  subsets $I_1, \dots, I_{n^{\alpha}}$, $J_1, \dots, J_{n^{\alpha}}$,
  and $K_1, \dots, K_{n^{\alpha}}$, each of size $n^{1-\alpha}$, and
  for every triple of $q, r, s \in [n^{\alpha}]$, we run the algorithm
  of \cref{lem:find-uniquers-2} on $G[I_q, J_r, K_s]$.\\

  To verify this works, fix some particular hyperedge $(i,j,k)$ with
  $(i,j)$ unpopular, and let $I^* \in \{I_1, \dots, I_{n^{\alpha}}\},
  J^* \in \{J_1, \dots, J_{n^{\alpha}}\}, K^* \in \{K_1, \dots,
  K_{n^{\alpha}}\}$ be the subsets in the random partition that $i$,
  $j$, and $k$ are mapped to, respectively.  Since $(i,j)$ is
  unpopular, there are at most $10n^{\alpha}$ distinct values $k'\neq
  k$ such that $(i,j,k')$ is a hyperedge.  So, the probability that
  any such $k'$ is assigned to $K^*$ is at most $1 - \left(1 -
  \frac{1}{n^a}\right)^{10 n^\alpha} \leq 1 - e^{-20}$.  As long as
  this doesn't happen, $(i,j)$ will be involved in a unique
  hyperedge in $G[I^*, J^*, K^*]$, and so the algorithm will list
  $(i,j,k)$.\\

  Now, to calculate cost, observe that we run the algorithm of
  \cref{lem:find-uniquers-2} on $(n^{\alpha})^3$ instances, each on
  $n^{1-\alpha}$ vertices.  So, the overall runtime is
  $\widetilde{O}\paren{\left(n^{\alpha}\right)^3 \cdot
    \left(n^{1-\alpha}\right)^{3-\varepsilon}} =
  \widetilde{O}\paren{n^{3 - \eps + \eps\alpha}}$.
  
\end{proof}

With these subroutines, our specified-parts oracle algorithm is pretty
straightforward.

\begin{restatable}{theorem}{babytriangleub}\label{thm:baby-triangle-ub}
    Given cost-$n^c$ specified-parts induced all-pairs containment oracle access to an unknown $3$-partite, $3$-uniform hypergraph $H = (V, E)$, we can exactly recover $H$ with high probability at cost 
    $\softo{|V|^{3(1-\gamma)}|E|^{\gamma}},$
    where $\gamma = \frac{3-c}{4-c}$.
\end{restatable}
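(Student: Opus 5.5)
We split the pairs $(i,j)\in I\times J$ by their codegree, using a threshold $n^\alpha$ chosen at the end. Write $n=|V|$, $t=|E|=n^\tau$ and $c=3-\eps$. The target cost is $n^{3(1-\gamma)}t^{\gamma}$ with $\gamma=\eps/(1+\eps)$.

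First, we run \cref{lem:estimate} at cost $\softo{n^{3-\eps}}$. This gives estimates $a_{i,j}$ that are within a factor $10$ of the true codegrees.

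Next, for the low-codegree pairs we invoke \cref{lem:find-kinda-uniquers} with parameter $\alpha$. With high probability this lists every hyperedge whose $IJ$-pair has codegree at most $n^\alpha$, at cost $\softo{n^{3-\eps+\eps\alpha}}$.

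Then we handle the high-codegree pairs. Every hyperedge contains exactly one pair in $I\times J$, so the number of pairs with codegree above $n^\alpha/100$ is at most $100\,t/n^{\alpha}$. We flag every pair with $a_{i,j}\ge n^\alpha/10$; with high probability this set contains every pair of true codegree above $n^\alpha$, and has size $O(t/n^\alpha)$. For each flagged pair $(i,j)$ and each $k\in K$, we decide whether $(i,j,k)\in E$ with a constant-size query on $\{i,j,k\}$. Since the oracle is induced and specified-parts, querying $\{i,j,k\}$ reports whether $(i,j)$ lies in a hyperedge within that vertex set, and the only candidate is $(i,j,k)$. Each query costs $O(1)$, so this phase costs $O(n\cdot t/n^\alpha)=O(n^{1+\tau-\alpha})$.

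Correctness is then clear. Every hyperedge either has an $IJ$-pair of codegree at most $n^\alpha$, and is found in the low-codegree phase, or has a pair of codegree above $n^\alpha$, which is flagged and brute-forced. We take a union bound over the polynomially many high-probability events. The only real work is the threshold choice, since the last two phases trade off against each other. Setting $3-\eps+\eps\alpha=1+\tau-\alpha$ gives
\[\alpha=\frac{\tau-2+\eps}{1+\eps}.\]
Substituting, the total cost becomes $n^{3-\eps+\eps\alpha}$, and $3-\eps+\eps\alpha = 3-\tfrac{3\eps}{1+\eps}+\tfrac{\eps\tau}{1+\eps} = 3(1-\gamma)+\gamma\tau$. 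This is exactly $n^{3(1-\gamma)}t^{\gamma}$.

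The main obstacle is the edge cases of $\alpha$. If $\tau<2-\eps$, then $\alpha<0$. In that case we simply skip the threshold and run \cref{lem:find-uniquers-2} after a random partition into parts of the appropriate size; equivalently, we clamp $\alpha$ to $0$ and apply \cref{lem:find-kinda-uniquers}. We then need to check that the cost $n^{3-\eps}$ is still dominated by the claimed bound; it is not, since for $\tau<2-\eps$ we have $3(1-\gamma)+\gamma\tau<3-\eps$. This suggests instead partitioning $V$ into $n/m$-sized blocks and recursing so that the cost scales with $t$. For the main regime $2-\eps\le\tau\le 3$ we have $0\le\alpha\le 1$ and the argument goes through directly.
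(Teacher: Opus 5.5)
Your argument for the regime $2-\eps\le\tau\le 3$ is the paper's proof. It uses the same split by estimated codegree, \cref{lem:find-kinda-uniquers} for pairs of codegree at most $n^\alpha$, brute force over $K$ for the $O(t/n^\alpha)$ flagged pairs, and the same balancing choice $\alpha=(\tau-2+\eps)/(1+\eps)$. Flagging at $a_{i,j}\ge n^\alpha/10$ is a slightly cleaner way to absorb the factor-$10$ slack of \cref{lem:estimate} than the paper's threshold at $n^\alpha$.

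Your last paragraph, however, looks for a fix that cannot exist. When $\tau<2-\eps$ the target exponent $3(1-\gamma)+\gamma\tau$ drops below $c$. But cost $\Omega(n^c)$ is necessary for exact recovery even with a single hyperedge:
\begin{itemize}
\item Take balanced parts and $E=\{e\}$ with $e$ uniform in $I\times J\times K$.
\item Any query $Q$ with $e\not\subseteq Q$ answers all zeros, so along the all-zero branch a deterministic algorithm issues a fixed list of queries.
\item That list must contain $e$ with constant probability. Otherwise the transcript, and hence the output, is the same for every uncovered $e$.
\item This forces $\sum_i|Q_i|^3=\Omega(n^3)$. Since $|Q_i|\le n$ and $c<3$, it follows that $\sum_i|Q_i|^c\ge n^{c-3}\sum_i|Q_i|^3=\Omega(n^c)$.
\end{itemize}
So no partition-and-recurse scheme can push the cost below $n^c$ in that regime. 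The statement should be read under the hypothesis $|E|\ge|V|^{c-1}$, or in general as $\softo{|V|^c+|V|^{3(1-\gamma)}|E|^{\gamma}}$. The paper's proof makes the same tacit restriction: it uses $\alpha>0$ when it says the estimation cost is dominated. In the small-$\tau$ regime, your clamped choice $\alpha=0$ already achieves the optimal $\softo{n^c}$. So does simply making one full query and then brute-forcing over $K$ for the at most $t$ reported pairs, at cost $n^c+nt\le 2n^c$. You should replace the recursion idea with this observation and state the hypothesis explicitly.
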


\begin{proof}[Proof of \cref{thm:baby-triangle-ub}.]

Let $G = (V = I\cup J\cup K, E)$, where $|E| = n^\tau$, be an unknown,
$3$-partite, $3$-uniform hypergraph, and let $\mathcal{A} =
\mathcal{A}_G$ be a cost-$n^c$ induced specified-parts all-$2$
containment oracle, where $c = 3-\eps$. First, we call the algorithm
given in \cref{lem:estimate} on $G$ to obtain an estimate $a_{i, j}$
for each $(i, j)\in I\times J$ of the number of hyperedges that $(i,
j)$ is in. Set $\alpha = (\tau+\eps-2)/(1+\eps)$, and call a pair of
vertices $(i, j)\in I\times J$ \defn{popular} if $a_{i, j}\ge
n^\alpha$, and \defn{unpopular} otherwise. The idea is to list the
hyperedges involving popular pairs and unpopular pairs
separately. First, for each popular pair $(i, j)$, we iterate through
all $k\in K$ and make a constant-cost call to $\mathcal{A}$ to check
whether $(i, j, k)$ is a hyperedge. Then we call the algorithm
described in \cref{lem:find-kinda-uniquers} on $G$ and $\alpha$ to
list all hyperedges involving an unpopular pair.\\

Since each popular pair is in at least $n^\alpha$ hyperedges and there
are $n^\tau$ hyperedges total, there are at most $n^{\tau-\alpha}$
popular pairs. So listing the hyperedges involving a popular pair
takes time $O(n^{\tau-\alpha+1})$. By \cref{lem:find-kinda-uniquers},
listing all the hyperedges involving an unpopular pair takes time
$\widetilde{O}(n^{3-\eps+\eps\alpha})$. (Additionally, estimating the
number of hyperedges that each pair is in is time
$\widetilde{O}(n^{3-\eps})$ by \cref{lem:estimate}, but since $\alpha
> 0$ this cost is dominated by the cost of
\cref{lem:find-kinda-uniquers}.) So the total runtime of the algorithm
is $\widetilde{O}(n^{\tau-\alpha+1}+n^{3-\eps+\eps\alpha})
= \widetilde{O}(n^{3-\eps(3-\tau)/(1+\eps)})$.
  
\end{proof}

Now we give our all-parts oracle algorithm. In this case, we will once again identify some collection of ``popular'' pairs, which are involved in many hyperedges. But now, we will do some additional processing to ensure that these popular pairs are not too highly clustered. That is, while the graph on popular pairs (which is an explicit graph we have access to from our estimates, as opposed to being queried via the oracle) has too many triangles, we will process all of the pairs involved in triangles of popular pairs with a single particular vertex, and then remove those popular pairs from consideration in all later queries. This introduces two downsides over \cref{thm:baby-triangle-ub}: one is that, although our query cost is low, we introduce substantial computational overhead from finding triangles in the graph of popular pairs. The other is that, since we must remove some popular pairs from consideration, this algorithm needs the general (i.e. not-necessarily-induced) version of all-pairs containment queries. The statement is as follows:

\allpartsub*

\begin{proof}[Proof of \cref{thm:allparts-ub}.]

Again, let $G = (V = I\cup J\cup K, E)$, where $|E| = t = n^\tau$, be
an unknown, $3$-partite, $3$-uniform hypergraph, and let $\mathcal{A}
= \mathcal{A}_G$ be a cost-$n^c$ induced specified-parts all-$2$
containment oracle, where $c = 3-\eps$. We first call the algorithm
given in \cref{lem:estimate} on $G$, this time for all pairs of parts,
to obtain an estimate $a_{u, v}$ for every $(u, v)\in V^2$. Set
parameters $\alpha = (x+\tau-3+\eps)/\eps$ if $2\le\tau\le 3$ and
$\alpha = (x-1+\eps)/\eps$ if $0\le\tau\le 2$, and $\beta = x+\tau-1$
if $2\le\tau\le 3$ and $\beta = x+1$ if $0\le\tau\le 2$, where $x =
-\eps+(-1+\sqrt{9+4\eps^2})/2$. Call a pair of vertices $(u, v)$
\defn{popular} if $a_{u, v}\ge n^{\alpha}$ and \defn{unpopular}
otherwise. We call the algorithm from \cref{lem:find-kinda-uniquers}
on $G$ and $\alpha$ to list all hyperedges that involve an unpopular
pair. So from this point on we can think of the graph as having only
popular pairs of vertices.\\

Note that now, since every pair of vertices is popular, there can be
at most $n^{\tau-\alpha}$ pairs of vertices in each of $I\times J$,
$J\times K$, and $I\times K$, since otherwise we would have $|E| >
n^\tau$. We iterate through all pairs of vertices and count for each
vertex the number of pairs that it is in. Call a vertex
\defn{fashionable} if it is in at least $1000n^{\tau-1-\alpha}$
popular pairs and \defn{unfashionable} otherwise. Note that $I$ can
contain at most $n/10$ fashionable vertices, since otherwise there
would be more than $n/10 \cdot 1000n^{\tau-1-\alpha} =
100n^{\tau-\alpha}$ popular pairs of vertices in either $I\times J$ or
$I\times K$. Let $I^*\subseteq I$ be the set of fashionable vertices
in $I$, and break $J$ into two halves $J_0$ and $J_1$ and $K$ into two
halves $K_0$ and $K_1$. We recurse on $G[I^*\cup J_{b}\cup K_{b'}]$
for every $b', b'\in\{0, 1\}$ to list all hyperedges involving an
fashionable vertex $i$. So now all that remains is to list the
hyperedges that contain an unfashionable vertex in $I$.\\

To do this, we construct an auxiliary tripartite graph $G' = (V', E')$
(an ordinary graph) on vertex parts $I' = I\backslash I^*$, $J' = J$,
and $K' = K$. The edges in $G'$ are exactly the popular pairs of
vertices in $G$ in $(I'\times J')\cup (I'\times K')\cup (J'\times
K')$. Note that every hyperedge not yet listed in $G$ is a triangle in
$G'$. The plan now is to find all hyperedges that contain some $i\in
I'$ that is in many triangles in $G'$; then, separately find the
hyperedges containing $i$ in not-so-many triangles.\\

Now we check whether there is an $i\in I'$ in at least $n^\beta$
triangles. If so, we partition $I'$ into parts $I'_1, \ldots,
I'_{n^{2+\alpha-\tau}}$, each of size $n^{\tau-1-\alpha}$. While there
is an $i\in I'$ in at least $n^\beta$ triangles, we recurse on
$G[I'_\ell\cup N(i)]$ for every $\ell\in [n^{2+\alpha-\tau}]$ (where
we use $N(i) = N_{G'}(i)$ for the neighborhood of $i$ in $G'$) to list
all the hyperedges in $G$ that involve pairs of vertices that are
edges in $G'[N(i)]$. Note that this includes all the hyperedges that
involve vertex $i$. Then we delete the edges in $G'[N(i)]$ from $G'$,
and then check whether there is still an $i\in I'$ in at least
$n^\beta$ triangles. After exiting this while-loop, to list the
remaining hyperedges, we iterate through all triangles $(i, j, k)$ in
$G'$ that contain an $i$ in at most $n^\beta$ triangles and make a
constant-cost call to $\mathcal{A}$ to check whether $(i, j, k)$ is a
hyperedge in $G$.\\

Then, the (worst-case) runtime of this algorithm, not considering
computational costs, is given by the recurrence
\[T(n, t)\le n^{2-\beta}T\left(\frac{t}{n^{1+\alpha}}, \frac{tn^\beta}{n^2}\right)+ 4T\left(\frac{n}{2}, \frac{t}{4}\right) + \widetilde{O}(n^{3-\eps+\eps\alpha}) + n^{1+\beta},
\]
where, going step by step through the algorithm, we can account for
the terms as follows:
\begin{itemize}

\item

estimating the number of hyperedges that each pair of vertices is
involved in and finding all hyperedges involving unpopular pairs takes
time $\widetilde{O}(n^{3-\eps+\eps\alpha})$;

\item

recursing to find all hyperedges that involve a fashionable vertex
takes time $\sum_{j = 1}^4 T(n/2, t_j)$, where $\sum_j t_j$ is at most
$t$; applying a convexity argument (Jensen's inequality), we get that
$\sum_{j = 1}^4 T(n/2, t_j)$ is maximized when the $t_j$'s are all the
same, i.e., $t_j = t/4$ for all $j$;

\item

the recursive calls made in the while-loop take time
$\sum_{\ell_1}^{K} n/(t/n^{1+\alpha})\cdot T(t/n^{1+\alpha},
t_{\ell_1, \ell_2})$, where $K$ is the number of iterations and for
each $\ell_1\in [K]$, $\ell_2\in [n/(t/n^{1+\alpha})]$, $t_{\ell_1,
  \ell_2}$ is the number of triangles in the subgraph of $G$ induced
by $N(i_{\ell_1})\cup I'_{\ell_2}$, where $i_{\ell_1}$ is the vertex
being considered in the $\ell_1$th iteration of the loop. We have that
$K\le t/n^{\alpha+\beta}$, since $G'$ initially contains at most
$t/n^\alpha$ edges between parts $J'$ and $K'$, and in each iteration we
remove $n^\beta$ of these edges (and once all these edges are removed
we have to have exited the loop). By applying a convexity argument, we
can upper bound the running time of this step by this sum where we
take $K$ to be the maximum possible and set all $t_{\ell_1, \ell_2}$
equal, which implies $t_{\ell_1, \ell_2} = tn^\beta/n^2$. This comes
out to be $t/n^{\alpha+\beta}\cdot n/(t/n^{1-\alpha})T(t/n^{1+\alpha},
tn^\beta/n^2) = n^{2-\beta}T(t/n^{1+\alpha}, tn^\beta/n^2)$; and

\item

finally, listing $n^{1+\beta}$ triangles in $G'$ at the end requires
time at least $n^{1+\beta}$.

\end{itemize}  

Now we inductively show that this recurrence satisfies $T(n, t)\le
K(tn^x+n^{2+x})$, where $x = -\eps+(-1+\sqrt{9+4\eps^2})/2$ and $K$ is
some sufficiently large constant. Assume the inductive hypothesis that
$T(n', t')\le K(t'n'^x+n'^{2+x})$ for all $t' < t$, $n' < n$. It
suffices to show that each of the four summands (where we ignore
polylogarithmic factors by taking $K$ large enough) are at most
$(K/4)(tn^x+n^{2+x})$. Applying the inductive hypothesis, we get that
$4T(n/2, t/4)\le K((t/4)(n/2)^x+(n/4)^{2+x})$ and
$n^{2-\beta}T(t/n^{1+\alpha}, tn^\beta/n^2)\le
K(n^{2-\beta}(tn^{\beta-2}(t/n^{1+\alpha})^x+(t/n^{1+\alpha})^{2+x}))$. It
is clear that the right-hand side of the first inequality is at most
$(K/4)(tn^x+n^{2+x})$ when $K$ is large enough. To see that the other
terms are also at most $(K/4)(tn^x+n^{2+x})$, it suffices to show that
our parameters satisfy the following systems of inequalities (which come from looking at the exponents in these inequalities).
\begin{itemize}

\item If $2\le\tau\le 3$, we have the following:

  \begin{align*}
  \beta+1 &\leq x+\tau \\
  3-\eps+\eps\alpha &\leq x+\tau\\
  2-\beta+\max(x(\tau-1-\alpha)+\tau-2+\beta, (2+x)(\tau-1-\alpha)) &\leq x+\tau.
  \end{align*}

\item Otherwise, if $0\le\tau\le 2$, we have the following:

  \begin{align*}
  \beta+1&\leq 2+x\\
  3-\eps+\eps\alpha &\leq 2+x\\
  2-\beta+(2+x)(\tau-1-\alpha)&\leq 2+x.  
  \end{align*}  

\end{itemize}

As chosen, $\alpha$ and $\beta$ satisfy the first two constraints in
both systems. So all that remains is to check that the last constraint
in each system is satisfied. In the first system, first consider the
case where $x(\tau-1-\alpha)+\tau-2+\beta \ge
(2+x)(\tau-1-\alpha)$. Then the constraint is $x(\tau-2-\alpha)\le
0$, which, substituting, yields $\tau(1-\eps) \ge 3-3\eps-x$. Since
$\tau\ge 2$, it suffices to show that $2(1-\eps) \ge 3-3\eps-x$, which
simplifies to $\eps \ge 1-x$, which holds. Next we consider the case
where $x(\tau-1-\alpha)+\tau-2+\beta \le (2+x)(\tau-1-\alpha)$. Then
the constraint becomes $2-\beta+(2+x)(\tau-1-\alpha)\le x+\tau$, which,
substituting, yields $(\tau-3)x-\alpha(2+x)+1\le 0$. With some
algebra, it can be seen that this holds for for $\tau\ge 2$, $\eps > 0$. Finally, the left-hand side of the last constraint in the second
system is at most $2-\beta+(2+x)(1-\alpha)\le 2+x$, taking $\tau = 2$,
so it suffices to show this is at most the right hand side. This
inequality yields $(x-1-+\eps)/\eps\ge (x-1)/(2+x)$, which holds since
we have $x-1+\eps\ge x-1$ and $\eps\le 2+x$.
\end{proof}  

Next we give an upper bound that considers computational costs. We first introduce a result that we will use on listing triangles, due to Bj\"{o}rklund, Pagh, Vassilevska Williams, and Zwick.

\begin{lemma}[Theorem 1, \cite{bjorklund2014listing}]\label{lem:list-triangles}

There exists a deterministic algorithm that lists all $t$ triangles in a graph of $n$ vertices in time $\widetilde{O}(n^\omega+n^{3(\omega-1)/(5-\omega)}t^{2(3-\omega)/(5-\omega)})$. Setting $\omega = 2$, the running time is $\widetilde{O}(n^2+nt^{2/3})$.
    
\end{lemma}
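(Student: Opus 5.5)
Since this lemma is quoted from \cite{bjorklund2014listing}, the plan is only to recover its bound (with the focus on the $\omega=2$ form $\widetilde{O}(n^2+nt^{2/3})$) by combining matrix multiplication with a heavy/light split on \emph{edges} according to their triangle counts. I have not verified the final balancing, and I flag below where it could fail.

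\textbf{Step 1: count triangles per edge and remove heavy edges.} Compute $A^2$ for the adjacency matrix $A$ in time $\widetilde{O}(n^\omega)$. This gives, for every edge $(u,v)$, the number $c(u,v)$ of triangles through it. Since $\sum_{(u,v)\in E} c(u,v)=3t$, at most $3t/\Delta$ edges have $c(u,v)\ge\Delta$; call these \emph{heavy}. For each heavy edge, scan all $n$ possible third vertices and output every triangle found. This costs $O(nt/\Delta)$ and lists every triangle containing a heavy edge. We then delete the heavy edges. Every remaining triangle has all three edges light, so each remaining edge lies in fewer than $\Delta$ remaining triangles.

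\textbf{Step 2: isolate and list the light triangles.} Randomly partition $V$ into $k\approx\Delta$ blocks of size $n/k$. For a fixed light edge $(u,v)$ and a fixed triangle $(u,v,w)$, with constant probability no other common neighbour of $u,v$ lands in $w$'s block. For each triple of blocks, compute a Boolean product with witnesses restricted to that triple. This is the matrix version of the simultaneous binary search from \cref{lem:find-uniquers-2}: $O(\log n)$ products recover the unique witness of every pair that has exactly one. Repeating $O(\log n)$ times with fresh partitions lists every light triangle with high probability. The cost is $\widetilde{O}(k^3 (n/k)^\omega)$, which is $\widetilde{O}(\Delta n^2)$ when $\omega=2$.

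\textbf{Step 3: recurse instead of isolating directly (the main obstacle).} Balancing only Steps 1 and 2 gives $n^{1.5}t^{1/2}$, which is weaker than the claimed $nt^{2/3}$. The extra saving must come from not isolating down to unique witnesses. Instead, the split should be applied recursively on the block triples. Each of the $k^3$ triples is a subgraph on $3n/k$ vertices and contains about $t/k^3$ triangles in expectation. The total triangle count is preserved, so we recurse with a fresh edge threshold. This gives a recurrence of the shape
\[T(n,t)\le \widetilde{O}(n^\omega) + \frac{nt}{\Delta} + k^3\, T\!\left(\frac{n}{k},\frac{t}{k^3}\right),\]
with the base case being either trivial (few triangles, so isolation succeeds) or dense (handled by Step 1 alone).

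I would guess the solution $T(n,t)=n^2+nt^{2/3}$ for $\omega=2$, choose $k$ as a power of $t/n^{3}$ tuned level by level, and verify it inductively as in the proof of \cref{thm:allparts-ub}. For general $\omega$, the same balancing should produce the exponents $3(\omega-1)/(5-\omega)$ and $2(3-\omega)/(5-\omega)$. I expect the hard part to be choosing $\Delta$ and $k$ so that the $nt/\Delta$ scanning terms, summed over recursion levels, form a geometric series dominated by the top level. If they do not, the fallback is to consult the original argument in \cite{bjorklund2014listing}, since the result is cited rather than reproved here.
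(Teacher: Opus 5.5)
The paper gives no proof of this lemma; it is imported from \cite{bjorklund2014listing}. Your Steps 1 and 2 are correct. As you say, they only give $n^{1.5}t^{1/2}$ for $\omega=2$, and Step 3 cannot close the gap, because random partitioning preserves the density $t/n^3$ and so the recursion gains nothing. Substituting your guess gives $k^3T(n/k,t/k^3)\le k^3\bigl((n/k)^2+(n/k)(t/k^3)^{2/3}\bigr)=kn^2+nt^{2/3}$. The recursive term alone uses the whole $nt^{2/3}$ budget and overshoots the $n^2$ term by a factor $k$, so the induction fails for every $k\ge 2$. Unrolling shows the recursion is just Step 2 done in stages:
\begin{itemize}
\item At total partition factor $K$, every pair lies in about $K$ subproblems, so the witness products cost about $Kn^2$.
\item An edge lying in $\Delta$ triangles is isolated only once $K\gtrsim\Delta$. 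Otherwise it must be scanned, at total cost about $n$ over its subproblems.
\end{itemize}
So this recurrence cannot beat $\min_\Delta(nt/\Delta+\Delta n^2)=\Theta(n^{1.5}t^{1/2})$.

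The missing idea is that isolation only requires the \emph{queried} pair to be light. If $(u,v)$ has fewer than $\Delta$ common neighbours, your Step 2, run on the \emph{original} graph, finds every triangle through $(u,v)$, whether or not its other two edges are heavy. So you should neither scan heavy edges against all $n$ vertices nor delete them. Instead:
\begin{itemize}
\item Step 2 with $k\approx\Delta$ blocks lists every triangle containing at least one light edge, in $\softo{\Delta^{3-\omega}n^\omega}$ time.
\item Every remaining triangle has all three edges heavy, so it lies in the heavy-edge graph, which has only $m'\le 3t/\Delta$ edges.
\item List those triangles with a sparse routine. Brute-force over vertices of degree at most $D$ in the heavy-edge graph, in $O(m'D)$ time. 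Then call the dense routine recursively on the at most $2m'/D$ remaining vertices.
\item Choose $D$ to be a large constant times $\max(m'^{1/3},t^{1/3})$, so the recursive instance shrinks by a constant factor.
\end{itemize}
For $\omega=2$ the sparse routine then costs $\softo{m'^{4/3}+m't^{1/3}}$. Take $\Delta=\max(2,t^{2/3}/n)$. Then $m'=O(nt^{1/3})$ when $t\ge n^{3/2}$ and $m'=O(t)$ otherwise, and every term is $\softo{n^2+nt^{2/3}}$.

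For general $\omega$ the sparse routine costs $\softo{m'^{2\omega/(\omega+1)}+m'^{3(\omega-1)/(\omega+1)}t^{(3-\omega)/(\omega+1)}}$. Balancing its second term against $\Delta^{3-\omega}n^\omega$ gives $\Delta=t^{2/(5-\omega)}n^{-(\omega+1)/(5-\omega)}$. This yields exactly the exponents $3(\omega-1)/(5-\omega)$ and $2(3-\omega)/(5-\omega)$. To my knowledge, this light/heavy-edge split, with the dense and sparse routines calling each other, is essentially the algorithm of \cite{bjorklund2014listing}.
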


\allpartsubcomp*
\begin{proof}[Proof of \cref{thm:allparts-ub-comp}]
Now that we consider computational costs, the running time of our algorithm is given by the recurrence
\[T(n, t)\le n^{2-\beta}T\left(\frac{t}{n^{1+\alpha}}, \frac{tn^\beta}{n^2}\right)+ 4T\left(\frac{n}{2}, \frac{t}{4}\right) + \widetilde{O}(n^{3-\eps+\eps\alpha}) + \widetilde{O}(n^2+n^{5/3+2\beta/3}) + t^2n^{2\alpha-\beta},
\]
where we set parameters as follows:
\begin{itemize}
    \item $x = \frac{-2-5\eps + \sqrt{36+12\eps+25\eps^2}}{4}$,
    \item if $2\le\tau\le 3$, $\alpha = (x+\tau-3+\eps)/\eps$, and if $0\le\tau\le 2$, $\alpha = (x-1+\eps)/\eps$, and
    \item if $2\le\tau\le 3$, $\beta = (3x+3\tau-5)/2$, and if $0\le\tau\le 2$, $\beta = (3x+1)/2$.
\end{itemize}
This is the same recurrence as we had before, when only considering information-theoretic costs, except that the term $n^{1+\beta}$ from before is replaced with $\widetilde{O}(n^2+n^{5/3+2\beta/3}) + t^2n^{2\alpha-\beta}$ (and we change the parameters). The $t^2n^{2\alpha-\beta}$ term is the cost of finding, across every iteration of the while-loop, the number of triangles that every vertex $i\in I'$ is in, using matrix multiplication, assuming $\omega = 2$. The $\widetilde{O}(n^2+n^{5/3+2\beta/3})$ term is the cost of listing all the remaining triangles at the end of the algorithm, applying \cref{lem:list-triangles} with $\omega = 2$.

As before, we guess that this recurrence satisfies $T(n, t)\le K(tn^x+n^{2+x})$ for some sufficiently large constant $K$ and we show by induction that this is the case. Assume that $T(n', t')\le K(t'{n'}^x + {n'}^{2+x})$ for all $t' < t$, $n' < n$. It suffices to show that each of the five summands (where we ignore polylog factors by taking $K$ large enough) are at most $(K/5)(tn^x+n^{2+x})$.  Applying the inductive hypothesis, we get that
$4T(n/2, t/4)\le K((t/4)(n/2)^x+(n/4)^{2+x})$ and
$n^{2-\beta}T(t/n^{1+\alpha}, tn^\beta/n^2)\le
K(n^{2-\beta}(tn^{\beta-2}(t/n^{1+\alpha})^x+(t/n^{1+\alpha})^{2+x}))$. It
is clear that the right-hand side of the first inequality is at most
$(K/5)(tn^x+n^{2+x})$ when $K$ is large enough. To see that the other
terms are also at most $(K/5)(tn^x+n^{2+x})$, it suffices to show that
our parameters satisfy the following systems of inequalities (which come from looking at the exponents in these inequalities).
\begin{itemize}

\item If $2\le\tau\le 3$, we have the following:

  \begin{align*}
  5/3+2\beta/3 &\leq x+\tau \\
  3-\eps+\eps\alpha &\leq x+\tau\\
  2\tau+2\alpha-\beta &\leq x+\tau \\
  2-\beta+\max(x(\tau-1-\alpha)+\tau-2+\beta, (2+x)(\tau-1-\alpha)) &\leq x+\tau.
  \end{align*}

\item Otherwise, if $0\le\tau\le 2$, we have the following:

  \begin{align*}
  5/3+2\beta/3 &\leq 2+x \\
  3-\eps+\eps\alpha &\leq 2+x\\
  2\tau+2\alpha-\beta&\leq 2+x\\
  2-\beta+(2+x)(\tau-1-\alpha)&\leq 2+x.  
  \end{align*}  

\end{itemize}

In each of these systems, the parameters $\alpha$ and $\beta$ are chosen to satisfy the first two constraints, and by plugging in expressions for $\alpha$ and $\beta$ in the third constraint and doing a little algebra, we can readily check that the third constraint is also satisfied. So what remains is to check that the last constraint is satisfied. 

First we consider the first system. We have two cases:
\begin{enumerate}
    \item $x(\tau-1-\alpha)+\tau-2+\beta \ge (2+x)(\tau-1-\alpha)$

    In this case, the constraint we need to satisfy is $2-\beta+x(\tau-1-\alpha)+\tau-2+\beta\le x+\tau$. This is the same as $(\tau-2-\alpha)x\le 0$, which yields $\tau-2-\alpha\le 0$ since we have $x > 0$. Substituting in our expression for $\alpha$, we get that this is the same as $x\ge 1-\eps$, which holds.

    \item $x(\tau-1-\alpha)+\tau-2+\beta \le (2+x)(\tau-1-\alpha)$

    In this case, the constraint we need to satisfy is $2-\beta+(2+x)(\tau-1-\alpha)\le x+\tau$. Simplifying, this yields $x(\tau-2-\alpha)\le (2\alpha-\tau+\beta)$. Assuming that $x > 1-\eps$, plugging in $\alpha$ and $\beta$, and solving for $x$, we get  that $x\ge\frac{-2-5\eps + \sqrt{36+12\eps+25\eps^2}}{4}$, so this constraint holds for our choice of $x$.
\end{enumerate}
Next we consider the second system. Plugging in our expressions for $\alpha$ and $\beta$ and doing some algebra, we get that this inequality is satisfied for all positive $x$, which means that this constraint holds.
\end{proof}

\subsection{Interpretation in terms of black-box fine-grained reductions}

This exact recovery problem is relevant in the fine-grained setting because it allows us to solve triangle \emph{listing}. This result partially answers the question from \cref{sec:finegrained} of finding a dense analogue of Duraj, Kleiner, Polak, and Vassilevska Williams's theorem from \cite{duraj2020equivalences}. That is, \cref{thm:allparts-ub-comp} implies the following:

\begin{corollary}
    Assuming $\omega = 2$, for any $2 \leq c \leq 3$ and any $0 \leq k \leq 3$, the problem of listing $n^k$ $R$-triangles $\paren{n^c, n^{\paren{\frac{5c-9 + \sqrt{297-162c+25c^2}}{4}}+ \max(k - 2, 0)}}$-black-box reduces to all-edge $R$-triangle detection.
\end{corollary}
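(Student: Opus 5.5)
We prove the corollary by instantiating the recovery algorithm of \cref{thm:allparts-ub-comp} on the hypergraph of $R$-triangles, after thinning the input so that it contains only about $n^k$ triangles. Set $\eps = 3-c$ and $x = \frac{-2-5\eps+\sqrt{36+12\eps+25\eps^2}}{4}$. Substituting $\eps=3-c$ gives
\[2+x = \frac{5c-9+\sqrt{297-162c+25c^2}}{4},\]
since $6-5(3-c)=5c-9$ and $36+12(3-c)+25(3-c)^2 = 297-162c+25c^2$. So for an $n$-vertex hypergraph with $t$ hyperedges, \cref{thm:allparts-ub-comp} costs $\softo{tn^x+n^{2+x}}$. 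The target runtime is exactly this bound with $t = O(n^k)$, because $n^{x+k}+n^{2+x} = O(n^{2+x+\max(k-2,0)})$.

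\textbf{Step 1: translate the algorithm into a black-box reduction.} First make the input tripartite in the standard way, using three copies of $V$. Then let the hypergraph $H$ consist of the $R$-triangles; when edge weights are distinct, $H$ is arbitrary. Each operation in the proof of \cref{thm:allparts-ub} and \cref{thm:allparts-ub-comp} must then correspond to something an all-edge $R$-triangle detection oracle can do.
\begin{itemize}
\item Induced queries on a vertex subset become calls on induced subgraphs.
\item The single non-induced operation, deleting the popular pairs in $G'[N(i)]$ from later consideration, becomes deleting those edges from the weighted graph. This is allowed because a pair here is a graph edge.
\item Constant-size checks $(i,j,k)$ become calls on $3$-vertex graphs.
\end{itemize}
All remaining work runs on explicit graphs: estimates, popularity and fashionability bookkeeping, triangle counting by matrix multiplication with $\omega=2$, and triangle listing via \cref{lem:list-triangles}. \cref{thm:allparts-ub-comp} already charges this computation. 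Hence, given a cost-$n^c$ all-edge detector, we can list all $R$-triangles in time $\softo{tn^x+n^{2+x}}$.

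\textbf{Step 2: reduce to $t=O(n^k)$.} If $t \le n^k$, Step 1 already lists everything in the required time. Otherwise we subsample the part $K$.
\begin{enumerate}
\item Estimate $t$ up to a constant factor using the known polylog-query approximate counting via detection (cited in \cref{sec:hypergraphs}). Each query is a detection call, which an all-edge call answers at cost $n^c$.
\item Keep each vertex of $K$ independently with probability $p \approx 2n^k/t$. Every triangle survives with probability exactly $p$, so the surviving count $t'$ has expectation about $2n^k$.
\item Run Step 1 on $G[I\cup J\cup K']$ with a time cap of $C(n^{x+k}+n^{2+x})$ polylog factors included. By Markov's inequality $t'=O(n^k)$ with constant probability, so the cap is hit only with constant probability; in that case resample.
\item If fewer than $n^k$ distinct triangles have been found, the lower tail may have failed, for instance because triangles concentrate on few vertices of $K$. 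Double $p$, drawing a fresh superset of $K'$, and repeat.
\end{enumerate}

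\textbf{Main obstacle.} The delicate part is making Step 2 rigorous: showing that the doubling schedule stops after $O(\log n)$ rounds and that every round stays within $\softo{n^{x+k}+n^{2+x}}$. The plan is to argue that once $p$ is large enough that $pt \ge n^k$ holds with the true mass, any round with $\mathbb{E}[t'] = O(n^k)$ either outputs at least $n^k$ triangles or terminates within budget. We then use the monotonicity $K'\subseteq K''$ to union outputs across rounds, so that at least $n^k$ distinct triangles are eventually found. One subtlety must also be handled: to keep the per-round cost bounded we need $\mathbb{E}[t']=O(n^k)$ throughout the doubling, rather than only at the final round.
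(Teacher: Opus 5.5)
Your Step 1 and the exponent bookkeeping are correct and match the paper's justification. The only non-induced restriction in \cref{thm:allparts-ub-comp} excludes all hyperedges through a pair, which is an edge deletion, and the auxiliary computation is already charged.

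The gap is Step 2. You leave it as a plan, and the plan aims at the wrong target. Markov only controls the upper tail at the initial rate; the lower tail is the real difficulty. The invariant you propose, $\mathbb{E}[t']=O(n^k)$ throughout the doubling, cannot hold in general for $k<2$. Take $k=1$ and suppose all triangles pass through $\sqrt{n}$ vertices of $K$, each lying in all $n^2$ pairs of $I\times J$. Any rate with $pt=O(n)$ keeps none of these vertices w.h.p. You must reach $\mathbb{E}[t']=\Theta(n^2)$ before finding a single triangle. So the termination-within-budget argument is missing, and the form you propose for it is false.

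The missing observation is that this overshoot is harmless and actually yields concentration directly.
\begin{itemize}
\item Every $v\in K$ lies in at most $d_v\le |I||J|\le n^2$ triangles. Hence $t'=\sum_{v\in K}X_v d_v$ is a sum of independent terms in $[0,n^2]$, and Chernoff gives $\Pr[|t'-pt|>pt/2]\le 2e^{-pt/(12n^2)}$.
\item The budget $\softo{n^{2+x+\max(k-2,0)}}$ tolerates $t'$ up to $\softo{n^{\max(k,2)}}$, not just $O(n^k)$, because the recovery cost is $\softo{t'n^x+n^{2+x}}$.
\end{itemize}
So take $p=\min\bigl(1,\,Cn^{\max(k,2)}\log n/\hat t\bigr)$.
\begin{itemize}
\item If $p=1$, then $t=O(n^{\max(k,2)}\log n)$, and one run on $G$ lists everything within budget. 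This also covers $t<n^k$.
\item Otherwise $pt\ge (C/2)\,n^2\log n$. W.h.p.\ this gives $n^k\le t'\le O(n^{\max(k,2)}\log n)$, and a single run on $G[I\cup J\cup K']$ lists at least $n^k$ triangles within budget.
\end{itemize}
No Markov step, time cap, resampling or doubling is needed. Trying all $O(\log n)$ dyadic rates with early stopping, as the paper does, also removes the need for the approximate-counting estimate $\hat t$.
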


If the original graph contained exactly $n^k$ triangles, this would be immediate. (Using the fact that, while not operating in the ``induced'' setting, \cref{thm:allparts-ub-comp} only ever restricts the set of hyperedges considered in a query by excluding all hyperedges involving a given pair of vertices. This can be accomplished in the $R$-triangle setting simply by deleting the relevant edge. Note also that \cref{thm:allparts-ub-comp} is computationally efficient, and hence can be used in a black-box $R$-triangle reduction --- this requirement is why we cannot use the stronger bounds of \cref{thm:allparts-ub}. )
In the case that the original graph contains more than $n^k$ triangles, this corollary can be obtained by randomly sub-sampling the vertices of the graph at each dyadic rate and running the exact recovery algorithm on each (stopping if we run out of time). 
For some rate, we will with high probability isolate more than $n^k$ triangles but fewer than $n^k \polylog(n)$.
In that case, our exact recovery algorithm runs in the desired time and finds at least $n^k$ triangles.\\

Our specified-parts upper bound is also relevant, as an answer to the discussion about ``rotated'' matrix products. From \cref{thm:baby-triangle-ub}, we get the following:
\begin{corollary}
    For any $2 \leq c \leq 3$ and any $0 \leq k \leq 3$, there is an $\paren{n^c, n^{\paren{3 - \frac{(3 -3k)(3-c)}{4-c}}}}$-black-box reduction from the problem of listing $n^k$ $R$-triangles to all-$IJ$ $R$-triangle detection (that is, determining which edges between vertex parts $I$ and $J$ belong to $R$-triangles).
\end{corollary}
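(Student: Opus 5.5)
The plan is to derive the corollary directly from \cref{thm:baby-triangle-ub}, in the same way the preceding corollary was derived from \cref{thm:allparts-ub-comp}. I will in fact prove the stronger exponent $3-\gamma(3-k)$ with $\gamma=\frac{3-c}{4-c}$, and then compare it with the stated one. Comparing the two is immediate. If $k\le 1$ then $0\le 3-3k\le 3-k$, so $3-\frac{(3-3k)(3-c)}{4-c}\ge 3-\gamma(3-k)$. If $k>1$ the stated exponent exceeds $3$, and the claim is trivial: brute force over all $n^3$ triples, checking each with a constant-size oracle call, already achieves it.

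First, I check that the algorithm of \cref{thm:baby-triangle-ub} is a legitimate black-box $R$-triangle reduction. It uses only three kinds of calls: \cref{lem:estimate}, \cref{lem:find-uniquers-2} (through \cref{lem:find-kinda-uniquers}), and constant-size checks of single triples. Every one of these queries is a vertex-induced subhypergraph, of the form $G[I'\cup J'\cup K']$. In the $R$-triangle setting each such query is exactly a call to the all-$IJ$ detection algorithm on an induced subgraph of the input graph. The oracle's output (which $IJ$ edges lie in $R$-triangles) is precisely the specified-parts induced all-pairs containment answer for the hypergraph whose hyperedges are the $R$-triangles. The algorithm never uses $R$. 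Its non-oracle work consists of random partitioning, bookkeeping of binary-search strings, and iterating over $K$ for each popular pair, and all of this is bounded by its query cost. So it gives a reduction of cost $\softo{n^{3(1-\gamma)}t^{\gamma}}$ whenever the graph has $t$ $R$-triangles. With $t\le n^k\polylog(n)$ this is $\softo{n^{3-\gamma(3-k)}}$.

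The main obstacle is that the input may contain far more than $n^k$ $R$-triangles, and the algorithm does not know $t$ in advance. I handle this by dyadic subsampling. For each $p\in\{1,1/2,1/4,\dots\}$, keep each vertex independently with probability $p$ and run the recovery algorithm on the induced subgraph. I cap each run at the time budget $\softo{n^{3-\gamma(3-k)}}$ and abort it if it exceeds the budget. After the loop, output $n^k$ distinct triangles from any run that completed and found at least $n^k$ of them.

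For correctness, let $p^*$ be the rate at which the expected number of surviving triangles, $p^3t$, is about $n^k\log n$. A second-moment argument shows the surviving count concentrates: triangles sharing vertices are the only correlated pairs, and their contribution can be bounded, or bypassed by repeating $O(\log n)$ independent samples. Hence the count is between $n^k$ and $n^k\polylog(n)$ with good probability. On that sample, which has at most $n$ vertices, the recovery cost is at most $\softo{n^{3(1-\gamma)}(n^k\polylog n)^\gamma}$, within budget, so that run finds at least $n^k$ triangles. If $t<n^k$, the run at $p=1$ already lists everything. There are $O(\log n)$ rates, so the total cost stays $\softo{n^{3-\gamma(3-k)}}$. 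By the comparison in the first paragraph, this is at most the claimed $n^{3-\frac{(3-3k)(3-c)}{4-c}}$.
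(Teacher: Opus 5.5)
Your route is the paper's: run the algorithm of \cref{thm:baby-triangle-ub} as a subgraph-only black-box reduction, and handle inputs with more than $n^k$ $R$-triangles by dyadic vertex subsampling. You are right that the intended exponent is $3-\gamma(3-k)$ with $\gamma=\frac{3-c}{4-c}$; this is what specializes to $\frac{9-2c}{4-c}$ at $k=2$, and the printed $(3-3k)$ only weakens it.

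\textbf{Gap 1: small $k$.} Your step ``with $t\le n^k\polylog(n)$ this is $\softo{n^{3-\gamma(3-k)}}$'' fails for $k<c-1$. \Cref{thm:baby-triangle-ub} only holds when $t\ge n^{c-1}$: its proof sets $\alpha=\frac{\tau+\eps-2}{1+\eps}$ and uses $\alpha>0$. Moreover, \cref{lem:estimate} alone already costs $n^{3-\eps}=n^c$. No reduction can beat $n^c$ here:
\begin{itemize}
\item Take a complete tripartite graph with distinct weights. To tell $R=\emptyset$ apart from $R=\{w(T)\}$, some query must contain a triangle carrying $T$'s weight triple.
\item A query on $m$ vertices contains at most $m^3$ triangles. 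Averaging over a random $T$, the queries must cover $\Omega(n^3)$ candidates.
\item If some $m_i\ge n$, that query alone costs $n^c$. Otherwise $\sum_i m_i^c\ge n^{c-3}\sum_i m_i^3=\Omega(n^c)$.
\end{itemize}
Consequently the statement as written is false whenever $c<3$ and $k<\frac{c-1}{3}$. For instance, $k=0$, $c=2$ asks to find a single $R$-triangle at cost $n^{3/2}$, below even the input size. The paper's one-line derivation from \cref{thm:baby-triangle-ub} inherits the same problem. What is achievable is $\softo{n^{\max(c,\,3-\gamma(3-k))}}$. Since $3-\gamma(3-3k)\ge c$ exactly when $k\ge\frac{c-1}{3}$, combining this with your comparison proves the corollary precisely for $k\ge\frac{c-1}{3}$ (or $c=3$).

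\textbf{Gap 2: subsampling.} Your concentration claim is false, and the paper's sketch for the preceding corollary has the same hole. Take $k=1$ with the $R$-triangles being exactly the $\Theta(n^2)$ triangles through one vertex $v$. Then $p^*=\widetilde\Theta(n^{-1/3})$, and the surviving count is $0$ unless $v$ survives. So each sample at rate $p^*$ succeeds with probability $\widetilde O(n^{-1/3})$, and $O(\log n)$ repetitions do not help.

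You can avoid subsampling altogether. Run the algorithm of \cref{thm:baby-triangle-ub} on the whole graph with $\alpha=\max\bigl(0,\frac{k+\eps-2}{1+\eps}\bigr)$, i.e.\ tuned to $t=n^k$:
\begin{itemize}
\item \Cref{lem:estimate} and \cref{lem:find-kinda-uniquers} cost $\softo{n^{3-\eps}}$ and $\softo{n^{3-\eps+\eps\alpha}}$, independently of the true $t$.
\item Each popular pair yields at least $n^\alpha/10$ triangles for $O(n)$ cost. Stopping the popular-pair phase once $n^k$ triangles are found caps it at $O(n^{k-\alpha+1})$.
\item If that phase never stops, every popular pair has been processed, and the unpopular phase finds the rest.
\end{itemize}
Either way you output $\min(t,n^k)$ triangles at cost $\softo{n^{\max(c,\,3-\gamma(3-k))}}$.
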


Plugging in $k=2$, and using \cref{thm:alledge-to-listing} (from \cite{duraj2020equivalences}), we therefore in particular get the following:

\begin{corollary}
    For any $2 \leq c \leq 3$, all-edge $R$-triangle detection $\paren{n^c, n^{\paren{\frac{9-2c}{4-c}}}}$-black-box reduces to all-$IJ$ detection. 
\end{corollary}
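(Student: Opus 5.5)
The plan is to compose two reductions already available: the one from all-edge detection to listing $n^2$ triangles (\cref{thm:alledge-to-listing}, due to \cite{duraj2020equivalences}), and the one from listing $n^k$ triangles to all-$IJ$ detection given by the preceding corollary, taken at $k=2$.

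\emph{Step 1: the listing subroutine.} I first specialize the preceding corollary to $k=2$. It gives an $(n^c, n^{e})$-black-box reduction from listing $n^2$ $R$-triangles to all-$IJ$ detection. The exponent works out as follows, reading the corollary's $k$-dependence as entering through $\tau=k$ in the proof of \cref{thm:baby-triangle-ub}:
\[
e = 3-\frac{(3-k)(3-c)}{4-c} = 3-\frac{3-c}{4-c} = \frac{12-3c-3+c}{4-c} = \frac{9-2c}{4-c}.
\]
This reduction is computationally efficient, since every subroutine's overhead is bounded by its query cost. It also uses only induced subgraph queries, so it is a legitimate black-box reduction: the $R$-triangles are exactly the hyperedges, and an induced all-$IJ$ containment query is just an all-$IJ$ detection call on an induced subgraph.

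\emph{Step 2: composition.} Black-box reductions compose when the intermediate exponent is handled correctly. Suppose all-$IJ$ detection is given at cost $n^c$. Step 1 then gives a procedure listing $n^2$ triangles at cost $n^{e}$, valid for every $R$. Since $c\ge 2$, we have $e\in[2,3]$, so \cref{thm:alledge-to-listing} applies with exponent $e$. It reduces all-edge detection to listing $n^2$ triangles at cost $n^{e}$, with subcalls on graphs $G_i$ satisfying $\sum |V(G_i)|^{e}\le n^{e}$. I replace each subcall by the Step 1 reduction on $G_i$, at cost $\softo{|V(G_i)|^{e}}$ with $\sum_j |V(G_{i,j})|^c \le |V(G_i)|^{e}$. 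The total cost is then $\softo{\sum_i |V(G_i)|^{e}}=\softo{n^{e}}$.

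\emph{Step 3: details.} Both reductions are independent of $R$, so their composition is too. The success probabilities compose via a union bound, after boosting Step 1 to high probability with polylog overhead. The dominance/Boolean-$(\min,+)$ consequence follows by instantiating $R=\{(w_{ij},w_{jk},w_{ki}) : w_{jk}\ge w_{ki},\ w_{ij}=1\}$: all-$IJ$ detection for this $R$ is dominance product, and all-$JK$ detection is Boolean-$(\min,+)$-product.

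\emph{Main obstacle.} The delicate point is that \cref{thm:alledge-to-listing} has to be applied with the listing oracle's cost exponent equal to $e$, which may be non-polynomial-compatible at its boundaries. In addition, the listing oracle must return all triangles when there are fewer than $n^2$ of them, and at least $n^2$ triangles otherwise. Step 1 handles the case of more than $n^2$ triangles by dyadic vertex subsampling, as in the discussion after \cref{thm:allparts-ub-comp}. I will check that this subsampling costs only polylog overhead, and that it is compatible with the cost accounting by vertex count in \cref{def:blackbox}.
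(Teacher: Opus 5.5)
Your proposal is correct and matches the paper's argument. The paper obtains this corollary by plugging $k=2$ into the preceding listing-to-all-$IJ$ corollary, which comes from \cref{thm:baby-triangle-ub} with $|E|=n^2$ and gives exponent $3-\frac{3-c}{4-c}=\frac{9-2c}{4-c}$. It then composes this with the Duraj--Kleiner--Polak--Vassilevska Williams reduction of \cref{thm:alledge-to-listing}, which applies since $\frac{9-2c}{4-c}\in[5/2,3]$. Your reading of the exponent as $3-\frac{(3-k)(3-c)}{4-c}$ is the right one; the printed $(3-3k)$ in the preceding corollary is a typo.
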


Hence, if we can view a pair of matrix products as rotated versions of the same $R$-triangle problem, then a time-$n^c$ algorithm for one generically implies a time-$n^{\paren{\frac{9-2c}{4-c}}}$ algorithm for the other.
In the case of boolean-($\min, +$)-product and dominance product, we know both are solvable in time $O(n^{2.69})$. Since there is a reduction from boolean-$(\min, +)$-product to a rotated version of dominance product, the above result shows that a time-$O(n^{2.53})$ dominance product algorithm would give a time-$O(n^{2.685})$ boolean-$(\min, +)$-product algorithm.

\section{Upper bounds for exact recovery in general}\label{sec:specified-parts-upperbound}
In this section, we generalize \cref{thm:baby-triangle-ub} to hypergraphs of higher uniformity.
Our main result is as follows:

\specifiedpartsub*

In \cref{sec:lbs}, we show that this upper bound is tight (up to polylog factors) for any $s \leq c \leq r$ and any $|V| \leq |E| \leq |V|^r$.
The proof of \cref{thm:specifiedpartsub} is similar to that of \cref{thm:baby-triangle-ub}, but there are several complications that arise in these other uniformities.
For one, it is no longer as immediate that we can use our all-$s$ containment oracle to approximate all-$s$ codegrees (although this is in fact possible).
Another somewhat more serious issue is that it is no longer the case that we can expect to ``isolate'' hyperedges for unpopular $s$-tuples by randomly partitioning all of the parts. 
When we had $r - s = 1$, for a fixed $s$-tuple of vertices we knew that the set of hyperedges containing those vertices simply corresponded to a subset of the vertices in the remaining part.
But now fixing an $s$-tuple of vertices can induce some higher-order structures which can result in many hyperedges falling into the same collection of buckets.
In order to deal with this, we argue that the only cases in which such collisions occur can be attributed to collections of vertices with badly-behaved codegrees, and that we can efficiently identify all such collections and handle them by brute force.\\

We start with the following important subroutine. 

\begin{lemma}\label{lem:hot-potato}
  For $k$-uniform, $k$-partite, $n$-vertex hypergraphs, for any positive integer $h \leq n$,
  there exists an algorithm in the independent set query model that with high probability returns a family $\mathcal{F}$ of vertex subsets such that
  \begin{itemize}
    \item every element $S \in \mathcal{F}$ has codegree at least $h^{k-|S|}\log^{-10k}(n)$, and
    \item every hyperedge is a superset of some element $S \in \mathcal{F}$. 
  \end{itemize}
  The algorithm makes $\softo{h^k}$ non-adaptive independent set queries of size at most $n/h$. 
\end{lemma}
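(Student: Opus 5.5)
The plan is to randomly partition each of the $k$ parts into $h$ chunks of size about $n/h$, and to work inside each of the $h^k$ chunk tuples with independent set queries of size at most $n/h$. Inside a tuple we recover a sparse random sample of hyperedges. We then put into $\mathcal{F}$ every vertex set that is provably heavy: either a recovered hyperedge itself, which always qualifies since its codegree is $1=h^{0}$, or a proper subset $S$ shared by many recovered hyperedges. We repeat over $\polylog(n)$ independent partitions and dyadic subsampling rates. All queries are fixed in advance by the random partitions and the bit-splitting pattern, so they are non-adaptive and there are $\softo{h^k}$ of them.

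\emph{Step 1 (sampling inside a chunk tuple).} Fix a chunk tuple $(C_1,\dots,C_k)$. For each rate $2^{-j}$, $j\le \log n$, subsample every $C_i$ independently at that rate, a constant number of times. In each subsample, run the non-adaptive bit-splitting search of \cref{lem:find-uniquers} (independent set queries on the halves defined by the bits of a fixed ordering) to decode a hyperedge whenever it is the unique hyperedge of the subsample. Each such search costs $O(\log n)$ queries of size at most $n/h$. A standard argument shows the following. If the tuple contains $m$ hyperedges, then at rate about $m^{-1/k}$ a fixed hyperedge is isolated with probability $\widetilde{\Omega}(1/m)$. Consequently, any hyperedge $e$ whose tuple contains only $\polylog(n)$ hyperedges is recovered with constant probability.

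\emph{Step 2 (crowded tuples force a heavy subset).} For a hyperedge $e$, group the other hyperedges $e'$ in its tuple by $T=e\cap e'$. The expected number with $e\cap e'=T$ is at most $\mathrm{codeg}(T)\,h^{-(k-|T|)}$, since each of the $k-|T|$ outside vertices must land in the right chunk. Hence if $\mathrm{codeg}(T)<h^{k-|T|}\log^{-10k}(n)$ for every proper $T\subsetneq e$, then by Markov's inequality $e$'s tuple is uncrowded (at most $\polylog(n)$ hyperedges) with constant probability. In that case Step 1 recovers $e$, and $e\in\mathcal{F}$. Otherwise some proper $T\subseteq e$ is heavy; take $T$ minimal.

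\emph{Step 3 (finding the minimal heavy $T$).} We must certify $T$ using only recovered edges. Conditioned on $e$'s tuple, the $\approx \mathrm{codeg}(T)h^{-(k-|T|)}$ hyperedges through $T$ are sampled at the same rates as everything else. Minimality means each proper subset of $T$ is light, so edges through those subsets do not swamp the tuple. I would therefore choose a graded family of thresholds $\theta_0>\theta_1>\dots>\theta_k$, each losing a $\polylog$ factor, whose product gives the $\log^{-10k}$ slack, and argue the following. At the right subsampling rate, $\polylog(n)$ distinct hyperedges through $T$ are recovered across the $\polylog$ repetitions. We then add a set $S$ to $\mathcal{F}$ when the number of recovered edges containing $S$, rescaled by the known sampling rates, exceeds $h^{k-|S|}\theta_{|S|}$. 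A Chernoff bound over all candidate $S$ (only polynomially many appear among recovered edges) then gives both properties with high probability. Heavy minimal $T$ are included; no $S$ below the codegree bound $h^{k-|S|}\log^{-10k}(n)$ is included, because its rescaled count concentrates below threshold.

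The main obstacle is Step 3: making the estimator's lower bound for the minimal heavy $T$ and its upper bound for light sets mesh across all sizes $|S|$ simultaneously, given that edges through light subsets of $T$ can still crowd the tuple. I would first try to handle this by induction on $|T|$ with geometrically separated thresholds. If that fails, the fallback is to restrict attention, for each heavy $T$, to partitions where the chunks containing $T$ hold few edges through proper subsets of $T$, which happens with constant probability by the same Markov argument as in Step 2.
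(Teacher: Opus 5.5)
\textbf{There is a genuine gap: Step~3 is only a plan, and the plan as written does not work.} Steps~1--2 are fine in outline. Step~2 is exactly the paper's first case: an edge unlikely to collide is recovered and admitted with codegree $1$. But Step~3 carries the whole difficulty, and it fails in two ways.

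First, the rule ``rescale the number of recovered edges through $S$ by the known sampling rates and apply Chernoff'' has no valid basis. A hyperedge is recovered only if it is the \emph{unique} survivor of its subsampled tuple, so its recovery probability depends on the unknown crowding of that tuple. There is therefore no known normalisation. Moreover, recoveries through $S$ are strongly correlated, since none occur unless every vertex of $S$ survives. For example, take $k=2$ and a vertex $v$ of codegree $d$ whose neighbours lie in distinct chunks. Normalising by the per-edge rate $\rho^2$ gives a count that is $0$ with probability $1-\rho$ and about $d/\rho$ with probability $\rho$. Over the repetitions, the estimate for a light $v$ is inflated by up to a polylogarithmic factor and $v$ is admitted, which breaks soundness.

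Second, codegree is the wrong notion of heaviness for choosing the target. Suppose every edge through your minimal heavy $T$ also contains a fixed $T'\supsetneq T$. Then those edges occupy at most $h^{k-|T|-1}$ of the tuples sharing $T$'s chunks. So at most $h^{k-|T|-1}\mathrm{polylog}(n)$ of them are ever recovered, far short of $h^{k-|T|}\log^{-10k}(n)$. Hence $T$ cannot be certified, even though $T'$ could be.

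Separately, Step~1's claim that a uniform rate isolates a fixed edge with probability $\widetilde{\Omega}(1/m)$ is false. In a sunflower with a $(k-1)$-vertex kernel and $m$ petals, every uniform rate isolates a given petal with probability $O(m^{-k})$. Only the polylog-crowded consequence survives, and it needs polylogarithmically many repetitions, not a constant number.

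\textbf{How the paper closes this.} It avoids estimation altogether.
\begin{itemize}
\item \emph{Representatives.} For each chunk tuple it only needs \emph{some} hyperedge, not a fixed one. Per-part rates $2^{-r_1},\dots,2^{-r_k}$ together with non-adaptive binary search find one in every nonempty tuple.
\item \emph{Soundness.} It admits $S$ if and only if at least $h^{k-|S|}\log^{-10k}(n)$ distinct representatives contain $S$. Representatives are genuine hyperedges, so soundness is deterministic. The scale $h^{k-|S|}$ is the number of tuples sharing $S$'s chunks, not a rescaling factor.
\item \emph{Choosing the target.} When $E$ collides with probability at least $1/2$, call $U\subseteq E$ hot if, with probability at least $\log^{10|U|-10k}(n)$, $E$ collides with some $E'$ satisfying $E\cap E'=U$ \emph{exactly}. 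Some $U$ is hot by a union bound, so take a minimal one.
\item \emph{Completeness.} The outside vertices of such $E'$ land in independent chunks. So about $h^{k-|U|}\log^{10|U|-10k}(n)$ of the tuples sharing $U$'s chunks contain an edge through $U$. Meanwhile, non-hotness of each proper $W\subsetneq U$ bounds the expected number of those tuples containing an $E'$ with $E'\cap U=W$ by $h^{k-|U|}\log^{10|W|-10k}(n)$, a polylogarithmic factor fewer. Hence most such tuples contain only edges through $U$, their representatives all contain $U$, and $U$ clears the threshold.
\end{itemize}
Your graded-threshold instinct is right. However, the grading must be on exact-intersection collision probabilities, and certification must be by raw counts of distinct representatives.
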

\begin{proof}
  Partition each vertex part randomly into $h$ chunks of size $n/h$.
  For every tuple of $k$ of those chunks, we would like to determine the identity of some hyperedge contained in that $k$-tuple of chunks (if any such hyperedge exists).
  If the $k$-tuple contains a unique hyperedge, this is straightforward: we can simply binary search (that is, for all $i \in [k]$, $j \in [\log(n)]$, $b \in \{0,1\}$, discard all vertices in the $i$th vertex part with a $b$ in the $j$th binary digit of their name, and make an independent set query on the rest of the vertices in the chunks --- this procedure is non-adaptive, and if there is a unique hyperedge one can read off the identities of all vertices in the hyperedge from the result).
  In order to deal with $k$-tuples of chunks that contain multiple hyperedges, we can run the same procedure, but first subsample the vertices in each of the parts.
  That is, for every $r_1, \dots, r_k \in [\log(n)]$, for $\log^{10}(n)$ iterations, we choose a random sample of $2^{-r_i}$ vertices from the $i$th chunk in the $k$-tuple for all $i$, and we run the above procedure to find a unique hyperedge if one exists.
  Observe that there is some setting of all the $r_i$ such that a unique hyperedge will survive with constant probability --- so, with high probability, for all $k$-tuples containing at least one hyperedge we will find one on some iteration of this procedure.\\

  Now that we've found this list of representative hyperedges, for every subset of vertices in the hypergraph, we'll record their codegree \emph{among these representative hyperedges}.
  If we find that some subset $S$ has codegree at least $h^{k-|S|}\log^{-10k}(n)$ among these identified edges, we know that it has codegree at least that high in the full hypergraph, and so can safely add it to $\mathcal{F}$.
  We would now like to show that, for any hyperedge $E$ of the hypergraph, with probability at least $1/\polylog(n)$ this process will result in some subset of $E$ being added to $\mathcal{F}$.
  Note that this is sufficient for the overall lemma: if we then repeat the entire process some large $\polylog(n)$ many times, we can ensure that the probability of any given hyperedge failing to have a subset added to $\mathcal{F}$ is much less than $n^{-k}$, meaning that we can union bound over all of these events.\\

  Say that a pair of hyperedges $E$ and $E'$ \defn{collide} if they both belong to the same $k$-tuple of chunks.
  If the probability of $E$ colliding with another edge is less than $1/2$, then we are already done: with at least $1/2$ probability, $E$ will be the unique hyperedge in its $k$-tuple of chunks, meaning that it will be necessarily chosen as the representative.
  Any hyperedge found as one of the representatives has codegree $1 \gg h^{k - k}\log^{-10k}(n)$ among the representatives, so will be itself added to $\mathcal{F}$.
  Otherwise, say that a subset $U \subseteq E$ is \defn{hot} if, with probability at least $\log^{10|U|-10k}(n)$, there will be a collision between $E$ and some other edge $E'$ with $E \cap E' = U$.
  Observe that by union bound at least one subset of $E$ must be hot.
  Fix $U$ to be some hot subset of $E$ of minimal size. We will show that, with probability at least $1/\polylog(n)$, this process adds $U$ to $\mathcal{F}$.\\

  First observe that, for an edge $E'$ overlapping with $E$ exactly on $U$, the $k-|U|$ chunks the remaining vertices of $E'$ map to are independent of those that the remaining vertices of $E$ map to.
  So, since the probability that there exists some such $E'$ colliding with $E$ is at least $\log^{10|U|-10k}(n)$, we can also observe that with high probability the number of $k$-tuples of chunks that contain \emph{some} edge containing $U$ will be at least $O(h^{k-|U|}\log^{10|U|-10k}(n))$.
  Call these $k$-tuples of chunks the \defn{useful} $k$-tuples.\\

  Fix any proper subset $W \subseteq U$, and consider the expected number of useful $k$-tuples that contain some hyperedge $E'$ such that $E' \cap U = W$.
  The probability that $E'$ shares chunks with all of the vertices in $U$ is $h^{|W| - |U|}$.
  So, if this expectation was greater than $\paren{h^{|W| - |U|}}\paren{h^{k-|W|}\log^{10|U|-10k}(n)} = h^{k-|U|}\log^{10|W|-10k}(n)$, then we would have that $W$ is hot, contradicting the assumption that $U$ has the minimum size among hot subsets of $E$.
  Thus, the expected number of useful $k$-tuples that contain \emph{any} hyperedge $E'$ with $E' \cap U \neq U$ is at most
  \[\sum_{\substack{W \subseteq U \\ W \neq U}} h^{k-|U|}\log^{10|W|-10k}(n) \leq 2^{|U|} h^{k-|U|} \log^{10|U| - 10k - 10}(n) \ll h^{k-|U|}\log^{10|U|-10k}(n).\]
  
  So, with high probability, there are at least $O(h^{k-|U|}\log^{10|U|-10k}(n)) \gg h^{k-|U|}\log^{-10k}(n)$ distinct $k$-tuples of chunks that contain some hyperedge containing $U$, and no hyperedge not containing $U$.
  For each of those chunks, our process will extract a representative hyperedge containing $U$.
  So, $U$ will be added to $\mathcal{F}$.
\end{proof}

We are now ready to show the main result.

\begin{proof}[Proof of \cref{thm:specifiedpartsub}]
  We will set $h = |V|^{\frac{r-s-c}{2r-s-c}}|E|^{\frac{1}{2r-s-c}}$, and partition each of the specified parts arbitrarily into $h$ chunks.
  For each $s$-tuple of those chunks, we will do the following.
  Observe that, if we look at the output of our all-$s$ containment oracle for one specific $s$-tuple of vertices, it is exactly an independent set oracle on the $(r-s)$-uniform hypergraph induced by the restriction.
  So, setting $k=r-s$, we can run the algorithm from \cref{lem:hot-potato}.
  That is, for every $s$-tuple of chunks among the specified parts, we run the algorithm from \cref{lem:hot-potato} on the remaining $r-s$ parts of the graph.
  Every time the algorithm would make an independent set query on a vertex subset of those $r-s$ parts, we instead make a query to the all-$s$ containment oracle, including that subset along with the chosen chunks from the remaining $s$ parts.
  Since the queries are made non-adaptively, this process is well-defined, and since all queries are made to vertex subsets of size at most $O(|V|/h)$, the total cost is $h^s \cdot \softo{h^k} \cdot O(|V|/h)^c = \softo{h^{r-c}|V|^c} = \softo{|V|^{r(1-\gamma)}|E|^\gamma}$.\\

  Now, for every $s$-tuple $X$ of vertices among the specified parts, we have determined a family $\mathcal{F}_X$ such that every hyperedge containing $X$ also contains some element of $\mathcal{F}_X$, and every element $S \in \mathcal{F}_X$ satisfies that the codegree of $S \cup X$ is at least $\widetilde{\Omega}(h^{r-s - |S|})$.
  For every $s$-tuple $X$ among the first $s$ parts, for every $S \in \mathcal{F}_X$, our algorithm will now brute force over all $|V|^{r - s - |S|}$ many $(r-s-|S|)$-tuples $Y$ of vertices in the remaining parts, and make a constant-size query to the oracle to check whether $X \cup S \cup Y$ is a hyperedge.
  Observe that, since every hyperedge containing $X$ also contains some $S \in \mathcal{F}_X$, this procedure will find all of the hyperedges of the hypergraph.
  To determine the cost of this procedure, note that every time we process an $S \in \mathcal{F}_X$, we expend cost $O(|V|^{r-s-|S|})$, and we discover at least $\widetilde{\Omega}(h^{r-s-|S|})$ new hyperedges we had not previously found.
  So, the total cost of this step is at most 
  \[\max_{0 \leq \ell \leq r-s} \paren{\frac{|E|}{\widetilde{\Omega}(h^{r-s-\ell})}\cdot O\paren{|V|^{r-s-\ell}}} = \softo{|E| \cdot \paren{\frac{|V|}{h}}^{r-s}} = \softo{|V|^{r(1-\gamma)}|E|^\gamma}.\] 
\end{proof}

\section{Lower bounds for exact recovery}\label{sec:lbs}

In \cref{sec:specified-parts-upperbound}, we gave an algorithm for solving exact recovery of sparse hypergraphs using induced all-$s$ containment queries.
Here, we demonstrate that upper bound to be tight even with specified-parts all-$s$ \emph{codegree} queries.
Our proof will make use of the following simple combinatorial fact:

\begin{lemma}\label{lem:hypergraph-nodense}
    Let $K$ be the random $s$-partite, $s$-uniform hypergraph with $n$ vertices per part, obtained by including each hyperedge with independent probability $n^{-\beta}$.
    Then, with high probability there does not exist a subhypergraph on $x$ vertices and more than $10x^sn^{-\beta}\log n + 10x \log n$ hyperedges for any $x$.
\end{lemma}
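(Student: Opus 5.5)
We will prove this by a first-moment calculation plus a union bound over all vertex subsets. "Subhypergraph on $x$ vertices" means a choice of vertex set $X$ with $|X| = x$, split among the $s$ parts, together with the hyperedges of $K$ lying inside $X$. So it suffices to show that, with high probability, every vertex set $X$ of size $x$ (for every $1 \le x \le sn$) spans at most $T_x := 10x^sn^{-\beta}\log n + 10x\log n$ hyperedges of $K$.

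Fix $X$ with $|X| = x$, and let $x_1, \dots, x_s$ be its intersections with the parts, so $\sum_i x_i = x$.
\begin{itemize}
\item The number of potential hyperedges inside $X$ is $N = \prod_i x_i \le x^s$.
\item So the number $Z_X$ of hyperedges of $K$ inside $X$ is a sum of at most $x^s$ independent Bernoulli$(n^{-\beta})$ variables, with $\mu := \mathbb{E}[Z_X] \le x^sn^{-\beta}$.
\item Since $T_x \ge 10\mu$ (after padding $\mu$ up to $x^sn^{-\beta}$ if needed), the crude Chernoff/union bound form $\Pr[Z_X \ge T] \le \binom{N}{T}p^T \le (eNp/T)^T$ applies.
\item With $Np \le x^sn^{-\beta} \le T_x/(10\log n)$, this gives $\Pr[Z_X \ge T_x] \le (e/(10\log n))^{T_x} \le 2^{-T_x}$ for $n$ large.
\end{itemize}

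Next we union bound over all $X$ of size $x$. There are at most $\binom{sn}{x} \le (sn)^x = 2^{x\log(sn)}$ such sets. Since $T_x \ge 10x\log n$, which exceeds $2x\log(sn) + 2\log n$ for large $n$, the total failure probability for size $x$ is at most $2^{-T_x + x\log(sn)} \le n^{-2x}$ (logarithms base $2$). Summing over $x \ge 1$ gives failure probability $O(n^{-2})$, so the property holds with high probability.

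The only real care needed is the regime where $\mu$ is tiny, i.e. $x^sn^{-\beta} \ll 1$. There the bound has to be driven entirely by the additive $10x\log n$ term, and this is exactly why the $(eNp/T)^T$ form is the right tool rather than a multiplicative Chernoff bound: it handles $T \gg \mu$ uniformly.

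A minor detail is that the constant $10$ and the logarithm base must dominate the $\log(sn)$ from the subset count. Since $s$ is a constant, $\log(sn) = \log n + O(1)$, and $10$ leaves ample slack.
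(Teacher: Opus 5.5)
Your proposal is correct and follows essentially the same route as the paper: a binomial tail bound on the number of hyperedges inside a fixed $x$-vertex set, then a union bound over the vertex sets of each size (the paper counts $n^x$ of them; your $(sn)^x$ is the more accurate count), then a sum over $x$. Your explicit bound $\Pr[Z_X \ge T] \le (eNp/T)^T$ is a concrete instance of the ``standard Chernoff bound'' the paper cites, and it cleanly covers the regime where the mean $x^s n^{-\beta}$ is tiny.
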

\begin{proof}
    For a given set of $x$ vertices, the number of induced hyperedges is distributed as $\text{Bin}(x^s, n^{-\beta})$.
    By a standard Chernoff bound, we have 
    \[\Pr[\text{Bin}(x^s, n^{-\beta}) \geq 10x^sn^{-\beta}\log n + 10x \log(n)] \leq 2^{-3x^sn^{-\beta}\log(n) - 3x \log(n)} \leq n^{-3x - 2}.\]
    So, union bounding over all choices of $x$ and all $n^x$ sets of vertices, the probability of any such dense subhypergraph existing is at most $\sum_{x=1}^n n^x \cdot n^{-3x - 2} < o(1)$. 
\end{proof}

\specifiedpartslb*

\begin{proof}
    Once again, we will use Yao's lemma.
    We would like to design a distribution over $r|V|$-vertex, $|E|$-edge, $r$-partite, $r$-uniform hypergraphs, such that any deterministic algorithm is unlikely to successfully recover the hypergraph from specified-parts all-$s$ codegree queries.
    To construct such a distribution, we will start by declaring each $s$-tuple of vertices spanning the first $s$ parts of the hypergraph as ``admissible'' with independent probability $p$, for $p = \paren{|V|^{-r}|E|}^{\max\paren{\frac{r-c}{2r-s-c}, \frac{s-1}{r-1}}}$.
    This choice is fixed and revealed to the algorithm, as opposed to being a part of the distribution --- the only property we require of those $s$-tuples is that guaranteed by \cref{lem:hypergraph-nodense} (so in particular, any explicit construction of such a quasirandom $s$-uniform hypergraph would also suffice to define the admissible $s$-tuples).
    Now, our distribution on hypergraphs will simply be to let each of the spanning $r$-tuples of vertices whose restriction to the first $s$ parts is an admissible $s$-tuple be a hyperedge with probability $|E||V|^{-r}p^{-1}$ independently\footnote{Note that this defines a distribution with $|E|$ edges in \emph{expectation} as opposed to \emph{exactly} $|E|$. For those concerned about getting exactly the right value of $|E|$, one modification could be to apply this construction using half of the vertices and $1/2^r$ of the edges, and then place the remaining edges arbitrarily among the other vertices. It is vanishingly unlikely that more than $|E|$ edges will be needed if we're doing this construction at half scale, and the observation of the other half of the graph will provide at most $\log(|V|)$ bits about the distribution (namely, the number of edges chosen).}.\\

    The analysis at this point will proceed similarly to the proof of \cref{lem:cant-use-is}.
    We observe that the distribution we've defined over hypergraphs has entropy $|V|^{r}pH(|E||V|^{-r}p^{-1})$.
    We set a threshold of $t = 2|E||V|^{-r}p^{-1}$, and agree to reveal whether or not any given $r$-tuple is a hyperedge once its probability conditioned on our queries thus far of being so exceeds $t$.
    Letting $X_i$ be the results of the $i$th query and $Y_i$ be the additional information revealed about threshold-exceeding $r$-tuples after the $i$th query, as in \cref{lem:cant-use-is} correctness of the algorithm implies that we have
    \[\sum_i H(X_i | X_1, \dots, X_{i-1}, Y_1, \dots, Y_{i-1}) \geq \Omega(|E|).\]
    Now, for a sequence of queries $(Q_1, \Rreplace_1), \dots, (Q_q, \Rreplace_q)$, and the resulting responses $X_i = x_i$ and $Y_i = y_i$, we upper bound 
    \[\sum_i H(X_i | X_1=x_1, \dots, X_{i-1}=x_{i-1}, Y_1=y_1, \dots, Y_{i-1}=y_{i-1})\]
    under the assumption that 
    \[\sum_i|Q_i|^c < \widetilde{o}\paren{|V|^{r(1-\gamma)}|E|^\gamma}.\]
    Consider a query of size $|Q_i|$. 
    Since each query is made to a subhypergraph of the original hypergraph, by \cref{lem:hypergraph-nodense} we know that $Q_i$ can involve at most $10|Q_i|^sp\log(|V|) + 10|Q_i|\log(|V|)$ admissible $s$-tuples.
    We already know that no non-admissible $s$-tuple is involved in a hyperedge, so the query will reveal information only for these tuples.
    For each admissible $s$-tuple, at most $\log(|E|) = O(\log(|V|))$ bits of entropy are revealed, as the support of the random variable is at most size $|E|$.
    We also know that each admissible $s$-tuple is involved in at most $|Q_i|^{r-s}$ distinct $r$-tuples in $Q$, and that for each of these that hasn't yet been determined the probability of being a hyperedge is at most $t$.
    So, by subadditivity of entropy we can also upper bound the conditional entropy of any given $s$-tuple's response by $|Q_i|^{r-s}t = 2|Q_i|^{r-s}|E||V|^{ - r}p^{-1}$.
    This overall lets us write 
    \begin{align*}
     \sum_i H(X_i | X_1=x_1, \dots, X_{i-1}=x_{i-1}, Y_1=y_1, \dots, Y_{i-1}=y_{i-1}) &\leq \\
     \sum_i 10 \log(|V|)\cdot \paren{|Q_i|^sp + |Q_i|} \cdot \min\paren{\log(|V|), |Q_i|^{r-s}|E||V|^{ - r}p^{-1}} &\leq \\
     \sum_{\substack{i \\ |Q_i| > \paren{|V|^{r}|E|^{-1}p}^{\paren{\frac{1}{r-s}}}}}\softo{|Q_i|^sp + |Q_i|} + \sum_{\substack{i \\ |Q_i| < \paren{|V|^{r}|E|^{-1}p}^{\paren{\frac{1}{r-s}}}}}\softo{(|Q_i|^sp + |Q_i|)\cdot |Q_i|^{r-s}|E||V|^{ - r}p^{-1}} &\leq\\
     \left(\sum_i |Q_i|^c\right) \cdot \paren{\frac{1}{\paren{|V|^r|E|^{-1}p}^{\paren{\frac{c}{r-s}}}} \cdot \softo{\paren{|V|^{r}|E|^{-1}p}^{\paren{\frac{s}{r-s}}}p + \paren{|V|^{r}|E|^{-1}p}^{\paren{\frac{1}{r-s}}}} + |E||V|^{-r}p^{-1}} &=\\
     \left(\sum_i |Q_i|^c\right) \cdot \softo{\paren{|V|^{r}|E|^{-1}p}^{\paren{\frac{s-c}{r-s}}}p + \paren{|V|^{r}|E|^{-1}p}^{-\max\paren{1, \frac{c-1}{r-s}}}} &\leq\\
     \widetilde{o}\paren{|V|^{r(1-\gamma)}|E|^\gamma} \cdot 
     \softo{\paren{|V|^{r}|E|^{-1}p}^{\paren{\frac{s-c}{r-s}}}p + \paren{|V|^{r}|E|^{-1}p}^{-\max\paren{1, \frac{c-1}{r-s}}}} &\leq\\
    \widetilde{o}(|E|).
    \end{align*}
    Thus, runs where the query cost is less than this bound cannot generate enough entropy to achieve our $\Omega(|E|)$ lower bound on $\sum_iH(X_i | X_1, \dots, X_{i-1}, Y_1, \dots, Y_{i-1})$, meaning that the algorithm must sometimes require larger query cost.
\end{proof}

\allpartslb*
\begin{proof}
    Now, we'll take $p = (|V|^{-r}|E|)^{\min\paren{\frac{r-c}{(r-s)\binom{r}{s} + (r-c)}, \frac{s-1}{(r-s) + \binom{r}{s}(s-1)}}}$, and we'll declare each $s$-tuple of vertices in the entire graph to be admissible with independent probability $p$.
    Fixing the admissible $s$-tuples, we will say an $r$-tuple of vertices is admissible if every $s$ of them are admissible.
    Our distribution over hypergraphs will be to let each admisible spanning $r$-tuple of vertices be a hyperedge with independent probability $|E||V|^{-r}p^{-\binom{r}{s}}$.\\

    The analysis now follows exactly as in the proof of \cref{thm:specifiedparts-lb}.
    We bound the conditional entropy of a query's response given the responses of all previous queries by the sum of the conditional entropies of the response for each admissible $s$-tuple in the query.
    In a query $Q$, each admissible $s$-tuple yields at most $\softo{\max\paren{1, |Q|^{r-s}|E||V|^{-r}p^{-\binom{r}{s}}}}$ bits of entropy, and our query can contain at most $\softo{|Q|^sp + |Q|}$ of them by \cref{lem:hypergraph-nodense}.
    Appealing to convexity as in the previous case, one finds that in order for all queries to yield sufficient entropy we must sometimes have $\sum_i |Q_i|^c \geq \widetilde{\Omega}\paren{|V|^{r(1-\gamma)}|E|^\gamma}$.
\end{proof}

\subsection{Interpretation in terms of black-box fine-grained reductions}

In the case $r=3$, $s=2$, we can think of these results as giving lower bounds on black-box reductions from $R$-triangle listing to all-edge $R$-triangle detection.
By default, however, they only apply to subgraph reductions --- i.e. black-box reductions where every query is made to a subgraph of the input graph.
Unlike other results earlier in this paper, in this case that condition appears nontrivial to remove: it's a priori plausible that a reduction might do better by constructing totally different graphs and querying those. 
In \cref{sec:duplicitousness}, we show how to replace this subgraph condition with a more mild restriction.
But for now, let us simply state what the above results would imply about subgraph reductions.

\begin{corollary}[to \cref{thm:allparts-lb}]
    For any $2 \leq c \leq 3$ and any $\eps >0$, there is no $\paren{n^c, n^{\paren{3 - \frac{3(3-c)(3-k)}{6 - c} - \eps}}}$-black-box subgraph reduction from listing $n^k$ $R$-triangles to all-edge $R$-triangle detection.
\end{corollary}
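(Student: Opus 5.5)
We derive the corollary from \cref{thm:allparts-lb} specialized to $r=3$, $s=2$, with $|V| = 3n$ and $|E| = n^k$. The argument is mostly a translation, followed by an exponent computation.

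\textbf{Translation of the oracle.} First I would justify that a black-box subgraph reduction yields an all-$2$ codegree query algorithm for exact recovery of the hidden hypergraph. Take a tripartite graph on $n$ vertices per part with all $n^2$ edges present in each of the three bipartite pieces, and give every edge a distinct weight. Then a relation $R$ can encode an arbitrary $3$-partite $3$-uniform hypergraph $H$: a triple is an $R$-triangle exactly when it is a hyperedge of $H$. Listing $n^k$ $R$-triangles in an instance with exactly $n^k$ triangles is exact recovery of $H$.

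Each call the reduction makes is to a subgraph $G_i$ with $|V(G_i)| = q_i$. Answering it is an all-$2$ containment query with vertex set $Q = V(G_i)$ and candidate set $U$ equal to the triples whose three composite edges survive in $G_i$. This is weaker than an all-$2$ codegree query, so lower bounds for codegree queries apply.

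The reduction's budget is $\sum_i |V(G_i)|^c \le T_A$, which is exactly the query cost. The reduction must work for every $R$, in particular for $R$ drawn from the hard distribution of the proof of \cref{thm:allparts-lb}. The admissible pairs there are fixed and revealed, so the reduction may be assumed to know them. Randomization is handled by Yao's lemma inside that proof, and the $\tilde O$ slack is absorbed into the $\eps$ in the exponent.

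\textbf{Exponent computation.} With $r=3$, $s=2$ we have $\binom{r}{s}=3$, so
\[\gamma=\max\Bigl(\tfrac{3(3-c)}{3+(3-c)},\ 1-\tfrac{c-1}{1+3}\Bigr)=\max\Bigl(\tfrac{3(3-c)}{6-c},\ \tfrac{5-c}{4}\Bigr).\]
The lower bound is $|V|^{3(1-\gamma)}|E|^\gamma = n^{3-\gamma(3-k)}$ up to constants. Using only the first term of the max (a valid, weaker bound) gives $n^{3-\frac{3(3-c)(3-k)}{6-c}}$.

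There is one possible concern about the direction of this weakening. A larger $\gamma$ gives a smaller exponent $3-\gamma(3-k)$ when $k<3$. The claimed bound is the one obtained by substituting the first expression for $\gamma$. The max exists only because of the parameter $p$ chosen in the proof, so I would instead rerun that proof with $p$ set to the first branch alone. The convexity computation in that proof then gives the lower bound $|V|^{3}(|E|/|V|^3)^{\gamma_1}$ with $\gamma_1 = \frac{3(3-c)}{6-c}$, which is exactly the claim.

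\textbf{Hypotheses and main obstacle.} I would also check the hypothesis $|E| < \frac12\binom{|V|/3}{3}$, which holds for $k<3$. The case $k=3$ is trivial since the claimed exponent is then $3$.

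The main obstacle is making sure the cost accounting matches. Queries are charged $|Q|^c$ in vertices, whereas reduction calls are charged $T_B(n_i, m_i)$. With $T_B = n^c$ these coincide. I would also confirm that the case analysis in the proof of \cref{thm:allparts-lb} (large versus small queries) does not use the second branch of the max when $p$ is fixed to the first branch.
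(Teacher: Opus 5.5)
Your translation step (subgraph calls are all-$2$ containment queries, which are weaker than all-$2$ codegree queries; Yao's lemma; polylogs absorbed into $\eps$) agrees with the paper. So does your computation $\gamma=\max\bigl(\frac{3(3-c)}{6-c},\frac{5-c}{4}\bigr)$. The paper gets the corollary just by specializing \cref{thm:allparts-lb} to $r=3$, $s=2$, $|E|=n^k$ and reading off the first term.

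You are right that this is not literally what the theorem gives. Since $\frac{5-c}{4}-\frac{3(3-c)}{6-c}=\frac{(c+3)(c-2)}{4(6-c)}>0$ for $c>2$, the max is the second term. Your aside calling the first term ``a valid, weaker bound'' has the direction backwards, as you then notice.

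The gap is in your repair, for two reasons.

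\textbf{The choice of $p$ is not the cause.} In this range the proof of \cref{thm:allparts-lb} already puts $p$ on the first branch, since $\min\bigl(\frac{3-c}{6-c},\frac14\bigr)=\frac{3-c}{6-c}$. So rerunning with that $p$ changes nothing.

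\textbf{With that $p$, the proof's accounting does not give $\gamma_1$.} Write $\rho=|E|/|V|^3$, $p=\rho^a$ with $a=\frac{3-c}{6-c}$, and $t=2\rho p^{-3}$. The proof charges a size-$q$ query for $\widetilde{O}(q^2p+q)$ admissible pairs, each worth $\widetilde{O}(\min(1,qt))$ bits. Queries of size $q^*=p^3/\rho$ then make the convexity computation return only $\gamma=(3-c)+a(3c-7)=\frac{(3-c)(2c-1)}{6-c}$. This exceeds $\frac{3(3-c)}{6-c}$ for all $2<c<3$. For example, at $c=5/2$, $k=2$ we have $p=n^{-1/7}$ and $q^*=n^{4/7}$. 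Each such query is credited $\widetilde{O}(n)$ bits, so $n$ of them, at total cost $n^{1+10/7}=n^{17/7}$, are not ruled out. The corollary, however, forbids cost $n^{18/7-\eps}$. The check you defer to the end (``confirm the case analysis does not use the second branch'') is exactly where the argument fails.

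What is missing is a sharper per-query bound. It should charge a query only for the admissible triples (triangles of the admissible graph) it actually contains, not for $|Q|$ candidate triples per admissible pair. This requires a quasirandomness fact beyond \cref{lem:hypergraph-nodense}. For instance, w.h.p.\ every $q$-vertex set spans $\widetilde{O}(q^3p^3+q^2p+q)$ admissible triangles. This follows from the bipartite analogue of \cref{lem:hypergraph-nodense} applied inside each neighbourhood $N_Q(k)$, plus a dyadic count of the vertices of large degree into $Q$.

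With per-query entropy at most $\min\{\widetilde{O}(q^2p+q),\,\widetilde{O}(q^3p^3+q^2p+q)\cdot t\}$, every strategy must pay at least $\min\{|V|^3\rho^{(1-a)(3-c)},\,|V|^3\rho^{3a}\}$. The first term comes from the most efficient large queries ($q\approx p/\rho$) and the second from constant-size ones. The two coincide exactly at $a=\frac{3-c}{6-c}$, giving $\gamma=\frac{3(3-c)}{6-c}$ and hence the stated exponent $3-\frac{3(3-c)(3-k)}{6-c}$. Without an ingredient of this kind, the argument you outline proves only a weaker exponent.
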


We can also show a similar result for reductions to all-$IJ$ (namely, the problem of determining which edges between $I$ and $J$ are involved in $R$-triangles). In this case, we the lower bound is tight with our known upper bond:

\begin{corollary}[to \cref{thm:specifiedparts-lb}]
    For any $2 \leq c \leq 3$ and any $\eps >0$, there is no $\paren{n^c, n^{\paren{3 - \frac{(3-k)(3-c)}{4 - c} - \eps}}}$-black-box subgraph reduction from listing $n^k$ $R$-triangles to all-$IJ$ $R$-triangle detection.
\end{corollary}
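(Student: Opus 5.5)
The plan is to derive this corollary directly from \cref{thm:specifiedparts-lb}, specialized to $r=3$, $s=2$, with $|V|=n$ and $|E|=n^k$.

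\textbf{Step 1: the exponent.} For $2\le c<3$ we have
\[
\gamma=\max\paren{\tfrac{3-c}{4-c},\tfrac{3-c}{2}}=\tfrac{3-c}{4-c},
\]
since $4-c\le 2$ fails only when $c<2$. The lower bound $|V|^{3(1-\gamma)}|E|^\gamma$ then becomes
\[
n^{3-3\gamma+k\gamma}=n^{3-\frac{(3-k)(3-c)}{4-c}},
\]
which is exactly the exponent in the statement. The edge case $c=3$ gives exponent $3$ and is trivial. For $k=3$ the theorem requires $|E|<\tfrac12\binom{n/3}{3}$, so I would take $|E|=\Theta(n^3)$ with a small constant. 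This changes the bound only by constants.

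\textbf{Step 2: translate a reduction into a query algorithm.} Suppose a subgraph black-box reduction exists with cost $n^{3-\frac{(3-k)(3-c)}{4-c}-\eps}$. Instantiate it on a tripartite graph whose edges all carry distinct weights. Choose $R$ so that the $R$-triangles are exactly the hyperedges of a hidden $3$-partite $3$-uniform hypergraph $H$ drawn from the hard distribution in the proof of \cref{thm:specifiedparts-lb}.

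Each oracle call is made on a subgraph of the input, with vertex set $Q$ and some deleted edges. An all-$IJ$ detection call therefore equals a specified-parts all-$2$ containment query $(Q,U)$, where $U$ excludes all triples using a deleted edge. Its cost is $|Q|^c$ (edge-count dependence only helps, since $T_B$ is charged in $n$).

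Containment is weaker than codegree, so the lower bound of \cref{thm:specifiedparts-lb} applies. It explicitly permits non-induced $U$, because the proof only uses that $Q$ contains few admissible pairs, by \cref{lem:hypergraph-nodense}. Hence a reduction with $\sum_i|Q_i|^c=o$ of the bound would exactly recover $H$, which is a contradiction.

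\textbf{Step 3: from listing $n^k$ triangles to exact recovery.} The reduction lists $n^k$ $R$-triangles, or all of them if there are fewer. Under the hard distribution the number of hyperedges is at most $n^k$ with high probability. If it is not, we can set the parameter slightly below $n^k$, as in the theorem's footnote, at the cost of constants. So listing recovers $H$ exactly. Yao's lemma (\cref{lem:yao}) then handles the randomized reduction.

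The main subtlety is Step 2: confirming that subgraph queries, including edge deletions, fit the $(Q,U)$ model. One must also check that the algorithm's own polylog overhead and the $\widetilde{O}$ slack are absorbed by the $\eps$ loss in the exponent.
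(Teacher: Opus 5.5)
Your proposal is correct and takes the same route as the paper. The paper states this corollary as a direct specialization of \cref{thm:specifiedparts-lb} to $r=3$, $s=2$, $|V|=n$, $|E|=n^k$, where $\gamma=\frac{3-c}{4-c}$ for $c\ge 2$. It relies on the earlier observation that a subgraph reduction with distinct edge weights is exactly a (not necessarily induced) specified-parts all-$2$ containment query algorithm, with deleted edges removing the corresponding triples from $U$. Your added checks, that the hyperedge count stays below $n^k$ and that polylog factors are absorbed by $\eps$, fill in details the paper leaves implicit.
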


For the problem of rotated matrix products, we wanted instead to reduce the problem of all-$JK$ detection to all-$IJ$ detection. Observe that a proof analogous to \cref{thm:specifiedparts-lb} would also prove the same lower bound against this case: setting $|E| = n^2$, the random variable representing which subset of $JK$-edges are involved in hyperedges has entropy $\Omega(n^2)$ in our choice of distribution. So, if we can bound the entropy of the query responses by $o(n^2)$, this shows that the algorithm not only can't solve exact recovery, it can't solve all-$JK$ detection.

\section{Listing to all-edge detection with non-duplicitous reductions}\label{sec:duplicitousness}

The previous section showed lower bounds on black-box reductions from listing to all-edge detection (or all-$IJ$ detection), in the special case where \emph{all queries are made to subgraphs}.
In this section, we strengthen those results to hold for any reduction with the property that no edge weight occurs with higher multiplicity in a call of the reduction than in the input graph (we call such reductions \emph{non-duplicitous}). A strengthening to arbitrary black-box reductions may be possible, but we do not know how to prove it.\\

The key tool we use to obtain that strengthening is the following combinatorial lemma:
\begin{definition}
    Say a graph $G$ is $(a,b,c)$-\defn{squishable} if, for some collection of subgraphs $G_1, \dots, G_a \subseteq G$ which together contain at least $b$ distinct edges from $G$, there exists a $c$-vertex graph $H$ which contains each $G_i$ as an \emph{induced} subgraph.
\end{definition}
\begin{lemma}\label{lem:babypacking}
    The Erd\H{o}s--R\'enyi graph $G \sim G(n,p)$ is w.h.p. not $(qp, 10q^2p\log(n) + 10q \log(n), q)$-squishable for any $q$.
\end{lemma}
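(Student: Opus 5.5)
The plan is a first-moment argument: a union bound over all possible ``squishing witnesses'', in the same spirit as \cref{lem:hypergraph-nodense}, which I expect to be invoked as a sub-step. Fix $q$ and write $a = qp$ and $b = 10q^2p\log(n) + 10q\log(n)$. A witness consists of three pieces of data. The first is a $q$-vertex graph $H$. The second is, for each $i \in [a]$, an injection $\varphi_i$ from a vertex set $S_i \subseteq V(G)$ with $|S_i| \le q$ into $V(H)$. The requirement is that $G_i$ is exactly the pullback $\varphi_i^{-1}(H)$ on $S_i$, since $G_i$ is induced in $H$. Squishability then says that every pulled-back edge is an edge of $G$, and that the union of the pullbacks has at least $b$ distinct edges. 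For a fixed witness whose pullbacks have $m \ge b$ distinct pairs of $V(G)$, this event has probability exactly $p^{m}$. So it suffices to show that the number of witnesses, weighted by $p^{m}$, is $o(1)$ after summing over $q$.

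The first simplification is to delete from $H$ every edge that is not used by any pullback. This does not change the $G_i$ on the vertex sets in question, since inducedness in $H$ only constrains pairs inside an image, and an unused edge lies in no image pullback. The remaining edge count of $H$ can be bounded by applying \cref{lem:hypergraph-nodense} with $s=2$, $p = n^{-\beta}$ to each $G_i$ separately, which gives $|E(G_i)| \le 10q^2p\log n + 10q\log n = b$. Hence, w.h.p., $|E(H)| \le a b$. It is also convenient to restrict to minimal witnesses: keep only enough $G_i$ (and enough of $H$) for the union to have exactly $\lceil b \rceil$ edges, so that $m = \lceil b\rceil$.

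The counting then goes as follows. The maps contribute at most $n^{qa} = n^{q^2p}$ choices. The edge set of $H$ contributes at most $\binom{q^2}{\le ab}$ choices. The probability weight is $p^{b}$. The term $p^{b} \le n^{-\beta\cdot 10 q^2 p\log n}\cdot n^{-10\beta q\log n}$ comfortably beats $n^{q^2p}\cdot n^{q}$, because of the extra $\log n$ factors in $b$. This is exactly the mechanism behind the $10\log n$ constants in \cref{lem:hypergraph-nodense}.

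The main obstacle is the factor coming from the choice of $H$, which can be as large as $2^{q^2}$. When $p$ is small (say $p \ll 1/\log^2 n$), $p^{b}$ alone does not pay for it. To fix this, I would not enumerate $H$ globally. Instead, I would charge each edge of $H$ to a pulled-back edge of $G$ it certifies: specify $H$ only through the union graph $U = \bigcup_i G_i \subseteq G$ together with the maps. Each $H$-edge is then determined by some $G$-edge in its preimage, costing $O(\log n)$ bits per edge of $U$. This is at most $b$ edges, and that cost is absorbed by $p^{b}$ whenever $p \le n^{-\Omega(1)}$. For the remaining range where $p$ is only polylogarithmically small or constant, the claim is essentially trivial: $b \ge 10q^2p\log n$ exceeds the at most $a\cdot\binom{q}{2} \approx q^3p/2$ edges the $G_i$ can contain only when $q \lesssim \log n$, and in that regime the factor $2^{q^2} = n^{O(q)}$ is dominated by the $n^{-10\beta q\log n}$ factor. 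I expect the delicate part to be interpolating between these two regimes, and in particular verifying that the induced condition really lets $H$ be reconstructed from $U$ and the $\varphi_i$ at only logarithmic cost per edge.
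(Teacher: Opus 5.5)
Your first-moment framework over witnesses $(\varphi_1,\dots,\varphi_a,H)$ is the right one. However, the step you flag as the main obstacle, paying for the choice of $H$, is never resolved. Your proposed fix fails exactly where the lemma is applied ($p>n^{-1/2}$, $q$ up to $n$).

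Encoding $H$ through the union graph $U\subseteq G$ costs about $2\log n$ bits per edge of $U$, since each edge is a pair of vertices of $\bigcup_i S_i$ and that set can have $n$ vertices. Each edge buys only $\log(1/p)$ bits of probability. Concretely, take $p=n^{-1/2}$ and $q=n^{0.9}$, so $a=n^{0.4}$ and $b\approx 10n^{1.3}\log n$. Even counting $U$ as an unordered set gives about $\binom{n^2}{b}p^{b}\approx (en^{1.5}/b)^{b}=(n^{0.2}/\Theta(\log n))^{b}\gg 1$. So the claim that this cost is ``absorbed whenever $p\le n^{-\Omega(1)}$'' is false.

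The other regime is not covered either. Your ``essentially trivial'' remark only handles $q<20\log n$, where $b>a\binom q2$; it says nothing for $p$ polylogarithmically small and $q\gg\log n$. For $p$ close to $1$ the statement in fact fails: take $a$ vertex-disjoint $q$-cliques of $G$ with $q\approx 1.5\log_{1/p}n\gg 20\log n$, packed into $H=K_q$. They give about $pq^3/2>b$ distinct edges.

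The missing idea is monotonicity in $H$. Deleting any edge of $H$ keeps a witness valid, because each pullback $\varphi_i^{-1}(H)$ only shrinks, so it stays inside $G$ and stays induced.

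The paper exploits this as follows. It fixes the vertex sets and maps (at most $n^{2qa}$ choices) and takes $H$ \emph{maximal}: it deletes every ``forbidden'' position, meaning one with a non-edge of $G$ among its preimages. Then $H$ is a function of $G$ and the maps and is never enumerated. The paper finishes with a Chernoff-type bound over the at most $q^2$ positions.

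The same idea carried through your ``minimal witness'' step works as follows. Prune $H$ greedily, keeping an edge only if it adds a new $G$-edge to the union, until the union has at least $b$ edges. Then $|E(H)|\le\lceil b\rceil$, not $ab$. You should also count $H$ as a set of positions in $\binom{[q]}{2}$, not via $G$-edges.

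For $b\le q^2$ the total is then at most $n^{q^2p}\binom{q^2}{\le b}p^{b}\le n^{q^2p}(eq^2p/b)^{b}\le n^{q^2p}(e/(10\log n))^{b}=n^{-\Omega(q^2p+q)}$, since $b\ge 10q^2p\log n+10q\log n$. For $b>q^2$, bound the positions by $2^{q^2}$ instead. The $\log n$ slack of $b$ over the expected $q^2p$ pays for the positions and the maps at once. This works uniformly for, say, $p\le 1/2$, with no regime split and no use of \cref{lem:hypergraph-nodense}.

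The same count also makes the paper's $\mathrm{Bin}(q^2,p)$ step precise. That step is loose because a position may have $k>1$ preimages, but such a position survives only with probability $p^k$.
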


To understand what this lemma is saying, it is helpful to consider the case $p = n^{-1/2}$, $q=n$. Here, we're asking whether $G(n, n^{-1/2})$ is $(\sqrt{n}, 20n^{1.5}\log(n), \sqrt{n})$-squishable.
There are two reasonable ways to show that $G(n, n^{-1/2})$ is $(\sqrt{n}, n^{1.5}/\polylog(n), \sqrt{n})$-squishable.
For one, we could let $H = G_1 = G$, and all other $G_i$ be empty --- this with high probability has $n$ vertices and at least $n^{1.5}/\polylog(n)$ edges because $G$ itself does.
For another, we could find $\Omega(n^{.5}/\polylog(n))$ many linear-sized edge-disjoint matchings in $G$ (which is possible for any graph of $G$'s min-degree~\cite{tutte1952factors}), and then pack those into $n$ vertices in an induced fashion using known constructions of so-called Ruzsa--Szemer\'edi graphs~\cite{monotonicity-testing}.
\cref{lem:babypacking} says that no approach can do better than these two. Matchings can be packed together in an induced way very tightly, but for a random graph there is no denser substructure with that property.\\

We now give a lower bound proof assuming \cref{lem:babypacking}. (Note that we only deal with the case where we're listing exactly $n^2$ edges, as opposed to an arbitrary $k$.)

\begin{theorem}
    For any $2 \leq c \leq 3$ and any $\eps >0$, there is no non-duplicitous $\paren{n^c, n^{\paren{\frac{9-2c}{4 - c} - \eps}}}$-black-box reduction from listing $n^2$ $R$-triangles to all-$IJ$ $R$-triangle detection.
\end{theorem}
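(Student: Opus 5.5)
We will run the argument of \cref{thm:specifiedparts-lb} for $r=3$, $s=2$, $|E|=n^2$. The only place that proof used the subgraph condition was to bound how many admissible pairs a query of size $q$ can involve; here we replace that step with \cref{lem:babypacking}.

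\textbf{Setup.} Use Yao's lemma (\cref{lem:yao}) with the hard distribution from \cref{thm:specifiedparts-lb}:
\begin{itemize}
\item declare each pair in $I\times J$ admissible independently with probability $p=(n^{-1})^{\frac{3-c}{4-c}}$ (the $\gamma$ exponent at $|E|=n^2$);
\item include each triple $(i,j,k)$ with $(i,j)$ admissible as a hyperedge with probability $n^{-1}p^{-1}$.
\end{itemize}
To encode this as an $R$-triangle instance, give every edge of the input tripartite graph a distinct weight and let $R$ encode exactly the hyperedges. Non-admissible $IJ$-edges can simply be omitted (or given weights in no triangle of $R$), so the $IJ$ part of the input is the random bipartite graph $G\sim G(n,p)$ on the admissible pairs.

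A non-duplicitous query graph $Q$ on $q$ vertices is then a graph in which each original edge (identified by its unique weight) appears at most once. Its $R$-triangles are exactly the triples of weights forming a hyperedge. The all-$IJ$ answer reveals one bit per $IJ$-edge of $Q$ (as in the earlier proof, we may strengthen it to a codegree count at $O(\log n)$ bits).

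\textbf{Key step: few useful $IJ$-edges per query.} We must show a size-$q$ query yields information on only $\softo{q^2p+q}$ admissible pairs, apart from a negligible "already determined" portion. Consider the admissible $IJ$-edges appearing in $Q$ that lie in some candidate triangle whose other two edges also appear in $Q$.

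For each such admissible pair $(i,j)$, its candidate triangles in $Q$ use $K$-side vertices of $Q$ adjacent to both endpoints via original edges $(j,k)$ and $(i,k)$ for the \emph{same} original $k$. Group the original vertices $k\in K$ that are used. For a fixed original $k$, the set of $IJ$-edges completed through $k$ is a subgraph $G_k$ of $G$. Because each original edge appears at most once, the copies of $(i,k)$ and $(j,k)$ sit at single positions in $Q$. Hence, if $Q$ contains a triangle-completing copy for edge $(i,j)$, the structure near $k$ forces the relevant copies of $G$-edges among $N(k)$ to be embedded.

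The intended packing argument is as follows. Partition the responses into those edges whose answer is already forced to be "yes", because the triangle through that copy is a revealed hyperedge (the thresholding of \cref{lem:cant-use-is} reveals these for free), and those whose answer carries fresh information. For the fresh ones, the queried configuration realises many subgraphs $G_{k}$ simultaneously as \emph{induced} subgraphs of a single $q$-vertex graph. Inducedness holds because any non-induced extra adjacency inside the image would create a candidate triangle whose outcome is determined or forced. At most $q$ such $K$-vertices (hence $\le q \le qp\cdot(1/p)$ subgraphs, grouped into batches of size $qp$) can appear. \cref{lem:babypacking} then caps the number of distinct fresh admissible edges per batch at $10q^2p\log n+10q\log n$, and we sum over $O(1/p)$ batches or argue the batches themselves are limited by the vertex budget.

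\textbf{Finishing.} With the per-query bound $\softo{q^2p+q}$ on fresh admissible pairs, each contributing $\min(O(\log n), q\,n^{-1}p^{-1})$ bits, the convexity calculation in the proof of \cref{thm:specifiedparts-lb} goes through verbatim. Total information is $\widetilde{o}(n^2)$ whenever the cost is $\widetilde{o}(n^{\frac{9-2c}{4-c}})$. For the all-$JK$ detection conclusion, note as remarked after \cref{thm:specifiedparts-lb} that the set of $JK$-edges lying in hyperedges has entropy $\Omega(n^2)$.

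\textbf{Main obstacle.} The hard step is the reduction from an arbitrary non-duplicitous query to an induced-packing instance. We must show that non-induced overlaps only produce triangles whose status is already revealed or forced to "yes", and therefore leak no new information. We must also control how many $K$-vertex neighbourhoods can be squished into one query of size $q$.

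I would first try to treat each vertex of $Q$ on the $K$ side as carrying a neighbourhood that is a partial copy of some $G_k$. I would then apply \cref{lem:babypacking} with parameter $q$ after pruning vertices whose neighbourhoods are mostly covered by already-determined triangles.
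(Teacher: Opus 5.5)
There is a genuine gap, and it comes from your choice of $R$. You let $R$ consist exactly of the hyperedge weight-triples. Then a ``fake'' triangle of a query, one whose three weights do not form a triangle of the input graph, is automatically \emph{not} an $R$-triangle. A triangle forced to answer no adds nothing to the OR that defines an edge's answer, so it does not make that edge uninformative. Your claim that inducedness holds because extra adjacencies only create triangles ``whose outcome is determined or forced'' therefore breaks down.

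In fact the per-query bound you need is false for your distribution. Take the following query:
\begin{itemize}
\item Split $Q_K$ into $1/p$ groups of $qp$ vertices.
\item Give each group its own labelling of $Q_I,Q_J$ by a distinct pair of $q$-blocks of $I,J$; this is possible for $q\le n\sqrt p$.
\item Place each group's roughly $q^2p$ admissible edges in the corresponding slots, resolving collisions arbitrarily.
\item Give each $K$-vertex its own original $k$, and join it to $Q_I\cup Q_J$ via its group's labelling.
\end{itemize}
This query is non-duplicitous, and every cross-group triangle is fake, hence harmless under your $R$. Each of the $\Theta(q^2)$ placed edges is tested against only $qp$ values of $k$, so it answers yes with probability about $q/n$ instead of saturating. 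A size-$q$ query thus yields $\widetilde{\Theta}(q^3/n)$ bits, whereas a subgraph query with $q\ge np$ yields at most $\softo{q^2p}$.

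Take $q=n\sqrt p$ and add the bit-by-bit search over each group's $k$'s as in \cref{lem:find-uniquers-2}. This lists all hyperedges at cost $\softo{n^{c+\frac{(3-c)^2}{2(4-c)}}}$. The target is $n^{\frac{9-2c}{4-c}}=n^{c+\frac{(3-c)^2}{4-c}}$, which is strictly larger for every $c<3$. So your key step cannot hold for this distribution.

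The missing idea is the paper's: put into $R$ every weight-triple that does \emph{not} occur as a triangle of the input graph. Then any $IJ$-edge of a query that lies in even one rearranged triangle is forced to answer yes and carries no information. Only ``good'' edges, all of whose triangles are genuine, matter.

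Suppose two good edges share a query endpoint $v$ but no original endpoint, and both form triangles with $u$. The single edge $uv$ can be genuine with at most one of them, a contradiction. So the good edges around each $K$-vertex form a copy of a subgraph of the admissible graph. \cref{lem:babypacking} then applies, and non-duplicity turns distinct weights into distinct edges.

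Even with the right $R$, your bookkeeping for that lemma is too weak. The lemma allows at most $qp$ pieces, and summing its bound over $1/p$ batches gives only $q^2+q/p$, which is trivial. The paper instead uses a convexity argument: in the worst case every good edge lies in about $pn$ candidate triangles and contributes one bit. This reduces the count to the good edges around only $q/(pn)$ vertices of $Q_K$. Finally, $q/(pn)\le qp$ holds precisely because $p\ge n^{-1/2}$ when $2\le c\le 3$, a fact your argument never uses.
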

\begin{proof}
    The structure of the proof is the same as that of \cref{thm:specifiedparts-lb}; we will primarily discuss the differences.
    As before, for an appropriately chosen parameter $p$ (the same as in \cref{thm:specifiedparts-lb} --- note that we will always have $p>n^{-1/2}$ no matter $c$), we choose a random $pn^2$ many ``admissible'' pairs of vertices between $I$ and $J$.
    We place edges between those admissible pairs, and between all $IK$ and $JK$ pairs. 
    All edges are given distinct edge weights.\\

    Then, we fix a distribution over relations $R$ by selecting a random $n^k$ triangles in this graph, and letting the relation be satisfied by those triples of edge weights. For every other triangle in the graph, the corresponding triple of edge weights does not belong to $R$. However, for any triple of edge weights that does \emph{not} appear in a triangle in the graph, we \emph{do} declare that triple to belong to $R$. This will be crucial to the proof.\\

    Recall that our lower bound approach in \cref{thm:specifiedparts-lb} worked by bounding the total entropy generated by the queries.
    To do, we upper bounded the number of candidate triangles a query can involve.
    The key fact we used was that any query of size $q$ could only involve at most $10q^2p \log(n) + 10 q \log(n)$ admissible pairs of $IJ$ vertices, since the queries were made to subgraphs.
    Now, however, this is no longer true --- we will need to use \cref{lem:babypacking} to get an analogous result.\\
    
    For a query $Q = Q_I \sqcup Q_J \sqcup Q_K$ of size $q$, say an edge $e$ between $Q_I$ and $Q_J$ is ``good'' if every triangle in $Q$ involving $e$ has edge weights corresponding to some triangle in the original graph.
    Observe that any edge which is not ``good'' is guaranteed to output $1$ (i.e., to be involved in an $R$-triangle in the query), since our relation includes every triple of edge weights that does not appear in a triangle in the original graph.
    So, for the purposes of upper bounding entropy, we can safely ignore all non-good edges and only count the good ones.\\

    Now, observe that by convexity the ``worst case'' (i.e. most entropy generated), for a query $Q = Q_I \sqcup Q_J \sqcup Q_K$ of size $q$ containing a given number of candidate triangles, is where every good edge in the query is involved in exactly $pn$ candidate triangles.
    In this case, we upper bound the entropy from this query by $1$ bit for every such good edge.
    So, we can upper bound the total entropy of the query by $\sum_{v \in Q_K'} \Big( \text{\# of good edges in }N(v)\Big)$ for some $Q_K' \subseteq Q_K$ with $|Q_K'| = q/pn$.\\

    By \cref{lem:babypacking}, in order to show that there are at most $10q^2p \log(n) + 10 q \log(n)$ good edges in the query, it suffices to write the good edges as a union of $q/pn \leq qp$ many induced subgraphs, each of which corresponds to a subgraph of the original admissible graph.
    To do so, we claim that the good edges in the neighbourhood of any vertex in $Q_K'$ must be isomorphic to a subgraph of the original admissible graph.\\
    
    Suppose for contradiction that this was false.
    This means that there are two good edges which share an endpoint $v$ in the query despite not sharing an endpoint in the original graph, and that in the query both good edges are involved in triangles with some vertex $u$.
    But consider the edge between $u$ and $v$ in the query. Since the two good edges do not share an endpoint in the original graph, only one of them can have been involved in a triangle with this edge in the original graph.
    Since both of them are involved in a triangle with this edge in the query, we therefore have that at least one of them is not good.\\

    Thus, we have shown that the good edges can be covered by $q/pn$ many induced subgraphs, each of which is a subgraph of the original graph.
    By \cref{lem:babypacking}, those subgraphs of the original graph can in total correspond to at most $10q^2p \log(n) + 10 q \log(n)$ distinct edges, and so there can be at most $10q^2p \log(n) + 10 q \log(n)$ distinct edge weights present. By assumption that the reduction is non-duplicitous, this tells us that there are at most $10q^2p \log(n) + 10 q \log(n)$ good edges. The remainder of the proof is identical to \cref{thm:specifiedparts-lb}.
\end{proof}

The modifications required to make our lower bounds for listing to all-parts detection and from all-parts detection to all-$IJ$ detection non-duplicitous are essentially identical, so we omit those proofs. We conclude by giving a proof of the combinatorial lemma.

\begin{proof}[Proof of \cref{lem:babypacking}]
    We will use a Chernoff-union type argument.
    However, we must be careful in choosing the correct structures to union bound over.
    Observe that, if a graph is $(a,b,c)$-squishable, then we can find an $(a,b,c)$-squishing with at least $b$ edges by first choosing $a$ \emph{induced} subgraphs of $G$, then placing those subgraphs onto $c$ vertices, and finally removing all ``forbidden'' edges (that is, all edges which would cause one of the subgraphs to fail to be induced in the packing).
    There are at most $\binom{n}{c}^a \cdot (c^c)^a \leq n^{2ca}$ ways to choose these induced subgraphs of $G$ and their mapping onto $c$ vertices, so if we can show that any such packing has at most an $n^{-3ca}$ probability of corresponding to $b$ distinct edges from our random graph, by a union bound we'll have the desired statement.\\

    For our parameter regime, this follows from a Chernoff bound. Once we've fixed the choices of vertex sets from $V(G)$ and their mappings onto $[c]$, there are now at most $c^2 = q^2$ distinct pairs of vertices in $G$ that get mapped to non-``forbidden'' locations in the packing. 
    Each of those pairs of vertices becomes an edge in $G$ with independent probability $p$.
    So, the probability of seeing more than $10q^2p \log(n) + 10 q \log(n)$ of them become edges is at most 
    \begin{align*}
        \Pr[\text{Bin}(q^2, p) > 10q^2p \log(n) + 10 q \log(n)] < n^{-3 q^2 p - q} < n^{-3ca}.
    \end{align*}
\end{proof}

\section{Acknowledgments}

We would like to thank Alek Westover for several early conversations about the algorithms, and Carl Schildkraut for discussion of potential strengthenings to the results of \cref{sec:duplicitousness}.
\printbibliography

\end{document}